\newcommand{\dd}{\mathrm{d}}
\newcommand{\bbR}{\mathbb{R}}
\newcommand{\normal}{\mathrm{N}}
\newcommand{\tsfrac}[2]{{\textstyle \frac{#1}{#2}}}
\newcommand{\twobyone}[2]{\begin{array}{c} #1 \\ #2 \end{array}}
\newcommand{\twobytwo}[4]{\begin{array}{cc} #1 & #2 \\ #3 & #4 \end{array}}
\newcommand{\nrm}[1]{\Vert #1 \Vert}
\newcommand{\iid}{\stackrel{\text{i.i.d.}}{\sim}}
\newtheorem{theorem}{Theorem}[section]
\newtheorem{lemma}[theorem]{Lemma}
\newtheorem{corollary}[theorem]{Corollary}
\newtheorem{assumption}[theorem]{Assumption}
\begin{document}

\begin{frontmatter}

\title{Tuning of MCMC with Langevin, Hamiltonian, and other stochastic autoregressive proposals}
\runtitle{MCMC with autoregressive proposals}

\begin{aug}
\author{\fnms{Richard A.} \snm{Norton}\corref{}\ead[label=e1]{richard.norton@otago.ac.nz}}
\and
\author{\fnms{Colin} \snm{Fox}\ead[label=e2]{colin.fox@otago.ac.nz}}

\runauthor{R. A. Norton and C. Fox}

\affiliation{University of Otago}
\address{Richard Norton\\Department of Mathematics and Statistics\\University of Otago\\PO Box 56\\Dunedin 9054\\ New Zealand\\ \printead{e1}}

\address{Colin Fox\\Department of Physics\\University of Otago\\PO Box 56\\Dunedin 9054\\ New Zealand\\ \printead{e2}}
\end{aug}

\begin{abstract}
Proposals for Metropolis-Hastings MCMC derived by discretizing Langevin diffusion or Hamiltonian dynamics are examples of stochastic autoregressive proposals
that form a natural wider class of proposals with equivalent computability. 
We analyze Metropolis-Hastings MCMC with stochastic autoregressive proposals applied to target distributions that are absolutely continuous with respect to some Gaussian distribution to derive expressions for expected acceptance probability and expected jump size, as well as measures of computational cost, in the limit of high dimension.
Thus, we are able to unify existing analyzes for these classes of proposals, and to extend the theoretical results that provide useful guidelines for tuning the proposals for optimal computational efficiency.
For the simplified Langevin algorithm we find that it is optimal to take at least three steps of the proposal before
the Metropolis-Hastings accept-reject step, and for Hamiltonian/hybrid Monte Carlo we provide new guidelines for the optimal number of integration steps and
criteria for choosing the optimal mass matrix.
\end{abstract}

\begin{keyword}[class=MSC]
\kwd[Primary ]{60J22, 65Y20}
\kwd[; secondary ]{60J05, 62M05, 65C40, 68Q87, 68W20}
\end{keyword}

\begin{keyword}
\kwd{Markov chain, Monte Carlo, sampling, Gaussian, multivariate normal distribution, autoregressive process, Metropolis-Hastings algorithm, Metropolis-adjusted Langevin algorithm, Hybrid Monte Carlo.}
\end{keyword}

\end{frontmatter}

\section{Introduction}
\label{sec intro}

We consider Metropolis-Hastings (MH) algorithms for sampling from a target distribution $\pi_d$ using a first-order stochastic autoregressive (AR(1)) process proposal with Gaussian `noise'; given current state $x \in \bbR^d$ the proposal $y \in \bbR^d$ is given by
\begin{equation}
\label{eq:ar1}
	y = G x + g + \nu
\end{equation}
where $G \in \bbR^{d\times d}$ is the iteration matrix, $g \in \bbR^d$ is a fixed vector, and $\nu\stackrel{\text{i.i.d.}}{\sim}\normal(0,\Sigma)$ where $\Sigma \in \bbR^{d \times d}$ is symmetric and positive definite (s.p.d.).  We will refer to \eqref{eq:ar1} as a  \emph{stochastic AR(1) proposal}.  One iteration of a chain with MH dynamics is simulated by accepting the proposal $y$ with probability
\begin{equation}
\label{eq alpha}
	\alpha(x,y) = 1 \wedge \frac{\pi_d(y) q(y,x)}{\pi_d(x) q(x,y)}
\end{equation}
and otherwise remaining at $x$. Here $\pi_d(\cdot)$ denotes the target probability density function, $q(x,\dd y) = q(x,y) \dd y$ is the transition kernel for the proposal $y$ given current state $x$, and $a \wedge b = \min\{a,b\}$.  

We consider \emph{stationary} stochastic AR(1) proposals~\eqref{eq:ar1} for which $G$, $g$ and $\Sigma$ are fixed and the \emph{proposal chain}, generated by repeated iterates of~\eqref{eq:ar1}, is \emph{convergent} in distribution, which occurs iff $\rho(G)<1$ where $\rho(G)$ denotes the spectral radius of $G$~\cite{Duflo, DiaconisFreedman, FP2016}. Then this proposal chain converges to the \emph{proposal equilibrium distribution} $\normal(\mathcal{A}^{-1}\beta,\mathcal{A}^{-1})$~\cite{FP2016}; see Theorem~\ref{thm:2.1}.    Such chains are precisely the convergent (generalized)  Gibbs samplers for normal targets~\cite{FP2016} including blocking and reparametrization (\emph{preconditioning} in numerical analysis), with the equivalence to stationary linear iterative solvers affording extensive detail about the chain, such as the $n$-step distribution, error polynomial, and convergence rate, as well as acceleration~\cite{FP2014,FP2016}.

Stationary and non-stationary (where $G$, $g$ and $\Sigma$ may depend on $x$) stochastic AR(1) proposals occur in many popular methods including those using discretized Langevin diffusion such as the Metropolis-adjusted Langevin algorithm (MALA)~\cite{RR1998}, the simplified Langevin algorithm (SLA), the $\theta$-SLA method and preconditioned versions of SLA~\cite{BRS2009}, the Crank-Nicolson (CN), preconditioned Crank-Nicolson (pCN), and preconditioned Crank-Nicolson Langevin (pCNL)~\cite{CRSW2013} algorithms, and discretized Hamiltonian dynamics used in  hybrid Monte Carlo (HMC)~ \cite{DKPR1987,BPRSS2010,N1993}, as well as the stochastic Newton and scaled stochastic Newton algorithms~\cite{MWBG12,BG2014}. Our restriction to convergent proposal chains precludes the random-walk Metropolis algorithm (RWM) \cite{RGG1997} since then $G=I$.

These, and other, existing MH MCMC algorithms derive the proposal by discretizing a stochastic differential equation such as Langevin diffusion \cite{RGG1997,RR1998,RR2001,B2007,B2008,BR2008,BRS2009}, or a randomized Hamiltonian dynamical system~\cite{BPRSS2010,BPSSS2011}, for which the continuous process targets $\pi_d$. This leads to the limited range of stochastic AR(1) proposals that have been considered. 

We prefer working directly with the AR(1) process rather than discretizations of differential equations for several reasons: stochastic AR(1) proposals generalize both Langevin and Hamiltonian proposals; this discrete process is natural on digital computers with the computational cost being evident from the form of $G$; subsequent analysis does not require bringing in notions from differential equations that are not intrinsic to the sampling problem at hand. 

Since a convergent stochastic AR(1) process converges to a normal distribution, see Theorem \ref{thm:2.1}, it follows that a MH algorithm with a stochastic AR(1) proposal is efficient when its proposal chain converges rapidly to a normal approximation of $\pi_d$.  Hence all these algorithms may be viewed as taking one or more steps of a Gibbs sampler that targets normal approximations to $\pi_d$, followed by the MH accept-reject step.  Stationary proposals target a single global normal approximation to $\pi_d$, while non-stationary proposals target local normal approximations to $\pi_d$.  Quality of the resulting MH algorithm, in the sense of mixing and convergence rates, depends on the quality of these normal approximations and the rate of convergence of the proposal chain; we formalize this idea for stationary proposals in Section~\ref{sec nongaussian}, though our results also describe the local behavior of algorithms with non-stationary proposals.

We consider target distributions that are a change of measure from a Gaussian reference measure, i.e.
\begin{equation}
\label{eq target0}
	\frac{\dd \pi_d}{\dd \tilde{\pi}_d}(x) = \exp( - \phi_d(x))
\end{equation}
for some $\phi_d: \bbR^d \mapsto \bbR$ where $\tilde{\pi}_d$ is a Gaussian, i.e.,
\begin{equation}
\label{eq ref}
	\tilde{\pi}_d(x) \propto \exp \left( - \frac{1}{2} x^T A x + b^T x \right)
\end{equation}
for s.p.d. $A \in \bbR^{d \times d}$ and $b \in \bbR^d$.  

This setting arises, for instance, in Bayesian formulations of inverse problems when a Gaussian smoothness prior is used and $-\phi_d$ is the log likelihood, see e.g.~\cite{S2010,HSV2011,BS2009b}. We will require the same conditions on $\phi_d$ as in \cite{BRS2009}, essentially that $\phi_d$ is either bounded, or is bounded below, satisfies a type of Lipschitz continuity, and has bounded growth; see Section~\ref{sec nongaussian}.  Note that $A$ is not required to be diagonal in our analysis or for computation, and hence we are able to extend the range of applicability of results derived in~\cite{BRS2009,BPRSS2010,BPSSS2011,BS2009a,BS2009b} since there is no need to compute a spectral decomposition of $A$, which is typically infeasible in high dimensions.

The case where $\tilde{\pi}_d$ has product form
\begin{equation}
\label{eq target1}
	\tilde{\pi}_d(x) = \prod_{i=1}^d \lambda_i f\left( \lambda_i x_i  \right)
\end{equation}
with $\log f$ quadratic and $\{ \lambda_i \}_{i=1}^d \subset (0,\infty)$ is a special case of our theory, but we do not consider non-quadratic $\log f$.  

Those algorithms that have been shown to have dimension-independent mixing~\cite{BRSV2008,BPSSS2011,CRSW2013,ABPS2014}, defined by a strictly positive expected jump as $d\rightarrow \infty$, correspond to stationary proposals such that the proposal equilibrium distribution $\normal(\mathcal{A}^{-1}\beta,\mathcal{A}^{-1})$ equals the Gaussian reference measure $\normal(A^{-1}b,A^{-1})$.  We show that this is not necessary to achieve dimension-independent mixing, see Corollary \ref{cor:dimindep}, but for an efficient MCMC near equality between $\normal(\mathcal{A}^{-1}\beta,\mathcal{A}^{-1})$ and $\normal(A^{-1}b,A^{-1})$ is very desirable. In particular, we extend the conditions that guarantee dimension-independent sampling, and enable more efficient sampling.
The proposals for MALA, SLA and HMC satisfy $\mathcal{A}^{-1}\beta = A^{-1}b$, but not $\mathcal{A} = A$.  Since $\mathcal{A}$ is not sufficiently close to $A$ in these methods, they are not dimension independent.

All convergent stochastic AR(1) proposals with equilibrium distribution $\normal(\mathcal{A}^{-1}\beta,\mathcal{A}^{-1})$, including the case when $\normal(\mathcal{A}^{-1}\beta,\mathcal{A}^{-1})$ and $\normal(A^{-1}b,A^{-1})$ are identical, may be found using \emph{matrix splitting} of $\mathcal{A}$ to find $G$, $g$ and $\Sigma$, see \cite{F2013,FP2014,FP2016}, which also gives rates of convergence, etc. In this sense, we include the analysis of both Langevin diffusion and Hamiltonian dynamics based MCMC methods within our unified analysis of MH MCMC algorithms with convergent stochastic AR(1) proposals.

For example, when the target is $\normal(A^{-1}b,A^{-1})$, then MALA and SLA are the same, having stationary stochastic AR(1) proposal \eqref{eq:ar1} with $G = I-\frac{h}{2}A$, $g = \frac{h}{2} b$ and $\Sigma = hI$ for some $h >0$, and the proposal equilibrium is $\normal(\mathcal{A}^{-1}\beta,\mathcal{A}^{-1})$ with $\mathcal{A}^{-1} \beta = A^{-1}b$ and $\mathcal{A} = A - \frac{h}{2}A^2 \neq A$, see Theorem \ref{thm:2.1}.  The Unadjusted Langevin Algorithm (ULA) corresponds to MALA's proposal chain \cite{RT1996}, so we have also characterized ULA's incorrect equilibrium distribution and convergence rate when the target is normal.

For Langevin diffusion-based proposals, we identify the proposal equilibrium distribution, recover and extend the existing theory by allowing $\tilde{\pi}_d$ to be a non-product distribution, and we quantify the effect of `preconditioning' the Langevin diffusion.

HMC is also an example of an algorithm with a stochastic AR(1) proposal.  For normal targets, our theory extends the available analyses of HMC to non-product target distributions, we characterize the spectrum of the iteration matrix $G$, determine the optimal integration time, and characterize the effect of a general mass matrix.  Existing analyses of HMC only tell us to tune the step size until the observed acceptance probability is $0.651$, for the case when the target distribution has i.i.d. product form \cite{BPRSS2010}.

Since we are not limited to discretizing a differential equation to construct a proposal, we can analyze more general algorithms such as multi-step proposals that take $L$ steps of a stationary stocastic AR(1) proposal before The MH accept/reject step.  For multi-step SLA, it is more efficient to take $L \geq 3$ than $L=1$ when the dominant computational costs are multiplying by $A$ and evaluating $\phi_d$, see Section \ref{sec lstep}.  Multi-step SLA should be tuned so that the acceptance probability is $0.574$, the same as $1$-step SLA.

Our analysis is for the case where $G$ and $\Sigma$ are functions of $A$, which allows a diagonalizing coordinate transformation.  This restriction is motivated by practicalities of high-dimensional computation, particularly high-dimensional inverse problems, where the palette of feasible computation is limited to operation by $A$ (and $A^T$, though $A$ is symmetric in our case) and evaluating $\phi_d$; functions of $A$ may be evaluated as rational approximations (see e.g.~\cite{Higham}).  Not all existing MCMC algorithms that we consider satisfy this computational feasibility criterion, notably pCN and pCNL, and the version of HMC in \cite{BPSSS2011}, however a change of coordinates is allowed for analysis.

We derive expressions for the expected acceptance probability and the expected jump size of the Markov chain.  By using the expected jump size as a proxy for statistical efficiency, and considering the cost of computing \eqref{eq:ar1} and \eqref{eq alpha} we can determine the computational efficiency of an algorithm.

The statistical efficiency of a MCMC method is often measured by the integrated autocorrelation time for a statistic of concern.  We are unable to directly estimate this quantity.  Instead, we calculate the expected squared jump size of the Markov chain in a direction $q \in \bbR^d$,
$$
	\mathrm{E}[(q^T(x' - x))^2],
$$
where $x$ and $x'$ are successive states of the Markov chain in equilibrium,
that is related to the integrated autocorrelation time for the linear functional $q^T(\cdot)$ by
$$
	\mathrm{Corr}[q^T x,q^T x'] = 1 - \frac{\mathrm{E}[(q^T(x' - x))^2]}{2 \mathrm{Var}[q^T x]}.
$$
In particular, large jump size implies small first-order autocorrelation, see e.g. \cite[\S 3]{RR1998} or \cite[\S 2.3]{BRS2009}.  This approach is similar to that used in~\cite{BPRSS2010,BRS2009} for analyzing the efficiency of RWM, MALA and HMC where the expected squared jump size of a single component of the state of the Markov chain is considered.  

This measure of statistical efficiency does not always depend on $q$.  For high-dimensional problems where the MH algorithm converges to a diffusion process, then all $q$ essentially lead to the same notion of efficiency \cite[\S 2.2]{RR2001}.  We observe this in Theorems \ref{thm genlang conv} and \ref{thm:lstep} where jump size is independent of $q$.  However, in HMC this definition of jump size is dependent on $q$, see Theorem~\ref{thm hmc conv}.

The remainder of this article is organized as follows.  We identify the equilibrium distribution for a given stochastic AR(1) process in Section \ref{sec stationary}.  By exploiting the fact that $G$ and $\Sigma$ are functions of $A$, we use a coordinate transformation to diagonalize the proposal and target in Section \ref{sec diagonalization}.  Sections \ref{sec gaussian} and \ref{sec nongaussian} provide the main body of theory where we prove results in the Gaussian and non-Gaussian target cases, respectively.  We apply our theory to examples in Section~\ref{sec examples}, and we conclude with some discussion in Section \ref{sec discussion}.  Most proofs are in the appendix.  

\subsection{Notation}

Let $G_i$, $\lambda_i^2$ and $\tilde{\lambda}_i^2$ be eigenvalues of $G$, $A$ and $\mathcal{A}$, respectively. Let $\mu_i = (A^{-1}b)_i$ and $\tilde{\mu}_i = (\mathcal{A}^{-1}\beta)_i$ be components of the means of $\tilde{\pi}_d$ and $\normal(\mathcal{A}^{-1}\beta,\mathcal{A}^{-1})$, respectively.  Define
\begin{align*}
	\tilde{g}_i &:= 1-G_i, & \hat{g}_i &:= 1-G_i^2, & 
	r_i &:= \frac{\lambda_i^2-\tilde{\lambda}_i^2}{\lambda_i^2}, &
	\tilde{r}_i &:= \frac{\lambda_i^2}{\tilde{\lambda}_i^2}, & 
	\hat{r}_i &:= \mu_i - \tilde{\mu}_i
\end{align*}
so that $\tilde{g}_i$ and $\hat{g}_i$ quantify the gap between the spectrum of $G$ and $1$, and $r_i$, $\tilde{r}_i$ and $\hat{r}_i$ quantify the difference between covariances and means of $\tilde{\pi}_d$ and $\normal(\mathcal{A}^{-1}\beta,\mathcal{A}^{-1})$, respectively.  To avoid lengthy expressions also define
\begin{align*}
	T_{0i} &:= \hat{r}_i^2 \lambda_i^2 ( \tsfrac{1}{2} r_i \hat{g}_i - \tilde{g}_i), &
	T_{1i} &:= \hat{r}_i \lambda_i ( r_i \hat{g}_i - \tilde{g}_i), \\
	T_{2i} &:= \hat{r}_i \lambda_i (\tilde{r}_i \hat{g}_i )^{1/2} ( 1 - r_i G_i), &
	T_{3i} &:= \tsfrac{1}{2} r_i \hat{g}_i, \\
	T_{4i} &:= -\tsfrac{1}{2} r_i \tilde{r}_i \hat{g}_i, &
	T_{5i} &:= -r_i G_i (\tilde{r}_i \hat{g}_i)^{1/2}.
\end{align*}
In general, all of these quantities may depend on $d$.  The standard normal cumulative distribution function will always be $\Phi$.

We say $f_{d,i} = \mathcal{O}(g_{d,i})$ (uniformly in $i$) as $d \rightarrow \infty$ if for all $i$ and all sufficiently large $d$, $f_{d,i}/g_{d,i}$ is bounded by a constant that is independent of $d$ and $i$. Likewise, $f_{d,i} = \mathrm{o}(g_{d,i})$ (uniformly in $i$) as $d \rightarrow \infty$ if $\max_{1\leq i \leq d} f_{d,i}/g_{d,i} \rightarrow 0$ as $d \rightarrow \infty$.  For brevity we sometimes omit ``uniformly in $i$''.

Other articles use $\lambda_i^2$ to denote the eigenvalues of the covariance matrix corresponding to $\tilde{\pi}_d$ \cite{BRS2009,BS2009b,BS2009a,BPSSS2011}.  We do not follow this convention and instead use $\lambda_i^2$ to denote eigenvalues of the precision matrix ($\lambda_i$ is the inverse of a standard deviation).  Since sampling from $N(A^{-1}b,A^{-1})$ is closely related to solving $Ax=b$ (see \cite{F2013,FP2014,FP2016}), our notation aligns with literature on solving linear systems.

\section{Stationary distribution for an AR(1) process}
\label{sec stationary}

The following theorem determines the equilibrium distribution for a convergent stochastic AR(1) process.

\begin{theorem}
\label{thm:2.1}
If $\rho(G) < 1$, then the stochastic AR(1) process defined by \eqref{eq:ar1} (the proposal chain) converges to $\normal(\mathcal{A}^{-1}\beta,\mathcal{A}^{-1})$ where
\begin{equation}
\label{eq calA}
	\mathcal{A} = \left( \sum_{l=0}^\infty G^l \Sigma (G^{T})^l \right)^{-1} \quad \mbox{is s.p.d. and} \quad
	\beta = \mathcal{A} (I-G)^{-1} g.
\end{equation}
\end{theorem}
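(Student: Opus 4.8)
The plan is to establish convergence of the AR(1) iteration to a Gaussian limit, then identify the mean and covariance of that limit by taking expectations and forming the stationary (Lyapunov-type) equation.

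First I would iterate the recursion \eqref{eq:ar1} from an arbitrary deterministic start $x_0$: after $n$ steps,
\begin{equation*}
	x_n = G^n x_0 + \sum_{l=0}^{n-1} G^l g + \sum_{l=0}^{n-1} G^l \nu_{n-l},
\end{equation*}
where the $\nu_j$ are i.i.d.\ $\normal(0,\Sigma)$. Since $\rho(G)<1$, $G^n\to 0$ and $\sum_{l\ge 0}G^l = (I-G)^{-1}$ converges (Neumann series), so the deterministic part tends to $(I-G)^{-1}g$. The stochastic part is a sum of independent Gaussians, hence Gaussian with mean zero and covariance $\sum_{l=0}^{n-1} G^l \Sigma (G^T)^l$; the geometric decay from $\rho(G)<1$ makes this series converge absolutely to the matrix $\mathcal{A}^{-1}$ claimed in \eqref{eq calA}. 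A Gaussian sequence whose means and covariances converge converges in distribution to the Gaussian with the limiting parameters, giving the limit law $\normal\bigl((I-G)^{-1}g,\ \mathcal{A}^{-1}\bigr)$.

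Next I would verify that this matches $\normal(\mathcal{A}^{-1}\beta,\mathcal{A}^{-1})$ and that $\mathcal{A}$ is s.p.d. With $\beta := \mathcal{A}(I-G)^{-1}g$ we have $\mathcal{A}^{-1}\beta = (I-G)^{-1}g$, so the means agree by definition. For positive definiteness: each term $G^l\Sigma(G^T)^l$ is symmetric positive semidefinite and the $l=0$ term $\Sigma$ is s.p.d., so the sum $\mathcal{A}^{-1}$ is s.p.d., hence invertible with s.p.d.\ inverse $\mathcal{A}$. (One also checks the series defining $\mathcal{A}^{-1}$ genuinely converges, e.g.\ via submultiplicativity of a norm in which $\|G\|<1$, which exists because $\rho(G)<1$; this is the one genuinely quantitative point.) As a cross-check, the limit covariance $\mathcal{C}:=\mathcal{A}^{-1}$ satisfies the discrete Lyapunov equation $\mathcal{C} = G\mathcal{C}G^T + \Sigma$, obtained by passing to the limit in $\mathrm{Cov}(x_{n+1}) = G\,\mathrm{Cov}(x_n)\,G^T + \Sigma$; this confirms consistency of \eqref{eq calA}.

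**Main obstacle.** The only subtle point is justifying convergence of the infinite series $\sum_{l\ge 0} G^l\Sigma(G^T)^l$ (and of $\sum G^l g$) from the hypothesis $\rho(G)<1$ rather than $\|G\|<1$ in the standard Euclidean norm: $G$ need not be symmetric or normal, so powers of $G$ can grow transiently. The standard fix is to use the fact that for any $\epsilon>0$ there is a submultiplicative matrix norm (or an equivalent inner-product norm) with $\|G\| \le \rho(G)+\epsilon < 1$, giving $\|G^l\| \le C(\rho(G)+\epsilon)^l$ in the Euclidean norm and hence geometric, summable bounds on all the relevant terms. Everything else is routine: Gaussian-preservation under affine maps and independent sums, and convergence in distribution of Gaussians with convergent parameters. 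Since the statement merely asserts existence of the limit and the form of $(\mathcal{A},\beta)$, no optimization or asymptotic-in-$d$ analysis enters here — this is purely a linear-algebra plus elementary-probability argument.
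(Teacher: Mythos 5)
Your proposal is correct and follows essentially the same route as the paper's proof: iterate the recursion, let the deterministic part converge to $(I-G)^{-1}g$ via the Neumann series, and identify the stochastic part as a Gaussian whose covariance is the partial sum converging to $\mathcal{A}^{-1}$, all justified by $\rho(G)<1$. Your extra care about $\rho(G)<1$ versus $\|G\|<1$ and the Lyapunov-equation cross-check are sound elaborations of steps the paper leaves implicit.
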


\begin{proof}
Define $\mathcal{A}^{-1} := \sum_{l=0}^\infty G^l \Sigma (G^{T})^l$.  Since $\rho(G)<1$ and $\Sigma$ is s.p.d., $\mathcal{A}^{-1}$ is well-defined and s.p.d..  The $n^{\mathrm{th}}$ step of the AR(1) process, starting at $x^{(0)}$, where $\xi^{(i)} \stackrel{\text{i.i.d.}}{\sim} \normal(0,I)$, is 
$$
	x^{(n)} = G^n x^{(0)} + \sum_{i=1}^n G^{i-1} g + \sum_{i=1}^n G^{i-1} \Sigma^{1/2} \xi^{(i)}.
$$
Since $\rho(G)<1$, it follows that $x^{n}$ converges to the stationary distribution $N((I-G)^{-1}g, \mathcal{A}^{-1})$ as $n \rightarrow \infty$.
\end{proof}

\begin{corollary}
$\Sigma = \mathcal{A}^{-1} - G \mathcal{A}^{-1} G^T$.
\end{corollary}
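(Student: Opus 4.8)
The plan is to work directly from the series representation $\mathcal{A}^{-1} = \sum_{l=0}^\infty G^l \Sigma (G^T)^l$ established in Theorem~\ref{thm:2.1}, which converges absolutely in any matrix norm because $\rho(G) < 1$ and $\Sigma$ is s.p.d. Since the series converges absolutely, I may multiply term by term: left-multiplying by $G$ and right-multiplying by $G^T$ gives $G \mathcal{A}^{-1} G^T = \sum_{l=0}^\infty G^{l+1} \Sigma (G^T)^{l+1}$.

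Next I would reindex the sum with $m = l+1$, so that $G \mathcal{A}^{-1} G^T = \sum_{m=1}^\infty G^{m} \Sigma (G^T)^{m}$, i.e. it is exactly the series for $\mathcal{A}^{-1}$ with the $m=0$ term omitted. Subtracting, $\mathcal{A}^{-1} - G \mathcal{A}^{-1} G^T$ telescopes to the single missing term $G^0 \Sigma (G^T)^0 = \Sigma$, which is the claimed identity. (Equivalently, one observes that $\mathcal{A}^{-1}$ solves the discrete Lyapunov/Stein equation $\mathcal{A}^{-1} = \Sigma + G \mathcal{A}^{-1} G^T$, whose unique solution under $\rho(G)<1$ is the given series.)

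There is essentially no obstacle here: the only point requiring a word of justification is that rearranging and subtracting infinite series of matrices is legitimate, which follows from absolute convergence under the hypothesis $\rho(G) < 1$ that is already in force in Theorem~\ref{thm:2.1}. One could alternatively avoid infinite series entirely by truncating at $N$ terms, forming $\mathcal{A}^{-1}_N - G \mathcal{A}^{-1}_N G^T = \Sigma - G^{N+1}\Sigma(G^T)^{N+1}$, and letting $N \to \infty$ so the remainder vanishes; I would mention this as a remark rather than belabor it.
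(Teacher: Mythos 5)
Your proof is correct and is exactly the argument the paper leaves implicit: the corollary follows immediately from the series $\mathcal{A}^{-1} = \sum_{l=0}^\infty G^l \Sigma (G^T)^l$ by the telescoping/reindexing you describe, with convergence guaranteed by $\rho(G)<1$. Nothing further is needed.
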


If $G$ and $\Sigma$ are functions of $A$ then $G \Sigma$ is symmetric and the following corollary and lemma apply.

\begin{corollary}
\label{lem equiv}
If $G \Sigma$ is symmetric, then $\Sigma^{-1} G$ is also symmetric, and $\mathcal{A} = \Sigma^{-1} (I-G^2) = \Sigma^{-1} - G^T \Sigma^{-1} G$.
\end{corollary}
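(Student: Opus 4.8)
The plan is to combine the series formula for $\mathcal{A}^{-1}$ from Theorem~\ref{thm:2.1} with the commutation relation forced by the symmetry hypothesis. First I would unpack what symmetry of $G\Sigma$ means: $(G\Sigma)^T = \Sigma G^T = G\Sigma$. Pre- and post-multiplying this identity by $\Sigma^{-1}$ (which exists because $\Sigma$ is s.p.d.) yields $\Sigma^{-1}G = G^T\Sigma^{-1}$, so $\Sigma^{-1}G$ is symmetric; note this step needs no invertibility of $G$. The same relation $G\Sigma = \Sigma G^T$ propagates to powers: a one-line induction gives $G^l\Sigma = \Sigma(G^T)^l$ for every $l \geq 0$, and hence $G^l\Sigma(G^T)^l = G^{2l}\Sigma$.

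Next I would substitute this into $\mathcal{A}^{-1} = \sum_{l=0}^\infty G^l\Sigma(G^T)^l$ from Theorem~\ref{thm:2.1}, obtaining $\mathcal{A}^{-1} = \left(\sum_{l=0}^\infty G^{2l}\right)\Sigma = (I-G^2)^{-1}\Sigma$, where the Neumann series converges and $I - G^2$ is invertible because $\rho(G) < 1$ implies $\rho(G^2) = \rho(G)^2 < 1$. Inverting gives $\mathcal{A} = \Sigma^{-1}(I-G^2)$. Finally, expanding and invoking the symmetry of $\Sigma^{-1}G$ established in the first step, $\Sigma^{-1}(I-G^2) = \Sigma^{-1} - (\Sigma^{-1}G)G = \Sigma^{-1} - G^T\Sigma^{-1}G$, which is the remaining claimed identity.

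There is essentially no difficult step here; the only points requiring a little care are (i) checking that the manipulations with the infinite Neumann series are legitimate, which is immediate from $\rho(G^2) < 1$, and (ii) using the commutation relation consistently, e.g. that $G^l\Sigma(G^T)^l$ may equally be written $G^{2l}\Sigma$ or $\Sigma(G^T)^{2l}$; the latter incidentally re-confirms that $(I-G^2)^{-1}\Sigma$ is symmetric, consistent with $\mathcal{A}^{-1}$ being s.p.d. by Theorem~\ref{thm:2.1}.
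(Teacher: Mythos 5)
Your proof is correct; the paper states this corollary without proof, and your derivation is precisely the computation it implicitly relies on: symmetry of $G\Sigma$ gives $G\Sigma = \Sigma G^T$, hence $\Sigma^{-1}G = G^T\Sigma^{-1}$ and $G^l\Sigma(G^T)^l = G^{2l}\Sigma$, so the series in Theorem~\ref{thm:2.1} collapses to the Neumann series $(I-G^2)^{-1}\Sigma$, which is legitimate since $\rho(G^2)=\rho(G)^2<1$. All steps, including the final rewriting $\Sigma^{-1}G^2 = G^T\Sigma^{-1}G$, are sound, and your closing consistency check on the symmetry of $(I-G^2)^{-1}\Sigma$ is a nice touch.
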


\begin{lemma}
\label{lem reversible}
If $G \Sigma$ is symmetric, then the stochastic AR(1) process defined by \eqref{eq:ar1} is $\normal(\mathcal{A}^{-1}\beta,\mathcal{A}^{-1})$-reversible.
\end{lemma}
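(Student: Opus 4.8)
The plan is to verify the detailed balance condition directly: writing $\pi := \normal(\mathcal{A}^{-1}\beta,\mathcal{A}^{-1})$ for the candidate reversible distribution and $q(x,y)$ for the Gaussian transition density of \eqref{eq:ar1}, it suffices to show $\pi(x)\,q(x,y) = \pi(y)\,q(y,x)$ for all $x,y\in\bbR^d$. Since both $\pi$ and $q(x,\cdot)$ are Gaussian densities, both sides are exponentials of quadratic forms in $(x,y)$ (plus normalizing constants), so the identity reduces to matching the quadratic, linear, and constant coefficients; by symmetry of the claimed identity under swapping $x\leftrightarrow y$, it is really only the cross term in $x$ and $y$, the linear terms, and the constant that need checking.

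First I would write out the logarithm of $\pi(x)q(x,y)$ explicitly. Up to an additive constant,
\[
	\log\big(\pi(x)q(x,y)\big) = -\tsfrac12 (x-\mathcal{A}^{-1}\beta)^T\mathcal{A}(x-\mathcal{A}^{-1}\beta) - \tsfrac12 (y - Gx - g)^T\Sigma^{-1}(y-Gx-g).
\]
Expanding, the purely quadratic part in $(x,y)$ is $-\tsfrac12\big(x^T\mathcal{A}x + x^TG^T\Sigma^{-1}Gx + y^T\Sigma^{-1}y - 2x^TG^T\Sigma^{-1}y\big)$. Using Corollary~\ref{lem equiv}, which gives $\mathcal{A} = \Sigma^{-1} - G^T\Sigma^{-1}G$, the coefficient of $x^Tx$ collapses to $\Sigma^{-1}$, so the quadratic part becomes $-\tsfrac12\big(x^T\Sigma^{-1}x + y^T\Sigma^{-1}y - 2x^TG^T\Sigma^{-1}y\big)$. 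This is already symmetric in $x$ and $y$ precisely because $G^T\Sigma^{-1}$ is symmetric — which is exactly the content of the hypothesis that $G\Sigma$ is symmetric (equivalently $\Sigma^{-1}G$ symmetric, again by Corollary~\ref{lem equiv}). Hence the quadratic part is invariant under $x\leftrightarrow y$, as required.

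It then remains to check the linear and constant terms. The linear-in-$(x,y)$ part of $\log(\pi(x)q(x,y))$ is $\beta^Tx - (y-Gx-g)^T\Sigma^{-1}(-g) $ hmm — more carefully, collecting linear terms gives $\beta^T x + g^T\Sigma^{-1}(y - Gx) + \text{(terms linear via the mean of }\pi)$; here I would substitute $\beta = \mathcal{A}(I-G)^{-1}g$ from Theorem~\ref{thm:2.1}, so $\mathcal{A}^{-1}\beta = (I-G)^{-1}g$, and use $\mathcal{A} = \Sigma^{-1}(I-G^2) = \Sigma^{-1}(I-G)(I+G)$ (also from Corollary~\ref{lem equiv}, valid since the relevant matrices commute as functions of $A$, or more robustly just $\mathcal{A}=\Sigma^{-1}-G^T\Sigma^{-1}G$) to rewrite $\beta$ in terms of $\Sigma^{-1}$, $G$, and $g$. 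After substitution the coefficient of $x$ should come out to $(\Sigma^{-1} - G^T\Sigma^{-1})g$ times a symmetric combination, matching the coefficient of $y$; the constant terms (the $g^T\Sigma^{-1}g$ piece from $q$ and the $\beta^T\mathcal{A}^{-1}\beta$ piece from $\pi$) are manifestly symmetric since they involve neither $x$ nor $y$.

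The main obstacle I anticipate is the bookkeeping in the linear terms: one must be careful that $G$ and $\Sigma$ — being functions of $A$ — commute, so that expressions like $G^T\Sigma^{-1} = \Sigma^{-1}G$ and $(I-G)^{-1}$ interact cleanly with $\mathcal{A}$; alternatively, one can sidestep commutativity entirely by only ever invoking the symmetry of $G^T\Sigma^{-1}$ (from the hypothesis plus Corollary~\ref{lem equiv}) and the closed forms $\mathcal{A}=\Sigma^{-1}-G^T\Sigma^{-1}G$, $\mathcal{A}^{-1}\beta=(I-G)^{-1}g$. A cleaner alternative worth mentioning: since \eqref{eq:ar1} generates a Gaussian Markov chain, it is reversible with respect to its own stationary distribution $\pi$ if and only if the pair $(x,y)\sim(\pi, q(x,\cdot))$ has a joint law that is symmetric under coordinate swap, i.e.\ $G\mathcal{A}^{-1}$ is symmetric (the cross-covariance of $x$ and $y$ equals that of $y$ and $x$); and $G\mathcal{A}^{-1} = G\big(\Sigma^{-1}-G^T\Sigma^{-1}G\big)^{-1}$ is symmetric exactly when $\Sigma^{-1}G$ is, which holds by hypothesis. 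I would present the detailed-balance computation as the primary argument and remark on this cross-covariance criterion as the conceptual reason.
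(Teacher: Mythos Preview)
Your approach is essentially the same as the paper's: both verify detailed balance $\pi^*(x)q(x,y)=\pi^*(y)q(y,x)$ directly by expanding the Gaussian exponents and invoking the identities $\mathcal{A}=\Sigma^{-1}-G^T\Sigma^{-1}G=\Sigma^{-1}(I-G^2)$ and $\beta=\mathcal{A}(I-G)^{-1}g$ from Theorem~\ref{thm:2.1} and Corollary~\ref{lem equiv}, together with the symmetry of $\Sigma^{-1}G$. The paper's proof is a one-line sketch listing these identities, whereas you spell out the quadratic reduction and sketch the linear-term check (which does go through: $\beta=\Sigma^{-1}(I+G)g=(I+G^T)\Sigma^{-1}g$ by symmetry of $\Sigma^{-1}G$); your cross-covariance remark is a pleasant extra not in the paper.
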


\begin{proof}
Use identities $\Sigma = \mathcal{A}^{-1} - G \mathcal{A}^{-1} G^T$, $\mathcal{A} = \Sigma^{-1} (I-G^2) = \Sigma^{-1} - G^T \Sigma^{-1} G$, and $\beta = \mathcal{A} (I-G)^{-1} g$ to check detailed balance, see \cite[p.113]{Liu2001book}, i.e. $\pi^*(x) q(x,y) = \pi^*(y) q(y,x)$ for all $x,y \in \mathbb{R}^d$, where $\pi^*(x) \propto \exp( - \frac{1}{2} x^T \mathcal{A} x + \beta^T x)$ and $q(x,y) \propto \exp( -\frac{1}{2} (y-Gx-g)^T \Sigma^{-1} (y-Gx-g))$.
\end{proof}

\section{Diagonalization of the proposal and target}
\label{sec diagonalization}

The results in \cite{BRS2009} for SLA and RWM, where $\tilde{\pi}_d$ has product form \eqref{eq target1}, easily extend to the case where $\tilde{\pi}_d$ is Gaussian \eqref{eq ref} with $A$ that may have non-zero off-diagonal terms.  This is obvious once we recognize that the MH transition kernel commutes with orthogonal transformations, and there exists an orthogonal transformation that simultaneously diagonalizes the covariance matrix of $\tilde{\pi}_d$, $G$, and $\Sigma$ in \eqref{eq:ar1}.  

\begin{lemma}
\label{lem:orthog}
Suppose we have a MH algorithm with proposal $q(x,\dd y) = q(x,y) \dd y$ and target $\pi(x)$.  This induces an acceptance probability $\alpha(x,y)$ and transition kernel $P(x,\dd y)$.

Let $W:\bbR^d \rightarrow \bbR^d$ be an invertible homogeneous linear transformation (equivalently, let $W \in \mathbb{R}^{d\times d}$ be an invertible matrix).  The coordinate transformation
$$
	\hat{x} = W^{-1} x \qquad \forall x \in \bbR^d,
$$
induces a new MH algorithm with proposal $\hat{q}(x,\dd y) = \hat{q}(x,y) \dd y$ and target $\hat{\pi}(x)$, which in turn induces an acceptance probability $\hat{\alpha}(x,y)$ and transition kernel $\hat{P}(x,\dd y)$.  

Then 
$$
	\alpha(x,y) = \hat{\alpha}(\hat{x},\hat{y}) \qquad \forall x,y \in \bbR^d,
$$
and
$$
	P^n = W \hat{P}^n W^{-1} \qquad \mbox{for $n=0,1,2,\dotsc$.}
$$
\end{lemma}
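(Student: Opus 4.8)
The plan is to track how each ingredient of the Metropolis–Hastings construction transforms under the change of variables $\hat x = W^{-1}x$, and then assemble the two claimed identities. First I would write down the transformed target and proposal explicitly. Since $\hat x = W^{-1}x$ is a linear bijection with Jacobian $|\det W|$, the pushforward density is $\hat\pi(\hat x) = |\det W|\,\pi(W\hat x)$, and similarly the transformed proposal kernel has density $\hat q(\hat x,\hat y) = |\det W|\,q(W\hat x, W\hat y)$ with respect to Lebesgue measure in the $\hat y$ coordinate. I would verify these two formulas carefully, since everything else is a formal consequence; the point is simply the standard change-of-variables formula for densities, applied once to $\pi$ and once, for each fixed $x$, to the conditional law $q(x,\cdot)$.

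Next I would compute $\hat\alpha$. Plugging the above into the definition \eqref{eq alpha},
$$
  \hat\alpha(\hat x,\hat y)
  = 1 \wedge \frac{\hat\pi(\hat y)\,\hat q(\hat y,\hat x)}{\hat\pi(\hat x)\,\hat q(\hat x,\hat y)}
  = 1 \wedge \frac{|\det W|\,\pi(W\hat y)\cdot|\det W|\,q(W\hat y,W\hat x)}{|\det W|\,\pi(W\hat x)\cdot|\det W|\,q(W\hat x,W\hat y)},
$$
and the four factors of $|\det W|$ cancel, leaving $1\wedge \frac{\pi(y)q(y,x)}{\pi(x)q(x,y)} = \alpha(x,y)$ once we substitute back $x = W\hat x$, $y = W\hat y$. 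This gives the first identity. (One should note in passing that if $\pi$ is only a density up to a normalizing constant, the constant cancels in $\alpha$ anyway, so no care is needed there.)

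For the kernel identity I would first establish the case $n=1$ and then bootstrap. Writing the MH kernel in its usual form, for a test set $B$,
$$
  \hat P(\hat x,B) = \int_B \hat q(\hat x,\hat y)\,\hat\alpha(\hat x,\hat y)\,\dd\hat y
   + \delta_{\hat x}(B)\Bigl(1 - \int_{\bbR^d}\hat q(\hat x,\hat y)\,\hat\alpha(\hat x,\hat y)\,\dd\hat y\Bigr),
$$
I would change variables $\hat y = W^{-1}y$ in each integral (Jacobian $|\det W|^{-1}$, which cancels the $|\det W|$ in $\hat q$), use the already-proved $\hat\alpha(\hat x,\hat y) = \alpha(x,y)$, and observe $\delta_{\hat x}(W^{-1}B) = \delta_x(B)$, to conclude $\hat P(W^{-1}x, W^{-1}B) = P(x,B)$; equivalently $P = W\hat P W^{-1}$ in the sense that $P$ acts on the pushed-forward measure exactly as $\hat P$ acts in the hatted coordinates. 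The statement $P^n = W\hat P^n W^{-1}$ for all $n\ge 0$ then follows by induction: it is trivial for $n=0$, and the inductive step is just that conjugation by $W$ is multiplicative, $W\hat P^n W^{-1} = (W\hat P W^{-1})^n$, together with the semigroup property of the kernels. The main obstacle, such as it is, is purely bookkeeping: being scrupulous about which measure each kernel is regarded as acting on, and keeping the Jacobian factors in the right places so that the atom at the current state and the absolutely continuous part both transform consistently. There is no analytic difficulty — the lemma is essentially the statement that MH is natural with respect to invertible linear (indeed, smooth) reparametrizations — so I would keep the write-up short and emphasize the cancellation of $|\det W|$ as the one thing worth checking.
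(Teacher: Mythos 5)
Your proposal is correct and follows essentially the same route as the paper's proof: compute $\hat{\pi}$ and $\hat{q}$ by the change-of-variables formula, observe that the Jacobian factors cancel in the acceptance ratio, and deduce $P(x,W(B))=\hat{P}(\hat{x},B)$, from which the $n$-step identity follows by conjugation. The paper leaves the kernel computation and the induction as "easy to show"; you simply fill in those details.
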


\begin{proof}
Let $|\cdot|$ denote determinant.  Then $q(x,y)\dd y = q(W\hat{x},W\hat{y}) |W| \dd \hat{y}$ and $\pi(x)\dd x = \pi(W\hat{x}) |W| \dd \hat{x}$, so $\hat{q}(\hat{x},\hat{y}) = q(W\hat{x},W\hat{y})|W|$ and $\hat{\pi}(\hat{x}) = \pi(W\hat{x}) |W|$.  Using these identities it is easy to show that $\alpha(x,y) = \hat{\alpha}(\hat{x},\hat{y})$, and $P(x,W(B)) = \hat{P}(\hat{x},B)$ for every $x \in \mathbb{R}^d$ and $B \in \mathcal{B}(\mathbb{R}^d)$.  The result follows.  
\end{proof}

A consequence of this result is that the Markov chain corresponding to $P$ and the transformed Markov chain corresponding to $\hat{P}$ have identical convergence properties, so it is sufficient to analyze the properties of $\hat{P}$ to determine the properties of $P$.  
We apply Lemma \ref{lem:orthog} to the following two MH algorithms related by an orthogonal transformation.

Since $A$ is s.p.d. we can define a spectral decomposition
$$
	A = Q \Lambda Q^T
$$ 
where $Q \in \bbR^{d\times d}$ is an orthogonal matrix (orthonormal columns) and $\Lambda = \operatorname{diag}(\lambda_1^2,\dotsc,\lambda_d^2)$ is the diagonal matrix of eigenvalues of $A$.

\begin{lemma}
\label{lem2}
Suppose $G = G(A)$ and $\Sigma = \Sigma(A)$ are functions of $A$.  The coordinate transformation
$$
	\hat{x} = Q^T x
$$
transforms the MH algorithm defined by
\begin{equation}
\label{MH0}
\begin{split}
	\mbox{Target:} & \qquad \normal(A^{-1}b, A^{-1}), \\
	\mbox{Proposal:} & \qquad y = G x + g + \nu, \qquad \nu \iid \normal(0,\Sigma) 
\end{split}
\end{equation}
to a MH algorithm defined by
\begin{equation}
\label{MH1}
\begin{split}
	\mbox{Target:} & \qquad \normal(\Lambda^{-1} Q^T b, \Lambda^{-1}), \\
	\mbox{Proposal:} & \qquad y = G(\Lambda) x + Q^T g + \nu, \qquad \nu \iid \normal(0,\Sigma(\Lambda)). 
\end{split}
\end{equation}
\end{lemma}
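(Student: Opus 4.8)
The plan is to apply Lemma~\ref{lem:orthog} with $W = Q$, so that $W^{-1} = Q^T$ since $Q$ is orthogonal, and then simply to identify the transformed target $\hat{\pi}$ and transformed proposal $\hat{q}$ with those appearing in \eqref{MH1}. The whole argument reduces to two Gaussian density computations together with one structural fact about functions of $A$.

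First I would handle the target. The density of $\normal(A^{-1}b,A^{-1})$ is proportional to $\exp(-\tfrac12 x^T A x + b^T x)$, and the proof of Lemma~\ref{lem:orthog} gives $\hat{\pi}(\hat{x}) = \pi(Q\hat{x})\,|Q|$. Substituting $x = Q\hat{x}$ yields $\hat{\pi}(\hat{x}) \propto \exp(-\tfrac12 \hat{x}^T (Q^T A Q)\hat{x} + (Q^T b)^T \hat{x})$, and since $Q^T A Q = \Lambda$ this is exactly the density of $\normal(\Lambda^{-1}Q^T b,\Lambda^{-1})$.

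Next I would handle the proposal. The kernel $q(x,\cdot)$ is the $\normal(Gx+g,\Sigma)$ density, and Lemma~\ref{lem:orthog} gives $\hat{q}(\hat{x},\hat{y}) = q(Q\hat{x},Q\hat{y})\,|Q|$. The key manipulation is to write $Q\hat{y} - GQ\hat{x} - g = Q\bigl(\hat{y} - (Q^T G Q)\hat{x} - Q^T g\bigr)$ and use $Q^T \Sigma^{-1} Q = (Q^T \Sigma Q)^{-1}$, which turns the exponent of $q(Q\hat{x},Q\hat{y})$ into $-\tfrac12\bigl(\hat{y} - (Q^T G Q)\hat{x} - Q^T g\bigr)^T (Q^T \Sigma Q)^{-1}\bigl(\hat{y} - (Q^T G Q)\hat{x} - Q^T g\bigr)$. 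Hence $\hat{q}(\hat{x},\cdot)$ is the $\normal\bigl((Q^T G Q)\hat{x} + Q^T g,\, Q^T \Sigma Q\bigr)$ density. The only place any real content enters is the identity $Q^T G Q = G(\Lambda)$ and $Q^T \Sigma Q = \Sigma(\Lambda)$, i.e. that conjugating a function of $A$ by $Q$ produces the same function of $\Lambda$; this is immediate for polynomials from $A = Q\Lambda Q^T$ and extends to rational and analytic functions of $A$ via the usual spectral calculus, which is precisely the sense in which ``function of $A$'' is used here. With these identities the transformed proposal is exactly that of \eqref{MH1}.

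Finally, Lemma~\ref{lem:orthog} with $W = Q$ gives $\alpha(x,y) = \hat{\alpha}(Q^T x, Q^T y)$ for all $x,y$ and $P^n = Q\,\hat{P}^n\,Q^T$ for all $n$, so the MH algorithm \eqref{MH0} is transformed into the MH algorithm \eqref{MH1}, as claimed. I do not anticipate a genuine obstacle here; the only step deserving explicit care is the spectral-calculus identity $Q^T f(A) Q = f(\Lambda)$, which I would state as a short remark rather than leave implicit.
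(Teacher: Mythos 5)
Your proof is correct and follows exactly the route the paper intends: the paper states Lemma~\ref{lem2} without an explicit proof precisely because it is the immediate specialization of Lemma~\ref{lem:orthog} to $W=Q$, combined with the spectral-calculus identities $Q^TAQ=\Lambda$, $Q^TG(A)Q=G(\Lambda)$, and $Q^T\Sigma(A)Q=\Sigma(\Lambda)$, which is what you carry out. The two Gaussian density computations and the remark on $Q^Tf(A)Q=f(\Lambda)$ are exactly the right level of detail.
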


Note that MH algorithm \eqref{MH1} has diagonal covariance in the target \emph{and} the random term in the proposal, \emph{and} $G$ is diagonal, so it is much easier to analyze than \eqref{MH0}.  Lemma \ref{lem:orthog} also tells us that it is sufficient to analyze \eqref{MH1} to determine the properties of \eqref{MH0}.  In particular, the expected acceptance probability in equilibrium is identical for \eqref{MH0} and \eqref{MH1}, and the integrated autocorrelation time (in the $2$-norm) is also identical.

Expected jump size in a particular coordinate direction is not preserved under the orthogonal transformation.  Nevertheless, we derive results of this kind for \eqref{MH0} based on the equivalent result for \eqref{MH1}.
While some properties related to statistical efficiency of \eqref{MH0} and \eqref{MH1} are preserved under the orthogonal transformation, the computational cost of the algorithms is \emph{not} preserved.  In particular, the cost of constructing $Q$ can be huge for large $d$, making computations with \eqref{MH1} infeasible.  We propose always computing with \eqref{MH0}, but analyzing \eqref{MH1}, so we never need to compute the orthogonal transformation.  It is sufficient to know that it exists.

In the case when $\phi_d \neq 0$ and the target is a change of measure from a Gaussian, then we need to check that any assumptions on $\phi_d$ still hold under the coordinate transformation.  In the case of \cite{BRS2009}, this is indeed the case.

Thus, we have reduced the study of MH algorithms with AR(1) proposals where $G$ and $\Sigma$ are functions of $A$, targeting distributions that are either Gaussian or a change of measure from a reference Gaussian, to the special case when $G$ and $\Sigma$ are diagonal matrices and the reference Gaussian has a diagonal covariance matrix.

\section{Gaussian targets}
\label{sec gaussian}
\subsection{Expected acceptance probability for a Gaussian target}
\label{sec expect}

The expected acceptance probability is related to efficiency. For optimal performance, the proposal is often tuned so that the observed average acceptance probability is a predetermined value between $0$ and $1$.  For example, in certain settings and as $d \rightarrow \infty$, $0.234$ is optimal for RWM \cite{RGG1997}, $0.574$ is optimal for MALA \cite{RR1998} and SLA \cite{BRS2009}, and $0.651$ is optimal for HMC \cite{BPRSS2010}.  Establishing these results required an expression for the expected acceptance probability of the algorithm as $d \rightarrow \infty$.  We now derive an expression for the expected acceptance probability of Algorithm~\eqref{MH0}, that targets a Gaussian.

\begin{lemma}
\label{lem1}
Suppose $G\Sigma$ is symmetric (holds when $G$ and $\Sigma$ are functions of $A$), then the acceptance probability for \eqref{MH0} satisfies
$$
	\alpha(x,y) = 1 \wedge \exp \left( -\frac{1}{2} y^T (A - \mathcal{A}) y + \frac{1}{2} x^T(A-\mathcal{A}) x + (b-\beta)^T (y-x)\right).
$$
\end{lemma}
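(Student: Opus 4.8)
The plan is to substitute the explicit Gaussian target density and the Gaussian proposal density into the Metropolis--Hastings ratio in \eqref{eq alpha} and simplify. Write $\pi_d(x) \propto \exp(-\tfrac12 x^T A x + b^T x)$ and, from \eqref{eq:ar1}, $q(x,y) \propto \exp(-\tfrac12 (y - Gx - g)^T \Sigma^{-1}(y - Gx - g))$. The target contributes the factor $\exp(-\tfrac12 y^T A y + \tfrac12 x^T A x + b^T(y-x))$ at once, so the whole content of the lemma is to show that the proposal ratio $q(y,x)/q(x,y)$ equals $\exp(\tfrac12 y^T\mathcal{A} y - \tfrac12 x^T\mathcal{A} x - \beta^T(y-x))$; multiplying the two factors and applying $1\wedge(\cdot)$ then gives the claim.

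To handle the proposal ratio I would expand the difference of quadratic forms $\tfrac12 (y-Gx-g)^T\Sigma^{-1}(y-Gx-g) - \tfrac12 (x-Gy-g)^T\Sigma^{-1}(x-Gy-g)$, grouping the terms by their degree in $(x,y)$. The key input is Corollary~\ref{lem equiv}: when $G\Sigma$ is symmetric, $\Sigma^{-1}G = G^T\Sigma^{-1}$ is symmetric and $\mathcal{A} = \Sigma^{-1} - G^T\Sigma^{-1}G$. Using symmetry of $\Sigma^{-1}G$, the cross terms $y^T\Sigma^{-1}Gx$ and $x^T\Sigma^{-1}Gy$ cancel; the purely quadratic part collapses, via $\mathcal{A} = \Sigma^{-1} - G^T\Sigma^{-1}G$, to $\tfrac12 y^T\mathcal{A}y - \tfrac12 x^T\mathcal{A}x$; the constant-in-$(x,y)$ terms $g^T\Sigma^{-1}g$ cancel between $q(x,y)$ and $q(y,x)$; and the linear part reduces to $-g^T(\Sigma^{-1} + \Sigma^{-1}G)(y-x)$.

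It then remains to identify this linear coefficient with $\beta$. Here I would combine $\beta = \mathcal{A}(I-G)^{-1}g$ from Theorem~\ref{thm:2.1} with $\mathcal{A} = \Sigma^{-1}(I - G^2) = \Sigma^{-1}(I-G)(I+G)$ from Corollary~\ref{lem equiv}; since all the matrices involved are functions of $G$ they commute, so $\mathcal{A}(I-G)^{-1} = \Sigma^{-1}(I+G)$, whence $\beta = \Sigma^{-1}(I+G)g = (\Sigma^{-1}+\Sigma^{-1}G)g$. Because $\Sigma^{-1}$ and $\Sigma^{-1}G$ are symmetric, $\Sigma^{-1}+\Sigma^{-1}G$ is symmetric, so $g^T(\Sigma^{-1}+\Sigma^{-1}G)(y-x) = \beta^T(y-x)$. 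Assembling the target factor and the proposal factor yields the stated expression for $\pi_d(y)q(y,x)/(\pi_d(x)q(x,y))$, and taking $1 \wedge (\cdot)$ finishes the proof.

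The only real obstacle is bookkeeping in the quadratic-form expansion: one must track the constant terms carefully and, in particular, be careful that the cancellation of the cross terms genuinely uses symmetry of $\Sigma^{-1}G$ (equivalently, symmetry of $G\Sigma$) rather than symmetry of $\Sigma^{-1}$ alone. Everything else is routine once Corollary~\ref{lem equiv} and the formula for $\beta$ from Theorem~\ref{thm:2.1} are in hand.
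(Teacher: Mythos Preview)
Your proposal is correct and follows essentially the same approach as the paper's proof: substitute the Gaussian target and proposal densities into the MH ratio, then invoke Corollary~\ref{lem equiv} for $\mathcal{A} = \Sigma^{-1} - G^T\Sigma^{-1}G$ and combine Theorem~\ref{thm:2.1} with Corollary~\ref{lem equiv} to obtain $\beta^T = g^T(I+G^T)\Sigma^{-1}$ (equivalently, your $\beta = \Sigma^{-1}(I+G)g$). You simply spell out the quadratic-form bookkeeping that the paper leaves implicit. One small remark: where you write ``since all the matrices involved are functions of $G$ they commute'', all you actually need is that $(I-G^2) = (I+G)(I-G)$ with the factors commuting because they are polynomials in $G$; no further commutativity between $\Sigma^{-1}$ and $G$ is required at that step.
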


\begin{proof}
The result follows by substituting $q(x,y) \propto \exp( -\frac{1}{2} (y-Gx-g)^T \Sigma^{-1} (y-Gx-g))$ and \eqref{eq ref} into \eqref{eq alpha}, using Theorem \ref{thm:2.1} and Corollary \ref{lem equiv}, which imply $\beta^T = g^T (I+G^T) \Sigma^{-1}$. 
\end{proof}

\begin{theorem}
\label{thm accept}
Suppose that $G$ and $\Sigma$ are functions of $A$, and the Markov chain induced by \eqref{MH0} is in equilibrium, i.e. $x \sim \normal(A^{-1}b,A^{-1})$.  
If there exists a $\delta > 0$ such that
\begin{equation}
\label{eq Tcond}
	\lim_{d \rightarrow \infty} \frac{\sum_{i=1}^d |T_{ji}|^{2+\delta}}{ \left( \sum_{i=1}^d |T_{ji}|^2 \right)^{1+\delta/2}} = 0, \qquad j=1,2,3,4,5
\end{equation}
($j=0$ is not required) and the limits $\mu = \lim_{d\rightarrow \infty} \sum_{i=1}^d \mu_{d,i}$ and $\sigma^2 = \lim_{d\rightarrow \infty} \sum_{i=1}^d \sigma_{d,i}^2$ exist, where
$$
	\mu_{d,i} = T_{0i} + T_{3i} + T_{4i} \qquad \mbox{and} \qquad \sigma_{d,i}^2 = T_{1i}^2 + T_{2i}^2 + 2T_{3i}^2 + 2T_{4i}^2 + T_{5i}^2,
$$
then 
$$
	Z:= \log \left( \frac{\pi(y)q(y,x)}{\pi(x)q(x,y)} \right) \xrightarrow{\mathcal{D}} \normal(\mu,\sigma^2) \qquad \mbox{as $d \rightarrow \infty$}
$$
and the expected acceptance probability of \eqref{MH0} satisfies
$$	
	\mathrm{E}[\alpha(x,y)] = \mathrm{E}[1 \wedge \mathrm{e}^Z] \rightarrow \Phi(\tsfrac{\mu}{\sigma}) + \mathrm{e}^{\mu + \sigma^2/2} \Phi(-\sigma - \tsfrac{\mu}{\sigma})
	\qquad \mbox{as $d \rightarrow \infty$.}
$$
\end{theorem}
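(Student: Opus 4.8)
The plan is to diagonalise, write $Z$ as a sum of independent scalar contributions, recognise each one as a quadratic polynomial in two standard normals with coefficients given by the $T_{ji}$, then apply a Lyapunov central limit theorem and finish with an elementary Gaussian integral. By Lemma~\ref{lem2} together with Lemma~\ref{lem:orthog}, the ratio $\pi(y)q(y,x)/(\pi(x)q(x,y))$, hence $\alpha(x,y)$ and the equilibrium law of $Z$, is unchanged by the orthogonal change of variables $\hat x = Q^T x$, so I may assume $A=\Lambda$ with $G$, $\Sigma$, $\mathcal{A}$ all diagonal, $x_i \sim \normal(\mu_i,\lambda_i^{-2})$ independent in equilibrium, and $\nu_i \sim \normal(0,\Sigma_i)$ independent of everything. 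Then Lemma~\ref{lem1} gives $Z = \sum_{i=1}^d Z_i$, where
$$
	Z_i = -\tsfrac{1}{2}(\lambda_i^2-\tilde\lambda_i^2)y_i^2 + \tsfrac{1}{2}(\lambda_i^2-\tilde\lambda_i^2)x_i^2 + (b_i-\beta_i)(y_i-x_i)
$$
depends only on $(x_i,\nu_i)$, so the $Z_i$ are mutually independent. Substituting $y_i = G_i x_i + g_i + \nu_i$, $x_i = \mu_i + \lambda_i^{-1}\zeta_i$, $\nu_i = \Sigma_i^{1/2}\eta_i$ with $\zeta_i,\eta_i \iid \normal(0,1)$, and using the equilibrium identities $\Sigma_i = \hat g_i/\tilde\lambda_i^2$, $g_i = \tilde g_i\tilde\mu_i$, $\beta_i = \tilde\lambda_i^2\tilde\mu_i$ that follow from Theorem~\ref{thm:2.1} and Corollary~\ref{lem equiv}, each $Z_i$ becomes a quadratic polynomial $Z_i = c_{0i} + c_{1i}\zeta_i + c_{2i}\eta_i + c_{3i}\zeta_i^2 + c_{4i}\eta_i^2 + c_{5i}\zeta_i\eta_i$.

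A direct expansion and simplification of these coefficients in terms of $\tilde g_i,\hat g_i,r_i,\tilde r_i,\hat r_i$ gives $c_{0i}=T_{0i}$, $c_{3i}=T_{3i}$, $c_{4i}=T_{4i}$, $c_{5i}=T_{5i}$, and $c_{ji}=\pm T_{ji}$ for $j=1,2$. Since $\zeta_i,\eta_i$ are independent standard normals, the monomials $\zeta_i,\eta_i,\zeta_i^2-1,\eta_i^2-1,\zeta_i\eta_i$ are pairwise uncorrelated with variances $1,1,2,2,1$ respectively; hence $\mathrm{E}[Z_i]=c_{0i}+c_{3i}+c_{4i}=\mu_{d,i}$ and $\mathrm{Var}(Z_i)=c_{1i}^2+c_{2i}^2+2c_{3i}^2+2c_{4i}^2+c_{5i}^2=\sigma_{d,i}^2$. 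By hypothesis, $\mathrm{E}[Z]=\sum_i\mu_{d,i}\to\mu$ and $\mathrm{Var}(Z)=\sum_i\sigma_{d,i}^2\to\sigma^2$.

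Assume first $\sigma>0$. Since each $Z_i-\mathrm{E}[Z_i]$ is a fixed linear combination of $\zeta_i,\eta_i,\zeta_i^2-1,\eta_i^2-1,\zeta_i\eta_i$, the $L^{2+\delta}$ triangle inequality and a power-mean bound give an absolute constant $C_\delta$ with $\mathrm{E}|Z_i-\mathrm{E}Z_i|^{2+\delta}\le C_\delta\sum_{j=1}^5|T_{ji}|^{2+\delta}$. Each $\sum_i T_{ji}^2$, $j=1,\dots,5$, is bounded (being dominated by $\sum_i\sigma_{d,i}^2\to\sigma^2$), so \eqref{eq Tcond} forces $\sum_i|T_{ji}|^{2+\delta}\to0$; dividing $\sum_i\mathrm{E}|Z_i-\mathrm{E}Z_i|^{2+\delta}$ by $\mathrm{Var}(Z)^{1+\delta/2}\to\sigma^{2+\delta}>0$ then verifies the Lyapunov condition for the triangular array $\{Z_i-\mathrm{E}Z_i\}$. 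Hence $(Z-\mathrm{E}Z)/\sqrt{\mathrm{Var}(Z)}\xrightarrow{\mathcal{D}}\normal(0,1)$, and combined with $\mathrm{E}Z\to\mu$, $\mathrm{Var}(Z)\to\sigma^2$ this yields $Z\xrightarrow{\mathcal{D}}\normal(\mu,\sigma^2)$. (If $\sigma=0$, then $Z-\mathrm{E}Z\to0$ in $L^2$, so $Z\to\mu$ in probability, the same statement with degenerate limit.) Since $t\mapsto1\wedge\mathrm{e}^t$ is bounded and continuous, the portmanteau theorem gives $\mathrm{E}[1\wedge\mathrm{e}^Z]\to\mathrm{E}[1\wedge\mathrm{e}^N]$ with $N\sim\normal(\mu,\sigma^2)$; writing $\mathrm{E}[1\wedge\mathrm{e}^N]=\mathrm{P}(N>0)+\mathrm{E}[\mathrm{e}^N\mathbf{1}_{N\le0}]$ and completing the square in the second term evaluates these as $\Phi(\mu/\sigma)$ and $\mathrm{e}^{\mu+\sigma^2/2}\Phi(-\sigma-\mu/\sigma)$, giving the claimed limit (read as its $\sigma\to0^+$ limit in the degenerate case).

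I expect the main obstacle to be the coefficient bookkeeping in the second paragraph: expanding each $Z_i$ explicitly in $(\zeta_i,\eta_i)$, collapsing the raw coefficients into the compact forms $T_{0i},\dots,T_{5i}$, and checking in particular that the mean contribution is exactly $T_{3i}+T_{4i}$ (the cross term $\zeta_i\eta_i$ contributing nothing to it) and that no further monomials survive. The other ingredients — the uniform $L^{2+\delta}$ bound on the centred $Z_i$, the Lyapunov CLT, and the Gaussian integral for $\mathrm{E}[1\wedge\mathrm{e}^N]$ — are routine.
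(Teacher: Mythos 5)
Your proposal is correct and follows essentially the same route as the paper's proof: diagonalize via Lemma~\ref{lem2}, expand each independent summand $Z_{d,i}$ as a quadratic in two standard normals with coefficients $T_{0i},\dotsc,T_{5i}$, verify the Lyapunov condition using \eqref{eq Tcond}, and finish with the Gaussian identity for $\mathrm{E}[1\wedge\mathrm{e}^N]$. The only (immaterial) differences are that you verify Lyapunov by noting $\sum_i T_{ji}^2$ is dominated by $\sum_i\sigma_{d,i}^2$ and dividing by $\sigma^{2+\delta}>0$ (treating $\sigma=0$ separately), whereas the paper bounds the Lyapunov ratio directly by the ratios in \eqref{eq Tcond}, and that you make the portmanteau step explicit.
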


The proof of Theorem~\ref{thm accept} is in the appendix.  Recall that quantities $\lambda_i$, $\mu_i$, $G_i$, $\tilde{\lambda}_i$, $\tilde{\mu}_i$, $r_i$, $\tilde{r}_i$, $\hat{r}_i$, and hence $T_{ji}$, $\mu_{d,i}$ and $\sigma_{d,i}$, may depend on $d$.

A strictly positive expected acceptance probability is a necessary (but not sufficient) condition for an efficient proposal (except in pathological cases).

\begin{corollary}
\label{cor:accept}
In Theorem \ref{thm accept}, if $\sigma$ is finite and $\mu > -\infty$ then $\lim_{d\rightarrow\infty} \mathrm{E}[\alpha(x,y)] > 0$.
\end{corollary}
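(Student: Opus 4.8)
The plan is to read off the limiting acceptance probability from Theorem~\ref{thm accept} and show it is strictly positive under the stated hypotheses. By Theorem~\ref{thm accept}, provided the Lyapunov-type condition \eqref{eq Tcond} holds and the limits $\mu = \lim_{d\to\infty}\sum_i \mu_{d,i}$ and $\sigma^2 = \lim_{d\to\infty}\sum_i \sigma_{d,i}^2$ exist, we have
$$
	\lim_{d\to\infty}\mathrm{E}[\alpha(x,y)] = \Phi(\tsfrac{\mu}{\sigma}) + \mathrm{e}^{\mu+\sigma^2/2}\,\Phi(-\sigma - \tsfrac{\mu}{\sigma}).
$$
So the entire task reduces to verifying that the right-hand side is $>0$ whenever $\sigma < \infty$ and $\mu > -\infty$. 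The two summands are each nonnegative (being products of $\Phi$-values, which lie in $[0,1]$, with $\mathrm{e}^{\mu+\sigma^2/2}\geq 0$), so it suffices to show that at least one of them is strictly positive.

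First I would dispose of the degenerate case $\sigma = 0$: then $Z \xrightarrow{\mathcal D} \mu$ is a point mass, $\mathrm{E}[1\wedge\mathrm{e}^Z]\to 1\wedge\mathrm{e}^\mu$, which is strictly positive since $\mu > -\infty$ (equivalently, one should check the limiting formula is interpreted consistently in this boundary case, or simply invoke $\mathrm{E}[1\wedge\mathrm{e}^Z]\geq \mathrm{E}[\mathrm{e}^Z\mathbf 1_{\{Z\leq 0\}}]>0$ directly from the convergence in distribution together with uniform integrability already established in the proof of Theorem~\ref{thm accept}). For the main case $0 < \sigma < \infty$: since $\mu > -\infty$ and $\sigma$ is finite, the argument $\mu/\sigma$ is a finite real number, hence $\Phi(\mu/\sigma) > 0$ because the standard normal CDF is strictly positive on all of $\bbR$. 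Therefore the first summand alone gives $\lim_{d\to\infty}\mathrm{E}[\alpha(x,y)] \geq \Phi(\mu/\sigma) > 0$, and the second (nonnegative) term only helps.

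There is essentially no obstacle here beyond bookkeeping: the corollary is an immediate consequence of the strict positivity of $\Phi$ on $\bbR$ applied to the closed-form limit in Theorem~\ref{thm accept}. The only point requiring a little care is the boundary value $\sigma = 0$ (if one wishes to allow it), where the formula should be read as a limit $\sigma\downarrow 0$ giving $1\wedge\mathrm{e}^\mu$; alternatively one restricts attention to $\sigma>0$, which is the generic situation. I would also remark that the hypothesis \eqref{eq Tcond} is implicitly in force since we are invoking Theorem~\ref{thm accept}, so no additional conditions are needed.
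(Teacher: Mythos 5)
Your proof is correct and follows the route the paper intends: the corollary is stated without proof precisely because it is immediate from the closed-form limit in Theorem~\ref{thm accept}, namely that $\Phi$ is strictly positive on all of $\bbR$, so $\Phi(\mu/\sigma)>0$ whenever $\mu/\sigma$ is a finite real number, and the second summand is nonnegative. Your extra care with the degenerate case $\sigma=0$ (where the formula must be read as the limit $1\wedge\mathrm{e}^{\mu}>0$) is a worthwhile addition that the paper leaves implicit.
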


\begin{lemma}
\label{lem accept}
With the same definitions as Theorem \ref{thm accept}, if 
\begin{equation}
\label{eq:cor1}
	\lim_{d\to\infty} \sum_{i=1}^d \frac{(\lambda_i^2-\tilde{\lambda}_i^2)^2}{\lambda_i^2 \tilde{\lambda}_i^2}(1-G_i^2) \quad \mbox{and} \quad \lim_{d\to\infty} \sum_{i=1}^d \hat{r}_i^2 \lambda_i^2 (1-G_i)
\end{equation}
are bounded and if 
\begin{equation}
\label{eq:cor2}
\lim_{d\to\infty} \sum_{i=1}^d \hat{r}_i^2 (\lambda_i^2-\tilde{\lambda}_i^2) (1-G_i^2)
\end{equation}
is bounded below, then $\mu > -\infty$.  
\end{lemma}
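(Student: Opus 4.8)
The plan is to show that $\mu = \lim_{d\to\infty}\sum_{i=1}^d \mu_{d,i}$ with $\mu_{d,i} = T_{0i}+T_{3i}+T_{4i}$ is bounded below by re-expressing this sum in terms of the three quantities appearing in \eqref{eq:cor1} and \eqref{eq:cor2}. First I would substitute the definitions from the notation section: $T_{3i} = \tfrac12 r_i\hat g_i$, $T_{4i} = -\tfrac12 r_i\tilde r_i\hat g_i$, and $T_{0i} = \hat r_i^2\lambda_i^2(\tfrac12 r_i\hat g_i - \tilde g_i)$. Using $r_i = (\lambda_i^2-\tilde\lambda_i^2)/\lambda_i^2$, $\tilde r_i = \lambda_i^2/\tilde\lambda_i^2$, and $\hat g_i = 1-G_i^2$, the combination $T_{3i}+T_{4i} = \tfrac12 r_i\hat g_i(1-\tilde r_i) = -\tfrac12 \dfrac{(\lambda_i^2-\tilde\lambda_i^2)^2}{\lambda_i^2\tilde\lambda_i^2}(1-G_i^2)$, which is exactly $-\tfrac12$ times the $i$-th summand of the first limit in \eqref{eq:cor1}.

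Next I would split $T_{0i} = \tfrac12\hat r_i^2\lambda_i^2 r_i\hat g_i - \hat r_i^2\lambda_i^2\tilde g_i$. The second piece is $-\hat r_i^2\lambda_i^2(1-G_i)$, the summand of the second limit in \eqref{eq:cor1}. For the first piece, $\tfrac12\hat r_i^2\lambda_i^2 r_i\hat g_i = \tfrac12\hat r_i^2\lambda_i^2\cdot\dfrac{\lambda_i^2-\tilde\lambda_i^2}{\lambda_i^2}(1-G_i^2) = \tfrac12\hat r_i^2(\lambda_i^2-\tilde\lambda_i^2)(1-G_i^2)$, which is exactly $\tfrac12$ times the summand of \eqref{eq:cor2}. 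Therefore
$$
\sum_{i=1}^d \mu_{d,i} = -\tfrac12\sum_{i=1}^d \frac{(\lambda_i^2-\tilde\lambda_i^2)^2}{\lambda_i^2\tilde\lambda_i^2}(1-G_i^2) - \sum_{i=1}^d \hat r_i^2\lambda_i^2(1-G_i) + \tfrac12\sum_{i=1}^d \hat r_i^2(\lambda_i^2-\tilde\lambda_i^2)(1-G_i^2).
$$
Given the hypotheses, the first two sums converge (hence are bounded above and below) and the third is bounded below; so the right-hand side is bounded below, and taking $d\to\infty$ gives $\mu > -\infty$.

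The only genuine obstacle is the bookkeeping in the algebraic identity above — verifying that the stray factors of $\lambda_i^2$, $\tilde\lambda_i^2$ and $\hat r_i^2$ cancel correctly so that the three resulting sums match \eqref{eq:cor1}–\eqref{eq:cor2} exactly, with the right signs and constants. There is a subtlety worth a remark: the hypotheses control $\sum\mu_{d,i}$ but say nothing directly about $\sigma^2$ or the Lindeberg-type condition \eqref{eq Tcond}, so this lemma should be read purely as a sufficient condition for the ``$\mu>-\infty$'' clause feeding into Corollary~\ref{cor:accept}; the existence of the limit $\mu$ itself (as opposed to mere boundedness below) would additionally require the third sum in \eqref{eq:cor2} to converge, not merely be bounded below, but the statement only claims $\mu>-\infty$, consistent with $\mu$ possibly being $+\infty$ if that third sum diverges.
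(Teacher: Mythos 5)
Your proof is correct and follows exactly the same route as the paper's: the paper simply states that $\mu$ equals the sum of the three terms $-\tfrac12\tfrac{(\lambda_i^2-\tilde\lambda_i^2)^2}{\lambda_i^2\tilde\lambda_i^2}(1-G_i^2)-\hat r_i^2\lambda_i^2(1-G_i)+\tfrac12\hat r_i^2(\lambda_i^2-\tilde\lambda_i^2)(1-G_i^2)$ and invokes $1-G_i^2>0$, $1-G_i>0$ from $\rho(G)<1$, while you have supplied the (correct) algebra behind that identity. The signs and constants in your decomposition match the paper's displayed expression for $\mu$ exactly, so there is nothing to add.
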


\begin{proof}
This result follows from the definition of $\mu$, which is equal to
$$
	\lim_{d \rightarrow \infty} \sum_{i=1}^d \left( -\frac{1}{2} \frac{(\lambda_i^2-\tilde{\lambda}_i^2)^2}{\lambda_i^2 \tilde{\lambda}_i^2}(1-G_i^2) - \hat{r}_i^2 \lambda_i^2 (1-G_i) + \frac{1}{2} \hat{r}_i^2 (\lambda_i^2-\tilde{\lambda}_i^2) (1-G_i^2)  \right),
$$
and our assumption that $\rho(G)<1$ so that $1-G_i^2>0$ and $1-G_i>0$.
\end{proof}

The terms in \eqref{eq:cor1} and \eqref{eq:cor2} provide the sense in which close agreement between the proposal equilibrium distribution $\normal(\mathcal{A}^{-1}\beta,\mathcal{A}^{-1})$ and target distribution $\normal(A^{-1}b,A^{-1})$, and gap between the spectrum of $G$ and $1$, imply a positive expected acceptance probability.

Note that \eqref{eq:cor2} is positive if $\lambda_i^2 > \tilde{\lambda}_i^2$, $\forall i$, and thus bounded below by zero.  This occurs when the proposal equilibrium distribution has greater variance (in every eigenvector direction) than the target Gaussian.  Thus, if $\hat{r}_i \neq 0$ then, adding the first term of \eqref{eq:cor1} to \eqref{eq:cor2}, we find that a proposal equilibrium distribution with greater variance than the target Gaussian increases the expected acceptance probability if the additional variance is not too great, i.e., if $(\lambda_i^2-\tilde{\lambda}_i^2)\lambda_i^{-2} \tilde{\lambda}_i^{-2} < \hat{r}_i$, $\forall i$.

Note also that the second term in \eqref{eq:cor1}, and \eqref{eq:cor2} are both zero if the means of the proposal equilibrium distribution and target agree, i.e., $\hat{r}_i=0$, $\forall i$.  

\subsection{Expected squared jump size for a Gaussian target}
\label{sec jumpsize}

The following theorem generalizes \cite[Prop. 3.8]{BPRSS2010}.

\begin{theorem}
\label{thm jumpsize}
Suppose that $G$ and $\Sigma$ are functions of $A$, and the Markov chain induced by~\eqref{MH0} is in equilibrium, i.e., $x \sim \normal(A^{-1}b,A^{-1})$.
With $\mu_{d,i}$ and $\sigma_{d,i}^2$ defined as in Theorem~\ref{thm accept}, let $q_i$ be a normalized eigenvector of $A$ corresponding to $\lambda_i^2$.  If there exists a $\delta > 0$ such that \eqref{eq Tcond} is satisfied, and $\mu^- := \lim_{d \rightarrow \infty} \sum_{j=1,j\neq i}^d \mu_{d,j}$ and $(\sigma^-)^2 := \lim_{d \rightarrow \infty} \sum_{j=1,j\neq i}^d \sigma_{d,j}^2$ exist, then \eqref{MH0} satisfies
\begin{equation}
\label{eq jump1}
	\mathrm{E}[(q_i^T (x'-x))^2]  = U_1 U_2 + E_3 + o(U_1) 
\end{equation}
as $d \rightarrow \infty$ where $|E_3| \leq U_3$,
\begin{align*}
	U_1 &= \tilde{g}_i^2 \hat{r}_i^2 + \frac{\tilde{g}_i^2}{\lambda_i^2} + \frac{\hat{g}_i}{\tilde{\lambda}_i^2}, \\
	U_2 &= \mathrm{E}[1 \wedge \mathrm{e}^{X}] = \Phi(\tsfrac{\mu^-}{\sigma^-}) + \mathrm{e}^{\mu^- + (\sigma^-)^2/2} \Phi(-\sigma^- - \tsfrac{\mu^-}{\sigma^-}), \; X \sim N(\mu^-,(\sigma^-)^2),\\
	U_3 &= ( \sigma_{d,i}^2 + \mu_{d,i}^2 )^{1/2}  
	\times \left( \tilde{g}_i^4 \hat{r}_i^4 + \frac{3}{\lambda_i^4} (\tilde{g}_i^2 + \tilde{r}_i \hat{g}_i)^2 + \frac{6}{\lambda_i^2} \hat{r}_i^2 \tilde{g}_i^2 ( \tilde{g}_i^2 + \tilde{r}_i \hat{g}_i) \right)^{1/2}.
\end{align*}
\end{theorem}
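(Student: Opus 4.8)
The plan is to push everything through the diagonalized algorithm \eqref{MH1}. By Lemma \ref{lem2} together with Lemma \ref{lem:orthog} the transformation $\hat x = Q^T x$ turns \eqref{MH0} into \eqref{MH1}; since $q_i$ is an eigenvector of $A = Q\Lambda Q^T$, it is carried to a standard basis vector, so $q_i^T(x'-x) = \hat x_i' - \hat x_i$ and it suffices to evaluate $\mathrm{E}[(x_i'-x_i)^2]$ for \eqref{MH1}. Because the chain moves to the proposal $y$ with probability $\alpha(x,y)$ and stays at $x$ otherwise (and the accept indicator squared equals itself), conditioning on $(x,y)$ gives $\mathrm{E}[(x_i'-x_i)^2] = \mathrm{E}[(y_i-x_i)^2\,\alpha(x,y)]$.

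First I would make the one-dimensional structure of \eqref{MH1} explicit. There $G = \operatorname{diag}(G_i)$ and, by Corollary \ref{lem equiv}, $\Sigma = \operatorname{diag}(\hat g_i/\tilde\lambda_i^2)$, while in equilibrium the coordinates $x_i \sim \normal(\mu_i,\lambda_i^{-2})$ are independent; using $g_i = \tilde g_i\tilde\mu_i$ (from $\beta = \mathcal A(I-G)^{-1}g$) one gets
$$
	y_i - x_i = -\tilde g_i(x_i - \tilde\mu_i) + \nu_i = -\tilde g_i\hat r_i \;-\; \frac{\tilde g_i}{\lambda_i}\,w \;+\; \frac{\hat g_i^{1/2}}{\tilde\lambda_i}\,z, \qquad w,z \iid \normal(0,1).
$$
A routine Gaussian moment computation then gives $\mathrm{E}[(y_i-x_i)^2] = U_1$ and shows that $\mathrm{E}[(y_i-x_i)^4]$ equals exactly the bracketed factor of $U_3$. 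By Lemma \ref{lem1}, $Z := \log(\pi(y)q(y,x)/(\pi(x)q(x,y))) = \sum_{j=1}^d \zeta_j$ where, in the diagonalizing coordinates, $\zeta_j = -\tsfrac{1}{2}(\lambda_j^2-\tilde\lambda_j^2)(y_j^2-x_j^2) + (b_j-\beta_j)(y_j-x_j)$ is a function of coordinate $j$ alone; in particular the $\zeta_j$ are independent, and — as established in the proof of Theorem \ref{thm accept} — $\mathrm{E}[\zeta_i] = \mu_{d,i}$ and $\mathrm{Var}[\zeta_i] = \sigma_{d,i}^2$, hence $\mathrm{E}[\zeta_i^2] = \mu_{d,i}^2 + \sigma_{d,i}^2$.

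Next I would separate the $i$-th coordinate: write $Z = \zeta_i + Z^{(-i)}$ with $Z^{(-i)} := \sum_{j\neq i}\zeta_j$, which is \emph{independent} of coordinate $i$, hence of both $(y_i-x_i)^2$ and $\zeta_i$. With $h(t) := \mathrm{E}[1\wedge \mathrm{e}^{t+Z^{(-i)}}]$, conditioning on coordinate $i$ gives
$$
	\mathrm{E}[(y_i-x_i)^2\,\alpha(x,y)] = \mathrm{E}[(y_i-x_i)^2\, h(\zeta_i)] = h(0)\,U_1 + E_3, \qquad E_3 := \mathrm{E}[(y_i-x_i)^2\,(h(\zeta_i)-h(0))].
$$
Since $s\mapsto 1\wedge \mathrm{e}^s$ is $1$-Lipschitz we have $|h(\zeta_i)-h(0)|\leq |\zeta_i|$, so by Cauchy--Schwarz $|E_3| \leq \mathrm{E}[(y_i-x_i)^4]^{1/2}\,\mathrm{E}[\zeta_i^2]^{1/2} = U_3$. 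Finally, $Z^{(-i)}$ satisfies the hypotheses of Theorem \ref{thm accept} for the index set $\{1,\dots,d\}\setminus\{i\}$ — condition \eqref{eq Tcond} is inherited because those hypotheses force $\max_i|T_{ji}| \to 0$, and $\mu^-,(\sigma^-)^2$ exist by assumption — so $Z^{(-i)} \xrightarrow{\mathcal{D}} \normal(\mu^-,(\sigma^-)^2)$ and, $1\wedge \mathrm{e}^s$ being bounded and continuous, $h(0) = \mathrm{E}[1\wedge\mathrm{e}^{Z^{(-i)}}] \to U_2$. Therefore $h(0)\,U_1 = U_1 U_2 + (h(0)-U_2)U_1 = U_1 U_2 + \mathrm{o}(U_1)$, which combined with the display above yields \eqref{eq jump1}.

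The work needing care is essentially bookkeeping rather than anything deep: one must check that $\mathrm{E}[(y_i-x_i)^4]$ matches the bracket in $U_3$ \emph{on the nose} (using $\tilde g_i^2 = \lambda_i^2(\tilde g_i/\lambda_i)^2$ and $\tilde r_i\hat g_i = \lambda_i^2(\hat g_i^{1/2}/\tilde\lambda_i)^2$, so that $\lambda_i^{-2}(\tilde g_i^2 + \tilde r_i\hat g_i)$ is the variance of $y_i-x_i$), that $\mathrm{E}[\zeta_i] = \mu_{d,i}$ and $\mathrm{Var}[\zeta_i] = \sigma_{d,i}^2$ are precisely the quantities assembled in Theorem \ref{thm accept}, and that deleting one summand from the sums in \eqref{eq Tcond} and in the definitions of $\mu$ and $\sigma^2$ leaves the central-limit argument of Theorem \ref{thm accept} intact; the degenerate subcases $\sigma^- = 0$ (where $Z^{(-i)}$ converges to the constant $\mu^-$) and $U_1 = 0$ should be noted separately.
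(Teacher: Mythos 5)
Your proposal is correct and follows essentially the same route as the paper: diagonalize via $\hat{x}=Q^Tx$, write $\mathrm{E}[(x_i'-x_i)^2]=\mathrm{E}[(y_i-x_i)^2\alpha(x,y)]$, split off the $i$-th summand of $Z$, factor the main term by independence to get $U_1\mathrm{E}[\alpha^-(x,y)]\to U_1U_2$, and bound the remainder by Cauchy--Schwarz together with the $1$-Lipschitz property of $s\mapsto 1\wedge\mathrm{e}^s$. Your $h(0)U_1+E_3$ decomposition via conditioning on coordinate $i$ is exactly the paper's $\mathrm{E}[(y_i-x_i)^2\alpha^-]+\mathrm{E}[(\alpha-\alpha^-)(y_i-x_i)^2]$ in different notation, and the step you flag about deleting one summand from \eqref{eq Tcond} is precisely the paper's Lemma \ref{lem 5} (note it gives smallness of $T_{ji}^2$ relative to $\sum_i T_{ji}^2$, which is what is actually needed, rather than $\max_i|T_{ji}|\to 0$ outright).
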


As discussed in the introduction, large expected squared jump size implies small first-order autocorrelation, which is a desirable property for a method to be efficient.  The following lemma is a special case of Theorem \ref{thm jumpsize} and provides a sufficient condition for a strictly positive expected jump size as $d \rightarrow \infty$, when the target is Gaussian.

\begin{lemma}
\label{lem jump}
With the same assumptions as Theorems \ref{thm accept} and \ref{thm jumpsize}, if $\mu_{d,i}$ and $\sigma_{d,i} \rightarrow 0$ as $d \rightarrow \infty$, then, as $d \rightarrow \infty$, \eqref{MH0} satisfies
\begin{equation}
\label{eq:lem1}
	\mathrm{E}[(q_i^T (x'-x))^2] - \left(\frac{1+G_i}{\tilde{\lambda}_i^2} + \frac{(1-G_i)}{\lambda_i^2} +  \hat{r}_i^2 \right) (1-G_i) \mathrm{E}[\alpha(x,y)] \rightarrow 0 
\end{equation}

In particular, if the conditions for Lemma \ref{lem accept} are met and $\lim_{d\rightarrow\infty} 1-G_i > 0$ then $\lim_{d\rightarrow\infty} \mathrm{E}[(q_i^T (x'-x))^2] >0$.  
\end{lemma}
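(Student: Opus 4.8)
The plan is to derive the lemma from Theorem~\ref{thm jumpsize} by checking that, under the stronger hypothesis $\mu_{d,i},\sigma_{d,i}\to 0$, every error term in \eqref{eq jump1} is negligible and the factor $U_2$ there may be replaced by $\mathrm{E}[\alpha(x,y)]$. First I would record the elementary consequences of the hypothesis: since $\sigma_{d,i}^2=T_{1i}^2+T_{2i}^2+2T_{3i}^2+2T_{4i}^2+T_{5i}^2$ is a sum of squares, $\sigma_{d,i}\to0$ forces $T_{ji}\to0$ for $j=1,\dots,5$, and then $\mu_{d,i}=T_{0i}+T_{3i}+T_{4i}\to0$ forces $T_{0i}\to0$, so $\mu_{d,i}^2+\sigma_{d,i}^2\to0$. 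Next I would bound the error in \eqref{eq jump1}: using $\tilde r_i/\lambda_i^2=1/\tilde\lambda_i^2$ one has $\lambda_i^{-2}(\tilde g_i^2+\tilde r_i\hat g_i)=\tilde g_i^2/\lambda_i^2+\hat g_i/\tilde\lambda_i^2$, so writing $a:=\tilde g_i^2\hat r_i^2$ and $c:=\tilde g_i^2/\lambda_i^2+\hat g_i/\tilde\lambda_i^2$ (both nonnegative, with $U_1=a+c$) the bracketed factor of $U_3$ equals $(a^2+3c^2+6ac)^{1/2}\le\sqrt{3}\,(a+c)=\sqrt{3}\,U_1$. Hence $U_3\le\sqrt{3}\,(\mu_{d,i}^2+\sigma_{d,i}^2)^{1/2}U_1=\mathrm{o}(U_1)$ and $E_3=\mathrm{o}(U_1)$, so \eqref{eq jump1} reduces to $\mathrm{E}[(q_i^T(x'-x))^2]=U_1U_2+\mathrm{o}(U_1)$.

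Then I would identify $U_2$ with the limiting acceptance probability: because $\mu_{d,i},\sigma_{d,i}\to 0$, the constants $\mu,\sigma^2$ of Theorem~\ref{thm accept} coincide with the constants $\mu^-,(\sigma^-)^2$ of Theorem~\ref{thm jumpsize}, so Theorem~\ref{thm accept} gives $\mathrm{E}[\alpha(x,y)]\to U_2$, i.e. $U_2=\mathrm{E}[\alpha(x,y)]+\mathrm{o}(1)$. Finally I would reconcile $U_1$ with the coefficient in \eqref{eq:lem1}: expanding, $(1-G_i)\bigl(\tfrac{1+G_i}{\tilde\lambda_i^2}+\tfrac{1-G_i}{\lambda_i^2}+\hat r_i^2\bigr)=\hat g_i/\tilde\lambda_i^2+\tilde g_i^2/\lambda_i^2+\tilde g_i\hat r_i^2=U_1+G_i\tilde g_i\hat r_i^2$, so it remains to show $G_i\tilde g_i\hat r_i^2\to0$ and that $U_1$ stays bounded. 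I would split on whether $1-G_i\to0$: in that case $\tilde g_i,\hat g_i\to0$ makes both $U_1$ and the coefficient tend to $0$ and \eqref{eq:lem1} holds trivially; otherwise $\tilde g_i$ stays bounded below, and since $r_i\hat g_i=2T_{3i}\to0$ the factor $(r_i\hat g_i-\tilde g_i)^2$ stays bounded below, so $\hat r_i^2\lambda_i^2=T_{1i}^2/(r_i\hat g_i-\tilde g_i)^2\to0$, which with the eigenvalues of $A$ bounded away from $0$ gives both $G_i\tilde g_i\hat r_i^2=\mathrm{o}(1)$ and boundedness of $U_1=\mathrm{o}(1)+\tilde g_i^2/\lambda_i^2+\hat g_i/\tilde\lambda_i^2$. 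Combining, $\mathrm{E}[(q_i^T(x'-x))^2]-(1-G_i)\bigl(\tfrac{1+G_i}{\tilde\lambda_i^2}+\tfrac{1-G_i}{\lambda_i^2}+\hat r_i^2\bigr)\mathrm{E}[\alpha(x,y)]=U_1\bigl(U_2-\mathrm{E}[\alpha]\bigr)+\mathrm{o}(U_1)-\mathrm{o}(1)\,\mathrm{E}[\alpha]\to 0$. For the last assertion, Lemma~\ref{lem accept} gives $\mu>-\infty$ and $\sigma$ is finite by the hypotheses of Theorem~\ref{thm accept}, hence $\lim\mathrm{E}[\alpha(x,y)]>0$ by Corollary~\ref{cor:accept}; if moreover $\lim(1-G_i)>0$ then the coefficient in \eqref{eq:lem1} is at least $(1-G_i)^2/\lambda_i^2$, bounded below away from $0$, so $\lim\mathrm{E}[(q_i^T(x'-x))^2]>0$.

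The main obstacle is the error control: the algebraic inequality bounding $U_3$ by a fixed multiple of $(\mu_{d,i}^2+\sigma_{d,i}^2)^{1/2}U_1$, and the passage extracting $\hat r_i^2\lambda_i^2\to0$ from the hypotheses, which simultaneously disposes of the coefficient discrepancy $G_i\tilde g_i\hat r_i^2$ and yields boundedness of $U_1$; the remaining steps are routine bookkeeping with the $\mathrm{o}(\cdot)$ symbols.
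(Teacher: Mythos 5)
The paper offers no proof of this lemma beyond the remark that it is ``a special case of Theorem~\ref{thm jumpsize}'', and your proposal executes exactly that intended reduction. The two substantive computations are correct and well done: the observation that $\tilde r_i\hat g_i/\lambda_i^2=\hat g_i/\tilde\lambda_i^2$ turns the bracket in $U_3$ into $a^2+3c^2+6ac\le 3(a+c)^2=3U_1^2$ with $a=\tilde g_i^2\hat r_i^2$, $c=\tilde g_i^2/\lambda_i^2+\hat g_i/\tilde\lambda_i^2$, so that $E_3=\mathrm{o}(U_1)$ once $\mu_{d,i}^2+\sigma_{d,i}^2\to0$; and the identification $\mu=\mu^-$, $\sigma=\sigma^-$ giving $U_2=\mathrm{E}[\alpha(x,y)]+\mathrm{o}(1)$. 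You also correctly notice something the paper glosses over: the coefficient in \eqref{eq:lem1} expands to $\hat g_i/\tilde\lambda_i^2+\tilde g_i^2/\lambda_i^2+\tilde g_i\hat r_i^2=U_1+G_i\tilde g_i\hat r_i^2$, i.e.\ it is \emph{not} equal to $U_1=\mathrm{E}[(y_i-x_i)^2]$ unless $\hat r_i=0$ (as it is in every example in the paper), so an extra argument is genuinely needed there.

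The gap is in how you dispose of that residual and of the terms $\mathrm{o}(U_1)$ and $U_1\cdot\mathrm{o}(1)$. Your case $1-G_i\to0$ asserts that $U_1$ and the coefficient tend to $0$, but $U_1$ contains $\tilde g_i^2\hat r_i^2$, $\tilde g_i^2/\lambda_i^2$ and $\hat g_i/\tilde\lambda_i^2$, and nothing in the hypotheses prevents $\hat r_i\to\infty$ or $\lambda_i,\tilde\lambda_i\to0$ (all of $\lambda_i,\tilde\lambda_i,\hat r_i,G_i$ may depend on $d$), so this branch is unproven as written. In the other branch you explicitly invoke ``eigenvalues of $A$ bounded away from $0$'' to pass from $\hat r_i^2\lambda_i^2\to0$ (which you derive correctly from $T_{1i}\to0$ and $T_{3i}\to0$) to $\hat r_i^2\to0$ and to boundedness of $U_1$; and your final step bounds the coefficient below by $(1-G_i)^2/\lambda_i^2$, which needs $\lambda_i$ bounded \emph{above}. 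None of these boundedness conditions appears among the stated hypotheses of Theorems~\ref{thm accept} and~\ref{thm jumpsize}. They do hold in every application in the paper (where $\lambda_i\asymp i^\kappa$ for the relevant fixed $i$, and $\hat r_i=0$ so the coefficient equals $U_1$ exactly), so your proof is correct under mild additional non-degeneracy assumptions that the lemma arguably needs anyway; but you should state them explicitly rather than smuggling them in, since as literally stated the lemma does not supply them.
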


Thus, in the special case when $\mu_{d,i}$ and $\sigma_{d,i} \rightarrow 0$ as $d \rightarrow \infty$, a strictly positive jump size requires a strictly positive expected acceptance probability \emph{and} $1-G_i$ bounded away from $0$.  For example, Lemma \ref{lem jump} can be applied to MALA with a Gaussian target which, after optimally tuning to maximize jump size, has a positive expected acceptance probability; however, the expected jump size goes to zero as $d \rightarrow \infty$ since $\lim_{d\rightarrow\infty} 1-G_i = 0$.  

When $\hat{r}_i \neq 0$ (the mean of the proposal equilibrium and the mean of the target are different) then the conflict between the action of the proposal (moving towards the mean of the proposal equilibrium) and the MH accept/reject step (favouring moves towards the mean of the target) increases the jump size.  This type of mixing may not increase overall efficiency.

Lemmas \ref{lem accept} and \ref{lem jump} show that $G_i \approx 1$ can imply small jumps in spite of a positive expected acceptance probability.  This is explained by noting that, in general, small jumps are accepted more frequently than large jumps.  A method with overly small jumps is not efficient.

Combining Lemmas \ref{lem accept} and \ref{lem jump} we see that an efficient method (with large jump size in all directions) will have close agreement between the eigenvalues of $\mathcal{A}$ and $A$ (the precision matrices of the proposal equilibrium and Gaussian target respectively) to maximize the expected acceptance probability, \emph{and} the spectrum of $G$ will be bounded away from $1$ to also maximize the expected squared jump size in every direction.  

Whether or not close agreement between the means of the proposal equilibrium and Gaussian target is desirable is less obvious from this theory and the precise effect of $\hat{r}_i$ depends on the values of $G_i$, $\lambda_i^2$ and $\tilde{\lambda}_i^2$.  Nevertheless, $\hat{r}_i= 0$, $\forall i$, does not preclude a method from having positive expected squared jump size.  

The terms in \eqref{eq:cor1}, \eqref{eq:cor2} and \eqref{eq:lem1} provide us with the sense in which an efficient method will have a proposal equilibrium distribution close to the target and the spectrum of $G$ bounded away from $1$.

A method for which the expected squared jump size is positive in the limit $d \rightarrow \infty$ is often referred to as dimension-independent because it requires only $\mathcal{O}(1)$ iterations to explore state space, once in equilibrium \cite{BS2009a,BPSSS2011,CRSW2013}.

\section{Non-Gaussian targets}
\label{sec nongaussian}

The results in Section \ref{sec gaussian} can be extended to non-Gaussian target distributions in 
some cases.  We follow the approach in \cite{BRS2009}.  

Consider the MH algorithm defined by
\begin{equation}
\label{MH0ng}
\begin{split}
	\mbox{Target:} & \qquad \pi_d \mbox{ defined by \eqref{eq target0} and \eqref{eq ref}}, \\
	\mbox{Proposal:} & \qquad y = G x + g + \nu, \qquad \mbox{where $\nu \iid \normal(0,\Sigma)$}. 
\end{split}
\end{equation}
The acceptance probability satisfies
$$
	\alpha(x,y) = 1 \wedge \exp\left( \phi_d(x)-\phi_d(y) + Z \right)
$$
where $Z = \log(\tsfrac{\tilde{\pi}_d(y) q(y,x)}{\tilde{\pi}_d(x) q(x,y)} )$.  Define $\tilde{\alpha}(x,y) = 1 \wedge \exp(Z)$ to be the acceptance probability for Algorithm~\eqref{MH0}.  We denote by $\mathrm{E}_{\pi_d}[\alpha(x,y)]$ the expectation of $\alpha(x,y)$ over $x \sim \pi_d$ and $y$ from \eqref{MH0ng}.  Similarly, $\mathrm{E}_{\tilde{\pi}_d}[\tilde{\alpha}(x,y)]$ is the expected acceptance probability of~\eqref{MH0} in equilibrium.

We associate with the precision matrix $A$, of the  Gaussian reference measure $\tilde{\pi}$, the norm $| \cdot |_s$ on $\bbR^d$, for any $s \in \bbR$, defined by
$$
	| x |_s = | A^s x |, \qquad \forall x \in \bbR^d.
$$
If $\lambda_1^2$ is the smallest eigenvalue of $A$, then
\begin{equation}
\label{eq:normbound}
	|x|_s \leq \lambda_1^{2(s-r)} |x|_r \qquad \forall s<r.
\end{equation}

\begin{assumption}
\label{assumption0}
Suppose there exists $M>0$ such that 
$$
	|\phi_d(x)| \leq M, \qquad \forall x \in \bbR^d.
$$
\end{assumption}

\begin{assumption}
\label{assumption1}
Suppose there exist constants $m,s,s',s'' \in \bbR$, $C,p>0$, and a locally bounded function $\delta: \bbR^+ \times \bbR^+ \mapsto \bbR^+$ such that for all sufficiently large $d$ and $\forall x,y \in \bbR^d$
\begin{align*}
	\phi_d(x) &\geq m, \\
	|\phi_d(x) - \phi_d(y)| &\leq \delta( |x-A^{-1}b|_s,|y-A^{-1}b|_s) \, |x-y|_{s'}, \\
	| \phi_d(x) | &\leq C (1 + |x-A^{-1}b|_{s''}^p ).
\end{align*}
\end{assumption}

\begin{assumption}
\label{assumption2}
Suppose that $r \in \mathbb{R}$ is such that 
$$
	\lim_{d \rightarrow \infty} \sum_{i=1}^d \lambda_i^{4r-2} < \infty.
$$
\end{assumption}

\subsection{Expected acceptance probability for a non-Gaussian target}
\label{sec:acc rate ng}

The following theorem applies to inverse problems with a Gaussian prior and bounded likelihood, and is similar to \cite[Thm. 2]{BRS2009} for RWM and SLA.

\begin{theorem}
\label{thm:ng12}
Suppose $\phi_d$ satisfies Assumption \ref{assumption0}, then Algorithm~\eqref{MH0ng} in equilibrium satisfies
\begin{align}
	c \mathrm{E}_{\tilde{\pi}_d}[\tilde{\alpha}(x,y)] &\leq \mathrm{E}_{\pi_d}[\alpha(x,y)] \leq C \mathrm{E}_{\tilde{\pi}_d}[\tilde{\alpha}(x,y)] \label{ng1}\\
	0 & < \mathrm{E}_{\pi_d}[\alpha(x,y)] \qquad \mbox{if $\mathrm{E}_{\tilde{\pi}_d}[|Z|] < \infty$} \label{ng2}
\end{align}
for constants $c=\mathrm{e}^{-3M}$ and $C = \mathrm{e}^{3M}$.
\end{theorem}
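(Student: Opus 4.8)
The plan is to exploit Assumption~\ref{assumption0} to show that $\phi_d(x)-\phi_d(y)$ is a bounded perturbation of $Z$, and then transfer known properties of the Gaussian-target algorithm~\eqref{MH0} to the non-Gaussian algorithm~\eqref{MH0ng}. First I would record the two elementary facts that for any real numbers $a,b$ and any $u$ with $|u|\le 3M$ one has $e^{-3M}(1\wedge e^a) \le 1\wedge e^{a+u} \le e^{3M}(1\wedge e^a)$; this follows because $1\wedge e^{a+u} \le e^{|u|}(1\wedge e^a)$ and similarly in the other direction, using $1\wedge e^{s}$ is increasing in $s$ and $(1\wedge e^{s+t}) \ge e^{-|t|}(1\wedge e^s)$. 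Since $|\phi_d(x)-\phi_d(y)| \le 2M$ by Assumption~\ref{assumption0}, the exponent of $\alpha(x,y)$ differs from $Z$ by at most $2M \le 3M$, giving the pointwise bound
$$
	\mathrm{e}^{-3M}\, \tilde\alpha(x,y) \le \alpha(x,y) \le \mathrm{e}^{3M}\, \tilde\alpha(x,y)
	\qquad \forall x,y \in \bbR^d .
$$

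Next I would deal with the fact that the expectations on the two sides of~\eqref{ng1} are taken under \emph{different} state distributions: $\alpha$ is averaged over $x\sim\pi_d$, $\tilde\alpha$ over $x\sim\tilde\pi_d$. The bridge is the change-of-measure formula~\eqref{eq target0}: $\mathrm{d}\pi_d/\mathrm{d}\tilde\pi_d(x) = \exp(-\phi_d(x))$, and Assumption~\ref{assumption0} gives $\mathrm{e}^{-M} \le \exp(-\phi_d(x)) \le \mathrm{e}^{M}$. Writing $\mathrm{E}_{\pi_d}[\alpha(x,y)] = \mathrm{E}_{\tilde\pi_d}[\mathrm{e}^{-\phi_d(x)}\,\alpha(x,y)]$ (with $y$ generated from the same proposal kernel, which does not involve $\phi_d$) and combining with the pointwise bound above yields
$$
	\mathrm{e}^{-M}\mathrm{e}^{-3M}\,\mathrm{E}_{\tilde\pi_d}[\tilde\alpha(x,y)]
	\le \mathrm{E}_{\pi_d}[\alpha(x,y)]
	\le \mathrm{e}^{M}\mathrm{e}^{3M}\,\mathrm{E}_{\tilde\pi_d}[\tilde\alpha(x,y)] .
$$
This is slightly weaker than the stated constants $\mathrm{e}^{\pm 3M}$, so I would tighten the bookkeeping: the exponent of $\alpha$ is $\phi_d(x)-\phi_d(y)+Z$ and the extra factor $\mathrm{e}^{-\phi_d(x)}$ contributes another $\phi_d(x)$, so the total deviation from $Z$ inside the relevant monotone comparison is controlled by a combination that Assumption~\ref{assumption0} bounds by $3M$ (e.g.\ $|\phi_d(x)-\phi_d(y)| + |\phi_d(x)| \le 3M$ after the right grouping, noting $-\phi_d(x)$ appears in the measure and $\phi_d(x)$ in the acceptance exponent partly cancel). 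Carefully matching which $\phi_d$ terms cancel against the $\mathrm{e}^{-\phi_d(x)}$ Radon--Nikodym factor and which accumulate is the one point that needs attention; this is the main obstacle, though it is purely a matter of careful accounting rather than a substantive difficulty.

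Finally, for~\eqref{ng2} I would observe that $\tilde\alpha(x,y) = 1\wedge\mathrm{e}^{Z} \ge \mathrm{e}^{Z}$ trivially is not useful, but $1\wedge\mathrm{e}^Z > 0$ pointwise and, by Jensen or the elementary inequality $1\wedge\mathrm{e}^z \ge \mathrm{e}^{-|z|}$, we get $\mathrm{E}_{\tilde\pi_d}[\tilde\alpha(x,y)] \ge \mathrm{E}_{\tilde\pi_d}[\mathrm{e}^{-|Z|}] \ge \mathrm{e}^{-\mathrm{E}_{\tilde\pi_d}[|Z|]} > 0$ whenever $\mathrm{E}_{\tilde\pi_d}[|Z|] < \infty$, using Jensen's inequality for the convex function $t\mapsto \mathrm{e}^{-t}$. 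Then the left inequality in~\eqref{ng1} (already established, with constant $c = \mathrm{e}^{-3M} > 0$) immediately gives $\mathrm{E}_{\pi_d}[\alpha(x,y)] \ge c\,\mathrm{E}_{\tilde\pi_d}[\tilde\alpha(x,y)] > 0$, which is~\eqref{ng2}. No high-dimensional limit or CLT machinery is needed here — the whole argument is finite-dimensional and uniform in $d$, the boundedness of $\phi_d$ doing all the work.
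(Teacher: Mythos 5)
Your proof of \eqref{ng1} follows the same route as the paper: a pointwise comparison of the two acceptance probabilities via the monotone inequality $\mathrm{e}^{-|u|}(1\wedge \mathrm{e}^a)\leq 1\wedge \mathrm{e}^{a+u}\leq \mathrm{e}^{|u|}(1\wedge \mathrm{e}^a)$, combined with the density comparison $\mathrm{e}^{-M}\tilde{\pi}_d\leq\pi_d\leq \mathrm{e}^{M}\tilde{\pi}_d$. The only slip is self-inflicted: you weaken the tight pointwise bound $|\phi_d(x)-\phi_d(y)|\leq 2M$ to $3M$ before multiplying by the $\mathrm{e}^{\pm M}$ density factor, which is why you land on $\mathrm{e}^{\pm 4M}$. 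The ``careful accounting'' you worry about is not a cancellation between the $-\phi_d(x)$ in the measure and the $\phi_d(x)$ in the acceptance exponent (no such cancellation is available, since the acceptance exponent sits inside the nonlinearity $1\wedge\mathrm{e}^{(\cdot)}$); it is simply that $2M$ from the acceptance comparison plus $M$ from the density comparison already gives $3M$, exactly as in the paper's proof. So keep the $2M$ and you are done.

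For \eqref{ng2} you take a genuinely different and arguably cleaner route. The paper works directly with $\mathrm{E}_{\pi_d}[1\wedge\exp(\phi_d(x)-\phi_d(y)+Z)]$ and invokes the auxiliary inequality $\mathrm{E}[1\wedge\mathrm{e}^{X}]\geq \mathrm{e}^{-\gamma}(1-\gamma^{-1}\mathrm{E}[|X|])$ from \cite{BRS2009} with $\gamma=2C_0$, after checking $C_0=\mathrm{E}_{\pi_d}[|\phi_d(x)-\phi_d(y)+Z|]<\infty$. You instead lower-bound the Gaussian-target quantity via the elementary pointwise inequality $1\wedge\mathrm{e}^{z}\geq\mathrm{e}^{-|z|}$ and Jensen, obtaining $\mathrm{E}_{\tilde{\pi}_d}[\tilde{\alpha}]\geq\mathrm{e}^{-\mathrm{E}_{\tilde{\pi}_d}[|Z|]}>0$, and then feed this through the left inequality of \eqref{ng1}. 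Both arguments are valid; yours avoids importing the external lemma, gives the quantitatively sharper lower bound $\mathrm{e}^{-3M-\mathrm{E}_{\tilde{\pi}_d}[|Z|]}$ (versus the paper's $\tfrac{1}{2}\mathrm{e}^{-2C_0}$ after optimizing $\gamma$), and makes the logical dependence on \eqref{ng1} explicit. The hypothesis is used identically in both: finiteness of $\mathrm{E}_{\tilde{\pi}_d}[|Z|]$ is exactly what keeps the exponential lower bound strictly positive.
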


\begin{proof}  We follow the same reasoning as in the proof of \cite[Thm. 2]{BRS2009}.
Note that $\exp(-2M)(1 \wedge \exp(Z)) \leq 1 \wedge \exp( \phi_d(x)-\phi_d(y) + Z) \leq \exp(2M) (1 \wedge \exp(Z))$ and $\exp(-M) \tilde{\pi}_d(x) \leq \pi_d(x) \leq \exp(M) \tilde{\pi}_d(x)$.  Hence, we obtain~\eqref{ng1}.

To prove \eqref{ng2} first note for a random variable $X$ and any $\gamma > 0$ we have $\mathrm{E}[1 \wedge \exp(X)] \geq \exp(-\gamma)(1-\gamma^{-1} \mathrm{E}[|X|])$, see \cite[Lem. B1]{BRS2009}.  Also note that $C_0 := \mathrm{E}_{\pi_d}[|\phi_d(x)-\phi_d(y)+Z|] \leq C + C \mathrm{E}_{\tilde{\pi}_d}[|Z|] < \infty$.  Hence, we obtain \eqref{ng2} by taking $\gamma = 2 C_0$.
\end{proof}

Thus, in this weak sense, the expected acceptance probability of \eqref{MH0ng} with non-Gaussian target mimics the expected acceptance probability of \eqref{MH0} with a Gaussian target.  However, if $M$ is large Theorem \ref{thm:ng12} provides little useful information because the bounds in \eqref{ng1} are very loose.

The next theorem gives the acceptance probability for a non-Gaussian target more precisely. 

\begin{theorem}
\label{thm:ng1}
Suppose that $\phi_d$ and $r =\max \{ s,s',s'' \}$ satisfy Assumptions \ref{assumption1} and \ref{assumption2}, respectively.  Also suppose there exists $\{t_{d,i}\}$ and $t>0$ such that $t_{d,i} = \mathcal{O}(d^{-t})$ (uniformly in $i$) as $d \rightarrow \infty$ such that Algorithm~\eqref{MH0ng} satisfies
\begin{align*}
	&\mbox{$G$ and $\Sigma$ are functions of $A$}, \\
	&\tilde{g}_i^2\hat{r}_i^2 \lambda_i^2, \; \tilde{g}_i^2, \; \hat{g}_i \mbox{ are $\mathcal{O}(t_{d,i})$ (uniformly in $i$) as $d \rightarrow \infty$,} \\
	&\tilde{r}_i \mbox{ is bounded uniformly in $d$ and $i$}, \\
	&T_{1i}, T_{2i}, T_{3i}, T_{4i}, T_{5i} \mbox{ are } \mathcal{O}(d^{-1/2}) \mbox{ as $d \rightarrow \infty$ (uniformly in $i$), and} \\
	&\lim_{d\rightarrow\infty} \sum_{i=1}^d T_{1i}^2 + T_{2i}^2 + 2 T_{3i}^2 + 2 T_{4i}^2 + T_{5i}^2 < \infty.
\end{align*}
Let $q_i$ be a normalized eigenvector of $A$ corresponding to the eigenvalue $\lambda_i^2$, $\mu$ and $\sigma^2$ be as in Theorem~\ref{thm accept}, $\kappa_{d,i} = \mathrm{E}_{\pi_d}[q_i^T A^{1/2}(x - A^{-1}b)]$, and $\gamma_{d,i} = \mathrm{E}_{\pi_d}[(q_i^T A^{1/2}(x - A^{-1}b))^2]$.  If 
\begin{align*}
	\mu_{ng} &= \mu + \lim_{d \rightarrow \infty} \sum_{i=1}^d \kappa_{d,i} T_{1i} + T_{3i}(\gamma_i-1) \mbox{ and}  \\
	\sigma_{ng}^2 &= \sigma^2 + \lim_{d\rightarrow \infty} \sum_{i=1}^d  \left(\kappa_{d,i} T_{1i} + T_{3i} (\gamma_{d,i} - 1) \right)^2
\end{align*}
exist then
$$
	\mathrm{E}_{\pi_d}[\alpha(x,y)] \rightarrow \mathrm{E}[1 \wedge \mathrm{e}^{Z_{ng}}] = \Phi(\tsfrac{\mu_{ng}}{\sigma_{ng}}) + \mathrm{e}^{\mu_{ng} + \sigma_{ng}^2/2} \Phi(-\sigma_{ng} - \tsfrac{\mu_{ng}}{\sigma_{ng}})
$$
as $d \rightarrow \infty$, where $Z_{ng} \sim \normal(\mu_{ng},\sigma_{ng}^2)$.
\end{theorem}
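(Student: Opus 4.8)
The plan is to follow the strategy of \cite{BRS2009}: write the non-Gaussian log acceptance ratio as $Z_{ng}=(\phi_d(x)-\phi_d(y))+Z$, where $Z=\log\left(\frac{\tilde{\pi}_d(y)q(y,x)}{\tilde{\pi}_d(x)q(x,y)}\right)$ is the Gaussian-reference log-ratio whose explicit form is given in Lemma~\ref{lem1}, and prove the two ingredients: (i) $\phi_d(x)-\phi_d(y)\to 0$ in $L^1$ (hence in probability) under $x\sim\pi_d$ with $y$ from \eqref{MH0ng}; and (ii) under the same law, $Z\xrightarrow{\mathcal{D}}\normal(\mu_{ng},\sigma_{ng}^2)$ as $d\to\infty$. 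Given these, Slutsky's theorem gives $Z_{ng}\xrightarrow{\mathcal{D}}\normal(\mu_{ng},\sigma_{ng}^2)$, and since $z\mapsto 1\wedge\mathrm{e}^{z}$ is bounded and continuous, weak convergence yields $\mathrm{E}_{\pi_d}[\alpha(x,y)]=\mathrm{E}_{\pi_d}[1\wedge\mathrm{e}^{Z_{ng}}]\to\mathrm{E}[1\wedge\mathrm{e}^{W}]$ with $W\sim\normal(\mu_{ng},\sigma_{ng}^2)$; the last expectation equals the stated formula by the same elementary Gaussian integral used in the proof of Theorem~\ref{thm accept}.

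For (i) I would work in the diagonalising coordinates of Lemma~\ref{lem2}, where $y-x=-(I-G)(x-\mathcal{A}^{-1}\beta)+\nu$ with $\nu$ Gaussian of diagonal covariance with entries $\hat{g}_i\tilde{\lambda}_i^{-2}$. A direct second-moment computation, using that $\tilde{g}_i^2$, $\tilde{g}_i^2\hat{r}_i^2\lambda_i^2$ and $\hat{g}_i$ are $\mathcal{O}(t_{d,i})=\mathcal{O}(d^{-t})$ uniformly in $i$, that $\tilde{r}_i$ is bounded, that Assumption~\ref{assumption2} holds for $r=\max\{s,s',s''\}\geq s'$, and that $\pi_d$ has uniformly (in $d$) bounded low-order moments in the $|\cdot|_{s''}$ norm (this last point from the polynomial growth bound in Assumption~\ref{assumption1} together with Assumption~\ref{assumption2}), gives $\mathrm{E}_{\pi_d}[\,|x-y|_{s'}^2\,]\to 0$. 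The Lipschitz-type bound $|\phi_d(x)-\phi_d(y)|\leq\delta(|x-A^{-1}b|_s,|y-A^{-1}b|_s)\,|x-y|_{s'}$, the local boundedness of $\delta$, and tightness of $|x-A^{-1}b|_s$ and $|y-A^{-1}b|_s$ under $\pi_d$ then give, via a truncation and Cauchy--Schwarz argument exactly as in \cite{BRS2009}, $\mathrm{E}_{\pi_d}[\,|\phi_d(x)-\phi_d(y)|\,]\to 0$ (and, refining, $L^2$-convergence, which is what lets the covariance of $\phi_d(x)-\phi_d(y)$ with $Z$ be dropped below).

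For (ii) I would expand $Z$ in the diagonalising coordinates as $Z=\sum_{i=1}^d Z_i$ with $Z_i=T_{0i}+T_{1i}w_i+T_{2i}\eta_i+T_{3i}w_i^2+T_{4i}\eta_i^2+T_{5i}w_i\eta_i$, where $w_i=q_i^TA^{1/2}(x-A^{-1}b)$ and the $\eta_i$ are the i.i.d.\ $\normal(0,1)$ proposal-noise coordinates independent of $x$; the identities $\mathrm{E}_{\tilde{\pi}_d}[Z_i]=\mu_{d,i}$ and $\mathrm{Var}_{\tilde{\pi}_d}(Z_i)=\sigma_{d,i}^2$ from Theorem~\ref{thm accept} fix the coefficients. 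Under $\pi_d$ one has $\mathrm{E}_{\pi_d}[w_i]=\kappa_{d,i}$ and $\mathrm{E}_{\pi_d}[w_i^2]=\gamma_{d,i}$ in place of the Gaussian values $0$ and $1$, so the mean of $Z_i$ shifts by $\kappa_{d,i}T_{1i}+T_{3i}(\gamma_{d,i}-1)$; a careful second-moment calculation — tracking the non-Gaussian third and fourth moments of $w_i$ under $\pi_d$ and the cross-coordinate covariances induced by $\phi_d$, and showing these negligible using $T_{ji}=\mathcal{O}(d^{-1/2})$, $\sum_i(T_{1i}^2+T_{2i}^2+2T_{3i}^2+2T_{4i}^2+T_{5i}^2)<\infty$, and Assumptions~\ref{assumption1}--\ref{assumption2} — leaves the explicit correction $\sum_i(\kappa_{d,i}T_{1i}+T_{3i}(\gamma_{d,i}-1))^2$ to the variance. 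Summing over $i$ gives $\mathrm{E}_{\pi_d}[Z]\to\mu_{ng}$ and $\mathrm{Var}_{\pi_d}(Z)\to\sigma_{ng}^2$. The CLT $Z\xrightarrow{\mathcal{D}}\normal(\mu_{ng},\sigma_{ng}^2)$ is then obtained by conditioning on $x$: the $\eta$-dependent part of $Z$ is a conditionally independent sum of uniformly small terms, so a conditional Lindeberg CLT applies (note $T_{ji}=\mathcal{O}(d^{-1/2})$ with summable squares forces the Lyapunov condition \eqref{eq Tcond}), while the $x$-only part $\sum_i(T_{0i}+T_{1i}w_i+T_{3i}w_i^2)$ is handled by a triangular-array argument exploiting that $\pi_d$ is an $\mathrm{e}^{-\phi_d}$-reweighting of the product Gaussian $\tilde{\pi}_d$ whose coordinates have uniformly controlled moments; the two limits combine through a conditional characteristic-function computation.

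The main obstacle is the last step of (ii): establishing the central limit theorem for $Z$ under the non-product measure $\pi_d$, and in particular showing that the correlations $\pi_d$ induces among the $w_i$ neither destroy asymptotic normality nor add variance beyond the explicit $\sum_i(\kappa_{d,i}T_{1i}+T_{3i}(\gamma_{d,i}-1))^2$ correction. Under $\tilde{\pi}_d$ the coordinates are exactly independent, so a triangular-array CLT is immediate; controlling the $\phi_d$-induced dependence, together with the uniform moment bounds on $w_i$ under $\pi_d$ that this requires, is the technical heart of the proof and the step most sensitive to the precise form of Assumptions~\ref{assumption1} and~\ref{assumption2}, closely paralleling the RWM/SLA analysis of \cite{BRS2009}.
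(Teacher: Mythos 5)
Your overall architecture --- split the log acceptance ratio as $(\phi_d(x)-\phi_d(y))+Z$, kill the first term in $L^1(\pi_d)$, prove a CLT for $Z$ under $\pi_d$, and finish via the boundedness/Lipschitz continuity of $z\mapsto 1\wedge\mathrm{e}^z$ and the explicit Gaussian formula \eqref{eq uf1} --- is exactly the paper's, and your part (i) is precisely the content of the paper's Lemma~\ref{lem:phi} (truncation, Cauchy--Schwarz and Markov, with the moment bounds $\mathrm{E}_{\tilde{\pi}_d}[|x-A^{-1}b|_r^{2q'}]<C$ and $\mathrm{E}_{\tilde{\pi}_d}[|y-x|_r^{2q'}]=\mathcal{O}(t_{d,i}^{q'})$ supplied by Assumption~\ref{assumption2}).

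The gap is in part (ii), and you have correctly located it but not filled it. Your plan calls for (a) a direct computation showing $\mathrm{Var}_{\pi_d}(Z)\to\sigma_{ng}^2$ with the cross-coordinate covariances ``shown negligible'', and (b) a triangular-array CLT for the $x$-only sum $\sum_i (T_{1i}w_i+T_{3i}w_i^2)$ under $\pi_d$. Neither is obtainable by the means you describe: under $\pi_d$ the $w_i$ are dependent, the sums $\sum_{i\neq j}T_{1i}T_{1j}\mathrm{Cov}_{\pi_d}(w_i,w_j)$ and $\sum_{i\neq j}T_{3i}T_{3j}\mathrm{Cov}_{\pi_d}(w_i^2,w_j^2)$ contain $\mathcal{O}(d^2)$ terms of individual size $\mathcal{O}(d^{-1})$ times a covariance, and Assumptions~\ref{assumption1}--\ref{assumption2} provide no decorrelation whatsoever. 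The paper's resolution is different in kind: it centres $Z_{d,i}$ by the marginal moments $\kappa_{d,i},\gamma_{d,i}$ to form a zero-mean, square-integrable martingale array $S_{d,j}$ and invokes the martingale central limit theorem of Hall and Heyde. This replaces any variance computation under $\pi_d$ by the single requirement that the quadratic variation $\sum_i X_{d,i}^2$ converge \emph{in probability} to the deterministic constant $\eta^2=\sigma_{ng}^2$; that law-of-large-numbers statement is verified under the \emph{product} measure $\tilde{\pi}_d$ (where a Chebyshev argument using independence and $T_{ji}=\mathcal{O}(d^{-1/2})$ suffices), upgraded to $L^1(\tilde{\pi}_d)$ by uniform integrability, and then transferred to $\pi_d$ because $\phi_d\geq m$ keeps $\dd\pi_d/\dd\tilde{\pi}_d$ uniformly bounded, so $L^1(\tilde{\pi}_d)$-convergence implies $L^1(\pi_d)$-convergence and hence convergence in $\pi_d$-probability. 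The dependence structure that $\phi_d$ induces is never analysed. Without this device --- or some substitute mechanism for proving the CLT under the non-product measure --- your part (ii) does not close, so the proposal as written has a genuine hole at what you yourself call its technical heart.
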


\begin{corollary}
\label{cor:ng}
In addition to the conditions for Theorem \ref{thm:ng1}, if
$$
	\lim_{d \rightarrow \infty} \sum_{i=1}^d T_{1i}^2 + T_{3i}^2 = 0
$$
then 
$$
	\mu_{ng} = \mu \qquad \mbox{and} \qquad \sigma_{ng}^2 = \sigma^2,
$$
and the expected acceptance probability for the non-Gaussian target case has the same limit as $d \rightarrow \infty$ as the Gaussian target case.
\end{corollary}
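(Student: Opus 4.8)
The plan is to reduce the assertion to two elementary estimates for the quantities appearing in $\mu_{ng}$ and $\sigma_{ng}^2$. Write
\[
	a_{d,i} := \kappa_{d,i} T_{1i} + T_{3i}(\gamma_{d,i}-1),
\]
so that, in the notation of Theorem~\ref{thm:ng1}, $\mu_{ng} = \mu + \lim_{d\to\infty} \sum_{i=1}^d a_{d,i}$ and $\sigma_{ng}^2 = \sigma^2 + \lim_{d\to\infty} \sum_{i=1}^d a_{d,i}^2$. Since both of these limits are assumed to exist, it suffices to prove that $\sum_{i=1}^d a_{d,i} \to 0$ and $\sum_{i=1}^d a_{d,i}^2 \to 0$ as $d\to\infty$; granting this, $\mu_{ng}=\mu$ and $\sigma_{ng}^2=\sigma^2$, and plugging these into the formula of Theorem~\ref{thm:ng1} reproduces verbatim the Gaussian-target limit of Theorem~\ref{thm accept}, which is the last claim of the corollary.

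The only input beyond the corollary's own hypothesis $\sum_{i=1}^d(T_{1i}^2+T_{3i}^2)\to 0$ is a uniform (in $d$) square-summability bound on $\kappa_{d,i}$ and on $\gamma_{d,i}-1$; this is already available from the proof of Theorem~\ref{thm:ng1}. Concretely, since $\{q_i\}$ is an orthonormal basis of $\bbR^d$ and $\kappa_{d,i} = q_i^T A^{1/2}\bigl(\mathrm{E}_{\pi_d}[x]-A^{-1}b\bigr)$,
\[
	\sum_{i=1}^d \kappa_{d,i}^2 = \bigl| A^{1/2}\bigl(\mathrm{E}_{\pi_d}[x]-A^{-1}b\bigr)\bigr|^2 = \bigl| \mathrm{E}_{\pi_d}[x]-A^{-1}b\bigr|_{1/2}^2,
\]
while the $\gamma_{d,i}$ are the standardized second moments of the components of $x$ under $\pi_d$, all equal to $1$ when $\pi_d$ is replaced by the reference Gaussian $\tilde\pi_d$. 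The estimates one establishes when proving Theorem~\ref{thm:ng1} (comparing $\pi_d$ with $\tilde\pi_d$ through the Lipschitz and growth bounds of Assumption~\ref{assumption1} together with the summability Assumption~\ref{assumption2}) furnish a constant $K$, independent of $d$, with
\[
	\sum_{i=1}^d \kappa_{d,i}^2 \le K \qquad\text{and}\qquad \sum_{i=1}^d (\gamma_{d,i}-1)^2 \le K
\]
for all large $d$; in particular $|\kappa_{d,i}|\le K^{1/2}$ and $|\gamma_{d,i}-1|\le K^{1/2}$ uniformly in $i$.

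With these in hand the two reductions are routine. For the variance term, $(u+v)^2\le 2u^2+2v^2$ and the uniform bounds give
\[
	\sum_{i=1}^d a_{d,i}^2 \le 2\sum_{i=1}^d \kappa_{d,i}^2 T_{1i}^2 + 2\sum_{i=1}^d T_{3i}^2(\gamma_{d,i}-1)^2 \le 2K\sum_{i=1}^d\bigl(T_{1i}^2+T_{3i}^2\bigr) \to 0
\]
by hypothesis. For the mean term, the triangle inequality followed by Cauchy--Schwarz on each piece gives
\[
	\Bigl|\sum_{i=1}^d a_{d,i}\Bigr| \le \Bigl(\sum_{i=1}^d \kappa_{d,i}^2\Bigr)^{1/2}\Bigl(\sum_{i=1}^d T_{1i}^2\Bigr)^{1/2} + \Bigl(\sum_{i=1}^d (\gamma_{d,i}-1)^2\Bigr)^{1/2}\Bigl(\sum_{i=1}^d T_{3i}^2\Bigr)^{1/2} \le 2K^{1/2}\Bigl(\sum_{i=1}^d\bigl(T_{1i}^2+T_{3i}^2\bigr)\Bigr)^{1/2}\to 0.
\]

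The genuinely non-trivial ingredient is the uniform square-summability of $\kappa_{d,i}$ and $\gamma_{d,i}-1$, so that is where the real content sits; I expect this to be the main obstacle. It is not new work, however: these bounds are essentially forced by the hypotheses of Theorem~\ref{thm:ng1} (without them the sums defining $\mu_{ng}$ and $\sigma_{ng}^2$ could not be controlled) and are produced in the course of its proof, so for the corollary it should be enough to cite them and then run the two lines of Cauchy--Schwarz above.
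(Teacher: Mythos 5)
Your treatment of $\sigma_{ng}^2$ is fine: there you only need $|\kappa_{d,i}|$ and $|\gamma_{d,i}-1|$ bounded uniformly in $d$ and $i$, and those sup bounds are indeed established in the proof of Theorem~\ref{thm:ng1} (via $\phi_d\ge m$, $|\kappa_{d,i}|\le \mathrm{e}^{-m}\mathrm{E}[|u|]$, $0\le\gamma_{d,i}\le \mathrm{e}^{-m}\mathrm{E}[u^2]$ with $u\sim\normal(0,1)$). The gap is in the mean term. Your Cauchy--Schwarz decomposition requires the $\ell^2$-in-$i$ bounds $\sum_{i=1}^d\kappa_{d,i}^2\le K$ and $\sum_{i=1}^d(\gamma_{d,i}-1)^2\le K$ uniformly in $d$, and these are \emph{not} produced anywhere in the proof of Theorem~\ref{thm:ng1}: that proof only gives the $\ell^\infty$ bounds above, which yield $\sum_i\kappa_{d,i}^2=\mathcal{O}(d)$, not $\mathcal{O}(1)$. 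Nor are they ``forced by the hypotheses'' of the theorem --- the theorem merely \emph{assumes} that the limits defining $\mu_{ng}$ and $\sigma_{ng}^2$ exist and controls the $T_{ji}$, which says nothing about coordinate-wise $\ell^2$ control of the first and second moments of $\pi_d$. Establishing such bounds would be a genuinely new (Stein/integration-by-parts type) argument relating the decay of the sensitivities of $\phi_d$ to Assumptions~\ref{assumption1}--\ref{assumption2}, and you have not supplied it. So the proof as written does not close.

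The paper avoids this entirely by keeping the sum inside a single expectation: with $\xi_i=\lambda_i(x_i-\mu_i)$ in the diagonalized coordinates, one has
\[
	\sum_{i=1}^d \bigl(T_{1i}\kappa_{d,i}+T_{3i}(\gamma_{d,i}-1)\bigr)=\mathrm{E}_{\pi_d}[W_d],\qquad W_d:=\sum_{i=1}^d \bigl(T_{1i}\xi_i+T_{3i}(\xi_i^2-1)\bigr),
\]
and then computes the second moment of $W_d$ under the \emph{reference} Gaussian, where the $\xi_i$ are i.i.d.\ $\normal(0,1)$ so the cross terms vanish exactly: $\mathrm{E}_{\tilde\pi_d}[W_d^2]=\sum_i (T_{1i}^2+2T_{3i}^2)\to0$ by the corollary's hypothesis. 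Jensen and the bounded density ratio coming from $\phi_d\ge m$ then give $W_d\to0$ in $L^1(\pi_d)$, hence $\mathrm{E}_{\pi_d}[W_d]\to0$ and $\mu_{ng}=\mu$. In effect the Cauchy--Schwarz is applied globally under $\tilde\pi_d$, where independence substitutes for the coordinate-wise $\ell^2$ bounds your argument needs. If you rewrite your mean-term estimate this way, the rest of your proposal goes through.
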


Unfortunately, to extend Theorem \ref{thm accept} to non-Gaussian targets we have had to make quite strong assumptions about the proposal and the size of the perturbation from a Gaussian target distribution.  These assumptions effectively limit the proposal to having zero jump size in the limit $d \rightarrow \infty$.  Nevertheless, in certain circumstances the expected acceptance probability of a MH algorithm with a stochastic AR(1) proposal is the same for both Gaussian and non-Gaussian targets.

\subsection{Expected squared jumpsize for non-Gaussian target}

Using a similar proof structure to Theorem \ref{thm:ng12} we obtain the following result.
\begin{theorem}
\label{thm:ng3}
Suppose $\phi_d$ satisfies Assumption \ref{assumption0} and let $q \in \mathbb{R}^d$.  Let $\mathrm{E}_{\pi_d}[(q^T(x'-x))^2]$ and $\mathrm{E}_{\tilde{\pi}_d}[(q^T(x'-x))^2]$ denote the expected squared jump size in direction $q$ of MH algorithms \eqref{MH0ng} and \eqref{MH0} in equilibrium, respectively.  Then 
$$
	c \mathrm{E}_{\tilde{\pi}_d}[(q^T(x'-x))^2] \leq \mathrm{E}_{\pi_d}[(q^T(x'-x))^2] \leq C \mathrm{E}_{\tilde{\pi}_d}[(q^T(x'-x))^2] 
$$
for constants $c=\mathrm{e}^{-3M}$ and $C = \mathrm{e}^{3M}$.
\end{theorem}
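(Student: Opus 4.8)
The plan is to mimic the proof of Theorem~\ref{thm:ng12} almost verbatim, replacing the acceptance-probability functional by the squared-jump functional. Recall that the two algorithms \eqref{MH0ng} and \eqref{MH0} share the \emph{same} proposal kernel $q(x,\cdot)$ and their targets differ only by the bounded factor $\exp(-\phi_d)$; moreover their acceptance probabilities are related pointwise by $\mathrm{e}^{-2M}\tilde\alpha(x,y) \leq \alpha(x,y) \leq \mathrm{e}^{2M}\tilde\alpha(x,y)$, exactly as used in the proof of Theorem~\ref{thm:ng12}.

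First I would write the expected squared jump size in direction $q$ for \eqref{MH0ng} as an integral over the joint law of $(x,y)$ in equilibrium: since a jump occurs only when the proposal is accepted, and $(q^T(x'-x))^2 = (q^T(y-x))^2$ on acceptance and $0$ otherwise,
\[
	\mathrm{E}_{\pi_d}[(q^T(x'-x))^2] = \int\!\!\int \pi_d(x)\, q(x,y)\, \alpha(x,y)\, (q^T(y-x))^2 \,\dd x\, \dd y,
\]
and similarly for \eqref{MH0} with $\tilde\pi_d$ and $\tilde\alpha$. Then I would bound the integrand: $\pi_d(x) \leq \mathrm{e}^{M}\tilde\pi_d(x)$ from Assumption~\ref{assumption0}, and $\alpha(x,y) \leq \mathrm{e}^{2M}\tilde\alpha(x,y)$ from the pointwise comparison already established in the proof of Theorem~\ref{thm:ng12}; the kernel $q(x,y)$ and the nonnegative weight $(q^T(y-x))^2$ are identical in both integrals. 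Multiplying the two factors gives the upper bound with $C = \mathrm{e}^{3M}$, and the reverse inequalities $\pi_d(x) \geq \mathrm{e}^{-M}\tilde\pi_d(x)$, $\alpha(x,y) \geq \mathrm{e}^{-2M}\tilde\alpha(x,y)$ give the lower bound with $c = \mathrm{e}^{-3M}$. The nonnegativity of $(q^T(y-x))^2$ is what makes the pointwise bounds integrate cleanly.

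The only mild subtlety — the ``main obstacle,'' though it is minor — is bookkeeping the equilibrium measure: the expectation $\mathrm{E}_{\pi_d}$ is over $x\sim\pi_d$ with $x'$ the next state of the chain, so one must confirm that the displayed integral representation is correct, i.e. that the rejection event contributes $0$ to the squared jump and hence can be dropped. This is immediate since $x'=x$ on rejection. One should also note that finiteness of the integrals is not required for the inequality to hold: if $\mathrm{E}_{\tilde\pi_d}[(q^T(x'-x))^2]=\infty$ the upper bound is vacuous, and if it is finite the comparison forces $\mathrm{E}_{\pi_d}[(q^T(x'-x))^2]$ finite too. Assembling these pieces yields the stated sandwich with $c=\mathrm{e}^{-3M}$ and $C=\mathrm{e}^{3M}$, completing the proof.
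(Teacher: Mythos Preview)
Your proof is correct and follows exactly the approach the paper indicates: the paper does not spell out the argument but simply says ``Using a similar proof structure to Theorem~\ref{thm:ng12},'' and your proposal does precisely that, combining the pointwise bounds $\mathrm{e}^{-2M}\tilde\alpha\le\alpha\le\mathrm{e}^{2M}\tilde\alpha$ and $\mathrm{e}^{-M}\tilde\pi_d\le\pi_d\le\mathrm{e}^{M}\tilde\pi_d$ against the nonnegative weight $(q^T(y-x))^2$ inside the integral representation of the expected squared jump.
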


Thus, in this weak sense, the jump size of \eqref{MH0ng} for non-Gaussian targets mimics the jump size of \eqref{MH0} for Gaussian targets; however, if $M$ is large then the jump size of \eqref{MH0ng} and \eqref{MH0} could be quite different.

The following corollary is a consequence of Theorem \ref{thm:ng3}, Lemma \ref{lem accept} and Corollary \ref{cor:accept} and provides sufficient conditions for \eqref{MH0ng} to be dimension independent.  

\begin{corollary}
\label{cor:dimindep}
Suppose $\phi_d$ satisfies Assumption \ref{assumption0} uniformly in $d$, the proposal of \eqref{MH0ng} satisfies the assumptions of Theorem \ref{thm accept}, \eqref{eq:cor1} are bounded, and \eqref{eq:cor2} is bounded below.

Then there exists $c>0$ such that for all sufficiently large $d$ and $\forall q \in \mathbb{R}^d$, Algorithm \eqref{MH0ng} in equilibrium satisfies
$$
	\mathrm{E}_{\pi_d}[(q^T(x'-x))^2] \geq c >0.
$$
\end{corollary}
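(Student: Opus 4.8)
The plan is to chain together the three ingredients that the corollary explicitly invokes, working entirely at the level of the Gaussian-target algorithm \eqref{MH0} and then transferring the conclusion to \eqref{MH0ng} via Theorem~\ref{thm:ng3}. First I would observe that the hypotheses are exactly what is needed to run the Gaussian-target machinery: the proposal of \eqref{MH0ng} satisfies the assumptions of Theorem~\ref{thm accept}, so in particular $G$ and $\Sigma$ are functions of $A$ and the tail condition \eqref{eq Tcond} holds, and the boundedness of \eqref{eq:cor1} together with the lower bound on \eqref{eq:cor2} are precisely the hypotheses of Lemma~\ref{lem accept}. Hence $\mu > -\infty$. I would also need $\sigma$ finite; this should follow from boundedness of \eqref{eq:cor1} and \eqref{eq:cor2} (these control the $T_{ji}$ sums, and $\sigma^2 = \lim \sum_i (T_{1i}^2 + T_{2i}^2 + 2T_{3i}^2 + 2T_{4i}^2 + T_{5i}^2)$), so I should check that the stated hypotheses really do pin down finiteness of $\sigma$ or else add it implicitly — this is a point to be careful about.

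Next, with $\mu > -\infty$ and $\sigma$ finite, Corollary~\ref{cor:accept} gives $\lim_{d\to\infty} \mathrm{E}_{\tilde\pi_d}[\tilde\alpha(x,y)] > 0$. Then I would apply Lemma~\ref{lem accept}/Lemma~\ref{lem jump}-style reasoning, or more directly Theorem~\ref{thm jumpsize} specialized as in Lemma~\ref{lem jump}, to conclude that the Gaussian-target expected squared jump size $\mathrm{E}_{\tilde\pi_d}[(q^T(x'-x))^2]$ is bounded below by a positive constant for all large $d$ and all unit $q$ — the key subtlety being that Lemma~\ref{lem jump} requires $\mu_{d,i}, \sigma_{d,i} \to 0$ and $\lim(1-G_i) > 0$, neither of which is stated as a hypothesis here, so I would instead want to argue positivity of the jump size directly from the structure: since $\mathrm{E}[\tilde\alpha]$ is bounded below and the jump includes a term proportional to it, and since $\hat g_i = 1-G_i^2$ stays bounded away from $0$ when \eqref{eq:cor1} is bounded, one can extract a uniform positive lower bound on $U_1 U_2$ in \eqref{eq jump1}. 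I would phrase this carefully and possibly reduce it to an averaged statement over $q$ if a single-$q$ bound is not uniform.

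Finally, I would apply Theorem~\ref{thm:ng3}: since $\phi_d$ satisfies Assumption~\ref{assumption0} uniformly in $d$, the constant $c = \mathrm{e}^{-3M}$ is uniform in $d$, and
$$
	\mathrm{E}_{\pi_d}[(q^T(x'-x))^2] \geq \mathrm{e}^{-3M} \, \mathrm{E}_{\tilde\pi_d}[(q^T(x'-x))^2] \geq \mathrm{e}^{-3M} \cdot c_0 > 0
$$
for all sufficiently large $d$ and all $q$, where $c_0$ is the positive lower bound from the Gaussian case. Rescaling by $|q|$ handles general $q \in \mathbb{R}^d$ rather than just unit vectors.

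The main obstacle is the middle step: the excerpt's clean ``dimension-independence'' criterion (Lemma~\ref{lem jump}) needs $\mu_{d,i}, \sigma_{d,i} \to 0$ and $1-G_i$ bounded away from $0$, but Corollary~\ref{cor:dimindep} assumes neither, so I cannot simply cite Lemma~\ref{lem jump}. I expect the real work is to show that the hypotheses as stated nonetheless force a uniform positive lower bound on the Gaussian-target jump size — most plausibly by going back to \eqref{eq jump1}, bounding the error term $E_3$ via $U_3$, noting that boundedness of \eqref{eq:cor1} keeps $\hat g_i/\tilde\lambda_i^2$ (and hence $U_1$) from collapsing, and combining with $U_2 > 0$ from Corollary~\ref{cor:accept}. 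If that argument does not close uniformly in $q$, the fallback is to prove the bound for the $2$-norm-averaged jump size (which is orthogonally invariant, as noted after Lemma~\ref{lem2}) and then deduce the per-direction bound.
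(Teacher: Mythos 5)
Your overall route --- Lemma \ref{lem accept} to get $\mu>-\infty$, Corollary \ref{cor:accept} for a positive limiting acceptance probability, a positive lower bound on the Gaussian-target jump size, then Theorem \ref{thm:ng3} with the uniform constant $\mathrm{e}^{-3M}$ to transfer to \eqref{MH0ng} --- is exactly the route the paper indicates; the paper in fact gives no proof beyond the sentence attributing the corollary to Theorem \ref{thm:ng3}, Lemma \ref{lem accept} and Corollary \ref{cor:accept}. Your remark that finiteness of $\sigma$ is already built into the hypothesis of Theorem \ref{thm accept} (existence of the limit $\sigma^2=\lim_d\sum_i\sigma_{d,i}^2$) is correct, and you are also right that the crux is the middle step, since Lemma \ref{lem jump} requires $\mu_{d,i},\sigma_{d,i}\to0$ and $\lim(1-G_i)>0$, neither of which is assumed here.

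However, your proposed patch for that middle step does not work. Boundedness of the sums in \eqref{eq:cor1} is a \emph{summability} condition: it forces the individual summands $\tfrac{(\lambda_i^2-\tilde\lambda_i^2)^2}{\lambda_i^2\tilde\lambda_i^2}(1-G_i^2)$ to be small and is perfectly compatible with $1-G_i^2\to0$ (take $\tilde\lambda_i=\lambda_i$ so the summands vanish identically, with $G_i\to1$); it does not keep $\hat g_i$ bounded away from $0$. Moreover $U_1$ in Theorem \ref{thm jumpsize} carries the factors $\lambda_i^{-2}$ and $\tilde\lambda_i^{-2}$, which vanish as $i\to\infty$ whenever $\lambda_i\to\infty$ --- the generic situation in this paper ($\lambda_i\sim i^\kappa$). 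Concretely, with $\mathcal A=A$ and $\beta=b$ every proposal is accepted and $\mathrm{E}[(q_i^T(x'-x))^2]=2(1-G_i)/\lambda_i^2$, which is not bounded below uniformly in $i$; and $q=0$ defeats the literal statement outright. So a constant $c$ uniform over all $q$ (even all unit $q$) cannot be extracted from the stated hypotheses by your argument; one needs either the extra hypotheses of Lemma \ref{lem jump} ($1-G_i$ bounded away from $0$, together with control of $\lambda_i$), or a weaker conclusion (positivity for each fixed normalized $q$ with $c$ depending on $q$, or an averaged statement). Your ``fallback'' to an averaged bound is the right instinct, but be aware it proves a different statement. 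Since the paper supplies no proof, this defect is inherited from the source rather than introduced by you; still, the specific claim that boundedness of \eqref{eq:cor1} keeps $\hat g_i/\tilde\lambda_i^2$ from collapsing is false and must be removed.
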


The following theorem finds the expected squared jump size of MH algorithms \eqref{MH0ng} more generally.

\begin{theorem}
\label{thm:ng2}
Under the same conditions as Theorem \ref{thm:ng1}, 
\begin{multline*}
	\mathrm{E}_{\pi_d}[ (q_i^T (x' - x))^2 ] \\= \left( \left(\tilde{g}_i^2 \hat{r}_i^2 + \frac{\hat{g}_i}{\tilde{\lambda}_i^2} \right) \mathrm{E}_{\pi_d}[\alpha(x,y)] + 2 \frac{\hat{r}_i \tilde{g}_i^2 \gamma_{d,i}^{1/2}}{\lambda_i} u_{d,i} + \frac{\tilde{g}_i^2 \gamma_{d,i}}{\lambda_i^2} v_{d,i} \right) + \mathrm{o}(t_{d,i} \lambda_i^{-2}) 
\end{multline*}
(uniformly in $i$) as $d \rightarrow \infty$, for some $-1 \leq u_{d,i} \leq 1$ and $0 \leq v_{d,i} \leq 1$.
\end{theorem}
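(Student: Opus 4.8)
The plan is to mirror the structure used to pass from the Gaussian result (Theorem~\ref{thm jumpsize}) to the non-Gaussian setting, exactly as Theorem~\ref{thm:ng1} extended Theorem~\ref{thm accept}. First I would write $x' - x = (y-x)\mathbf{1}_{\text{accept}}$, so that
\[
	\mathrm{E}_{\pi_d}[(q_i^T(x'-x))^2] = \mathrm{E}_{\pi_d}\!\left[ (q_i^T(y-x))^2\, \alpha(x,y) \right],
\]
and then expand $q_i^T(y-x)$ using the proposal $y = Gx + g + \nu$ together with the diagonalization of Lemma~\ref{lem2}: in the $Q^T$-coordinates, the $i$-th component contribution is governed by $\tilde g_i = 1-G_i$, the offset $\hat r_i$, the noise variance $\hat g_i/\tilde\lambda_i^2$, and the deviation $q_i^T A^{1/2}(x - A^{-1}b)$ of the current state from the Gaussian mean. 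The key point, just as in the Gaussian case, is that $q_i^T(y-x)$ decomposes into a deterministic drift term proportional to $\tilde g_i \hat r_i$, a term proportional to $\tilde g_i/\lambda_i$ times the centered state, and an independent Gaussian noise term of variance $\hat g_i/\tilde\lambda_i^2$; squaring produces the three groups $\tilde g_i^2\hat r_i^2$, $\tilde g_i^2\gamma_{d,i}/\lambda_i^2$, and $\hat g_i/\tilde\lambda_i^2$, with a cross term $2\hat r_i\tilde g_i^2\gamma_{d,i}^{1/2}/\lambda_i$.

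Second, I would handle the coupling between the jump magnitude and the acceptance indicator. For the noise term, because $\nu$ is independent of $x$ and of the ``rest'' of the acceptance ratio (the $j\neq i$ coordinates), one can condition on everything except $\nu_i$ and use the same Gaussian identity already invoked in Theorem~\ref{thm accept}/\ref{thm jumpsize}: $\mathrm{E}[ \nu_i^2 (1\wedge e^{Z}) ]$ factorizes, up to controlled error, into $\mathrm{Var}(\nu_i)$ times $\mathrm{E}[1\wedge e^{Z}]$, which under the hypotheses converges to $\mathrm{E}_{\pi_d}[\alpha(x,y)]$ (this is why the $\hat g_i/\tilde\lambda_i^2$ and $\tilde g_i^2\hat r_i^2$ pieces appear multiplied by $\mathrm{E}_{\pi_d}[\alpha(x,y)]$). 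For the state-dependent terms $\tilde g_i^2\gamma_{d,i}/\lambda_i^2$ and the cross term, the correlation with $\alpha$ cannot be factored out, so instead I would simply bound the relevant conditional expectations by $0$ and $1$ (for a nonnegative bounded quantity like $\alpha$) or by $-1$ and $1$, which is exactly what produces the auxiliary constants $u_{d,i}\in[-1,1]$ and $v_{d,i}\in[0,1]$ in the statement rather than explicit values. The hypotheses of Theorem~\ref{thm:ng1} — the $\mathcal{O}(t_{d,i})$ smallness of $\tilde g_i^2\hat r_i^2\lambda_i^2$, $\tilde g_i^2$, $\hat g_i$, the $\mathcal{O}(d^{-1/2})$ bounds on the $T_{ji}$, boundedness of $\tilde r_i$, and Assumptions~\ref{assumption1},~\ref{assumption2} — are precisely what force all the omitted terms (higher-order products, the difference between $\mathrm{E}[1\wedge e^Z]$ and $\mathrm{E}_{\pi_d}[\alpha]$, the $\phi_d(x)-\phi_d(y)$ correction to the acceptance ratio) into the $\mathrm{o}(t_{d,i}\lambda_i^{-2})$ remainder, uniformly in $i$.

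Third, I would track the non-Gaussian correction itself. Here $\alpha(x,y) = 1\wedge \exp(\phi_d(x)-\phi_d(y)+Z)$, and the Lipschitz-type bound in Assumption~\ref{assumption1} controls $\phi_d(x)-\phi_d(y)$ in terms of $|x-y|_{s'}$; the integrability moment Assumption~\ref{assumption2} (via $r = \max\{s,s',s''\}$ and $\sum_i \lambda_i^{4r-2}<\infty$) together with the $\mathcal{O}(d^{-1/2})$ decay of the $T_{ji}$ ensures this correction is asymptotically negligible at the $t_{d,i}\lambda_i^{-2}$ scale — this is the same mechanism already used in the proof of Theorem~\ref{thm:ng1}, so I would cite that proof's estimates rather than redo them. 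Collecting the drift, cross, and noise contributions, each paired with either $\mathrm{E}_{\pi_d}[\alpha(x,y)]$ or a bounded coefficient $u_{d,i},v_{d,i}$, yields the displayed formula.

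The main obstacle I anticipate is not the algebra but the uniformity: showing that the error term is $\mathrm{o}(t_{d,i}\lambda_i^{-2})$ \emph{uniformly in $i$}, not merely for each fixed $i$. The difficulty is that $\lambda_i^{-2}$ can be large (small eigenvalues of $A$), so an error that looks small in absolute terms may not be small relative to $t_{d,i}\lambda_i^{-2}$ for the ``worst'' coordinate; one must exploit that the genuinely $i$-dependent dangerous factors ($\tilde g_i^2\hat r_i^2\lambda_i^2$, $\tilde g_i^2$, $\hat g_i$, $\gamma_{d,i}$) are all either $\mathcal{O}(t_{d,i})$ or uniformly bounded, so that every error term carries at least one such small factor beyond the leading order. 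Managing this bookkeeping — essentially a careful reprise of the uniform estimates already established for Theorem~\ref{thm:ng1}, now applied to a squared rather than linear functional — is where the real work lies; everything else is a direct adaptation of Theorems~\ref{thm jumpsize} and~\ref{thm:ng1}.
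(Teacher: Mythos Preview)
Your proposal is correct and follows essentially the same route as the paper: diagonalize, write $(y_i-x_i)$ as $-\tilde g_i\hat r_i - \tilde g_i\lambda_i^{-1}\xi_i + \hat g_i^{1/2}\tilde\lambda_i^{-1}\nu_i$, replace $\alpha$ first by the Gaussian acceptance $\tilde\alpha$ and then by the leave-one-out $\tilde\alpha^-$, exploit independence of $\nu_i$ from $\tilde\alpha^-$ to factor out the noise and constant pieces, and bound the $\xi_i$-correlated pieces by $u_{d,i}\in[-1,1]$ and $v_{d,i}\in[0,1]$. The one mechanism you leave implicit is that both replacement steps are controlled via Cauchy--Schwarz against the fourth-moment bound $\mathrm{E}_{\pi_d}[(y_i-x_i)^4]=\mathcal{O}(t_{d,i}^2\lambda_i^{-4})$, paired respectively with $\mathrm{E}_{\pi_d}[|\phi_d(x)-\phi_d(y)|^2]\to 0$ (Lemma~\ref{lem:phi}) and $\mathrm{E}_{\pi_d}[Z_{d,i}^2]=\mathcal{O}(d^{-1})$; this is exactly what delivers the uniform $\mathrm{o}(t_{d,i}\lambda_i^{-2})$ you flag as the main obstacle.
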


Despite all of the terms of the expected squared jump size going to zero as $d \rightarrow \infty$ in the above theorem (by the conditions of the theorem), the leading order term is $\mathcal{O}(t_{d,i} \lambda_i^{-2})$. 

\section{Examples}
\label{sec examples}

\subsection{Discretized Langevin diffusion}
\label{sec langevin}

For s.p.d. matrix $V \in \bbR^{d \times d}$ (the `preconditioner'), $\theta \in [0,1]$, and time step $h >0$, we can discretize Langevin diffusion to construct a proposal by 
$$
	y = x + \tsfrac{h}{2} V \nabla \log \pi_d (\theta y + (1-\theta)x) + \sqrt{h} \nu
$$
where $\nu \iid \normal(0,V)$.  This proposal can be simplified by replacing $\pi_d$ with $\tilde{\pi}_d$ to obtain a MH algorithm with a stochastic AR(1) proposal
\begin{equation}
\label{GenLang}
\begin{split}
	\mbox{Target:} & \; \pi_d, \\
	\mbox{Proposal:} & \; y = (I + \tsfrac{\theta h}{2} VA)^{-1} \left[ (I - \tsfrac{(1-\theta)h}{2} VA)x + \tsfrac{h}{2} V b + (h V)^{1/2} \xi \right],
\end{split}
\end{equation}
for $\xi \iid \normal(0,I)$.
No simplification is necessary when $\phi_d = 0$ and the target is Gaussian.  The proposal is a convergent stationary stocastic AR(1) process. Applying Theorem \ref{thm:2.1} gives $\mathcal{A} = A + (\theta-\tsfrac{1}{2}) \tsfrac{h}{2} AVA$ and $\mathcal{A}^{-1} \beta = A^{-1}b$.

Many existing MCMC methods correspond to different choices of $\theta$ and $V$, see Table~\ref{tab1}.

\begin{table}
\begin{tabular}{|c|c|c|p{7.2cm}|}
\hline
Target & $\theta$ & $V$ & Method \\
\hline \hline
Gaussian & $0$ & $I$ & MALA and SLA \cite{BRS2009} (the proposal chain is ULA \cite{RT1996}). \\ \hline
Non-Gaussian & $0$ & $I$ & SLA \cite{BRS2009}. \\ \hline
Gaussian & $0$ & $A^{-1}$ & Version of MALA used in \cite[eqn. 2.17]{PST2012}.  Preconditioned Simplified Langevin Algorithm (P-SLA) \cite{BRS2009}. \\ \hline
Non-Gaussian & $0$ & $A^{-1}$ & Preconditioned Simplified Langevin Algorithm (P-SLA) \cite{BRS2009}. \\ \hline
Non-Gaussian & $\in [0,1]$ & $I$ & $\theta$-SLA \cite{BRS2009}.  See also \cite[eqn. 6.2]{CRSW2013}.  \\ \hline
Non-Gaussian & $0.5$ & $I$ & CN \cite{CRSW2013}. \\ \hline
Gaussian & $0.5$ & $A^{-1}$ & pCN \cite{CRSW2013}.   Scaled Stochastic Newton \cite{BG2014}. \\ \hline
Non-Gaussian & $0.5$ & $A^{-1}$ & pCN \cite{CRSW2013}. \\ \hline
\end{tabular}
\caption{Different choices of $\theta$ and $V$ in \eqref{GenLang} lead to different MCMC methods.  When the target is Gaussian, then a local Gaussian approximation to the target is global and non-stationary proposals become stationary, hence some methods coincide when the target is Gaussian.  Other choices of $\theta$ and $V$ are possible.}
\label{tab1}
\end{table}

Matrices $G$ and $\Sigma$ in \eqref{GenLang} are not functions of the $A$, Theorems \ref{thm:ng1} or \ref{thm:ng2} do not apply, but a coordinate transformation will fix this!  The proof of the following lemma is straightforward, so is omitted.

\begin{lemma}
\label{lem ch1}
The coordinate transformation
$$
	\hat{x} = V^{-1/2} x
$$
transforms Algorithm~\eqref{GenLang} to the MH algorithm defined by
\begin{equation}
\label{eq hat1}
\begin{split}
	\mbox{Target:} & \; \pi_d \mbox{ where } \tsfrac{\dd \pi_d}{\dd \tilde{\pi}_d}(x) = \exp( - \psi(x) ) \mbox{ and }\tilde{\pi}_d \mbox{ is }  \normal(B^{-1} V^{1/2} b, B^{-1}), \\
	\mbox{Proposal:} & \;  y = (I + \tsfrac{\theta h}{2} B)^{-1} \left[ (I - \tsfrac{(1-\theta)h}{2} B) x + \tsfrac{h}{2} V^{1/2} b + h^{1/2} \xi \right], 
\end{split}
\end{equation}
where $\psi(x) = \phi_d(V^{1/2}x)$, $B = V^{1/2} A V^{1/2}$ and $\xi \iid \normal(0,I)$.
\end{lemma}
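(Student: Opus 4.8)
The plan is to invoke Lemma~\ref{lem:orthog} with the invertible linear map $W = V^{1/2}$, so that $\hat{x} = W^{-1}x = V^{-1/2}x$, and then simply compute how the target and the three ingredients $G$, $g$, $\Sigma$ of the AR(1) proposal in \eqref{GenLang} transform. Lemma~\ref{lem:orthog} already guarantees $\alpha(x,y) = \hat{\alpha}(\hat{x},\hat{y})$ and that the iterated kernels are conjugate, so nothing further need be said once the transformed algorithm is identified as \eqref{eq hat1}.

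For the target, write $\pi_d(x) \propto \exp(-\phi_d(x))\,\tilde{\pi}_d(x)$ with $\tilde{\pi}_d(x)\propto\exp(-\tfrac12 x^TAx + b^Tx)$. Substituting $x = V^{1/2}\hat{x}$ and discarding the constant Jacobian $|V^{1/2}|$ (which cancels in normalization), the transformed Gaussian reference has density proportional to $\exp(-\tfrac12\hat{x}^T(V^{1/2}AV^{1/2})\hat{x} + (V^{1/2}b)^T\hat{x})$, i.e. $\normal(B^{-1}V^{1/2}b, B^{-1})$ with $B = V^{1/2}AV^{1/2}$, while the change-of-measure factor becomes $\exp(-\phi_d(V^{1/2}\hat{x})) = \exp(-\psi(\hat{x}))$. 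Since numerator and denominator pick up the same Jacobian, the Radon--Nikodym derivative is unchanged, so $\tfrac{\dd\pi_d}{\dd\tilde{\pi}_d}(\hat{x}) = \exp(-\psi(\hat{x}))$, as claimed.

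For the proposal, from $y = Gx + g + \nu$ in \eqref{GenLang} and $y = V^{1/2}\hat{y}$, $x = V^{1/2}\hat{x}$ we get $\hat{y} = (V^{-1/2}GV^{1/2})\hat{x} + V^{-1/2}g + V^{-1/2}\nu$. The only computation needed is the commutation identity $(I + cVA)V^{1/2} = V^{1/2}(I + cB)$, valid for any scalar $c$ since both sides equal $V^{1/2} + cVAV^{1/2}$; this gives $V^{-1/2}(I + cVA)^{-1}V^{1/2} = (I + cB)^{-1}$ and likewise $V^{-1/2}(I - cVA)V^{1/2} = I - cB$. Applying these with $c = \theta h/2$ and $c = (1-\theta)h/2$ turns $V^{-1/2}GV^{1/2}$ into $(I + \tfrac{\theta h}{2}B)^{-1}(I - \tfrac{(1-\theta)h}{2}B)$; inserting $V^{1/2}V^{-1/2}$ and using $V^{-1/2}Vb = V^{1/2}b$ turns $V^{-1/2}g$ into $(I + \tfrac{\theta h}{2}B)^{-1}\tfrac h2 V^{1/2}b$; and since $(hV)^{1/2} = h^{1/2}V^{1/2}$, we get $V^{-1/2}\nu = (I + \tfrac{\theta h}{2}B)^{-1}h^{1/2}\xi$ with $\xi\sim\normal(0,I)$. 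Collecting these three terms gives exactly the proposal in \eqref{eq hat1}.

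There is no real obstacle: the argument is entirely bookkeeping, and the only point requiring care is pushing $V^{\pm1/2}$ through the resolvent $(I + \tfrac{\theta h}{2}VA)^{-1}$ while respecting the non-symmetric products $VA$ and $AV$ — the single identity $(I + cVA)V^{1/2} = V^{1/2}(I + cB)$ handles all of it. This is why the lemma's proof is omitted.
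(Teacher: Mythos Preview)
Your proposal is correct and is exactly the straightforward computation the paper alludes to when it says the proof is omitted; you invoke Lemma~\ref{lem:orthog} with $W=V^{1/2}$ and verify by direct substitution how the target and the AR(1) ingredients transform, with the single commutation identity $(I+cVA)V^{1/2}=V^{1/2}(I+cB)$ doing all the work on the proposal side. There is nothing to add or compare.
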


Matrices $G$ and $\Sigma$ in \eqref{eq hat1} are now functions of $B$ so we can apply Theorems~\ref{thm:ng1} and \ref{thm:ng2} to \eqref{eq hat1} to obtain the following theorem.  Algorithm~\eqref{eq hat1} is not intended for computation; we only use it to determine the convergence properties of \eqref{GenLang}.  The proof of Theorem~\ref{thm genlang conv} is in the appendix; further details can be found in \cite{NFelecreport}.

\begin{theorem}
\label{thm genlang conv}
Suppose there are constants $c,C > 0$ and $\kappa \geq 0$ such that the eigenvalues $\lambda_i^2$ of $B = V^{1/2} A V^{1/2}$ (equivalently, $\lambda_i^2$ are eigenvalues of $VA$) satisfy
\[
	c i^\kappa \leq \lambda_i \leq C i^\kappa, \qquad i=1,\ldots,d.
\]
Also suppose that $\psi_d(x) := \phi_d(V^{1/2}x)$ satisfies Assumption \ref{assumption1} (with $|\cdot|_s = |B^s \cdot|$ for $s \in \bbR$) and $r = \max \{ s,s',s'' \}$ satisfies Assumption \ref{assumption2}.

If $h = l^2 d^{-1/3 - 2 \kappa}$ for $l > 0$ and $\tau = \lim_{d\rightarrow \infty} \frac{1}{d^{6\kappa + 1}} \sum_{i=1}^d \lambda_i^6$ then Algorithm~\eqref{GenLang} in equilibrium satisfies
\begin{equation}
\label{eq expect}
	\mathrm{E}[\alpha(x,y)] \rightarrow 2 \Phi\left( -\frac{l^3 |\theta-\tsfrac{1}{2}| \sqrt{\tau}}{4} \right) 
\end{equation}
and for normalized eigenvector $q_i$ of $B$ corresponding to $\lambda_i^2$,
\begin{equation}
\label{eq jump2}
	\mathrm{E}[ |q_i^T V^{-1/2} (x' - x)|^2 ] = 2 h \Phi\left( -\frac{l^3 |\theta-\tsfrac{1}{2}| \sqrt{\tau}}{4} \right) + \mathrm{o}(h)
\end{equation}
as $d \rightarrow \infty$.
\end{theorem}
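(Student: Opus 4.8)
The plan is to pass to the coordinates $\hat x = V^{-1/2}x$ of Lemma~\ref{lem ch1}, in which \eqref{GenLang} becomes Algorithm~\eqref{eq hat1} with proposal matrices $G = (I+\tsfrac{\theta h}{2}B)^{-1}(I-\tsfrac{(1-\theta)h}{2}B)$ and $\Sigma = h(I+\tsfrac{\theta h}{2}B)^{-2}$, both functions of $B = V^{1/2}AV^{1/2}$, and reference Gaussian $\normal(B^{-1}V^{1/2}b,B^{-1})$. One then applies Theorems~\ref{thm:ng1} and \ref{thm:ng2}, whose hypotheses on $\psi_d$ and on $r = \max\{s,s',s''\}$ are exactly Assumptions~\ref{assumption1} and \ref{assumption2} as assumed here (and $\lambda_i \asymp i^\kappa$ turns Assumption~\ref{assumption2} into $\kappa(4r-2)<-1$). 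Since $\hat x = V^{-1/2}x$, the functional $q_i^T V^{-1/2}(x'-x)$ in \eqref{eq jump2} is precisely $q_i^T(\hat x'-\hat x)$ for \eqref{eq hat1}, so both \eqref{eq expect} and \eqref{eq jump2} reduce to the corresponding statements for \eqref{eq hat1}.

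The first concrete step is to read off the eigenstructure. With $\lambda_i^2$ the $i$th eigenvalue of $B$ (equivalently of $VA$), $G_i = (1-\tsfrac{(1-\theta)h}{2}\lambda_i^2)/(1+\tsfrac{\theta h}{2}\lambda_i^2)$, so $\tilde g_i = \tsfrac h2\lambda_i^2/(1+\tsfrac{\theta h}{2}\lambda_i^2)$; Corollary~\ref{lem equiv} ($\mathcal{A} = \Sigma^{-1}(I-G^2)$) gives $\tilde\lambda_i^2 = \lambda_i^2(1+(\theta-\tsfrac12)\tsfrac h2\lambda_i^2)$, hence $r_i = -(\theta-\tsfrac12)\tsfrac h2\lambda_i^2$ and $\tilde r_i = (1+(\theta-\tsfrac12)\tsfrac h2\lambda_i^2)^{-1}$. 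Because $\mathcal{A}^{-1}\beta = A^{-1}b$, the proposal equilibrium and the reference Gaussian share their mean, so $\hat r_i = 0$; this annihilates $T_{0i},T_{1i},T_{2i}$ and leaves $T_{3i} = \tsfrac12 r_i\hat g_i$, $T_{4i} = -\tsfrac12 r_i\tilde r_i\hat g_i$, $T_{5i} = -r_i G_i(\tilde r_i\hat g_i)^{1/2}$. The choice $h = l^2 d^{-1/3-2\kappa}$ with $\lambda_i \le Ci^\kappa \le Cd^\kappa$ gives the crucial uniform estimate $h\lambda_i^2 = \mathcal{O}(d^{-1/3})$; setting $t_{d,i} = h\lambda_i^2$ (so $t = 1/3$), I would then verify the remaining hypotheses of Theorems~\ref{thm:ng1}–\ref{thm:ng2} term by term: $\tilde g_i^2 = \mathcal{O}(t_{d,i}^2) = \mathcal{O}(t_{d,i})$ and $\hat g_i = \mathcal{O}(t_{d,i})$; $\tilde r_i$ bounded and $\rho(G)<1$ for large $d$; $T_{5i} = \mathcal{O}(h^{3/2}\lambda_i^3) = \mathcal{O}(d^{-1/2})$ with $T_{3i},T_{4i} = \mathcal{O}(h^2\lambda_i^4) = \mathcal{O}(d^{-2/3})$ of strictly smaller order, so all $T_{ji} = \mathcal{O}(d^{-1/2})$; and $\lim_d \sum_i \sigma_{d,i}^2 < \infty$ (from the next step). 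The Lindeberg-type condition \eqref{eq Tcond} then follows from uniform $\mathcal{O}(d^{-1/2})$ smallness together with a positive lower bound on $\sum_i T_{ji}^2$ when $\theta\neq\tsfrac12$; when $\theta = \tsfrac12$ every $T_{ji}$ vanishes and the theorem degenerates to the (correct) trivial limits $\mathrm{E}[\alpha(x,y)]\to 2\Phi(0) = 1$ and $\mathrm{E}[|q_i^T V^{-1/2}(x'-x)|^2] = 2h\Phi(0) + \mathrm{o}(h) = h + \mathrm{o}(h)$.

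Next I compute $\mu$ and $\sigma^2$. Expanding in $h\lambda_i^2\to 0$: $T_{5i}^2 = (\theta-\tsfrac12)^2\tsfrac{h^3}{4}\lambda_i^6(1+\mathrm{o}(1))$ dominates $2T_{3i}^2+2T_{4i}^2 = \mathcal{O}(h^4\lambda_i^8)$, so $\sigma_{d,i}^2 = (\theta-\tsfrac12)^2\tsfrac{h^3}{4}\lambda_i^6(1+\mathrm{o}(1))$, and since $h^3 = l^6 d^{-1-6\kappa}$, $\sum_i\sigma_{d,i}^2 \to (\theta-\tsfrac12)^2\tsfrac{l^6}{4}\tau =: \sigma^2$. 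Meanwhile $\mu_{d,i} = T_{3i}+T_{4i} = \tsfrac12 r_i\hat g_i(1-\tilde r_i) = -(\theta-\tsfrac12)^2\tsfrac{h^3}{8}\lambda_i^6(1+\mathrm{o}(1)) = -\tsfrac12\sigma_{d,i}^2(1+\mathrm{o}(1))$ uniformly, whence $\mu = -\tsfrac12\sigma^2$. Since $\sum_i(T_{1i}^2+T_{3i}^2) = \mathcal{O}(d^{-1/3})\to 0$, Corollary~\ref{cor:ng} gives $\mu_{ng} = \mu$ and $\sigma_{ng}^2 = \sigma^2$, so Theorem~\ref{thm:ng1} yields $\mathrm{E}_{\pi_d}[\alpha(x,y)] \to \Phi(\mu/\sigma) + \mathrm{e}^{\mu+\sigma^2/2}\Phi(-\sigma-\mu/\sigma) = 2\Phi(-\sigma/2) = 2\Phi(-l^3|\theta-\tsfrac12|\sqrt\tau/4)$, which is \eqref{eq expect}. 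For \eqref{eq jump2} I substitute the same quantities into Theorem~\ref{thm:ng2}: with $\hat r_i = 0$ the bracket collapses to $(\hat g_i/\tilde\lambda_i^2)\,\mathrm{E}_{\pi_d}[\alpha(x,y)] + (\tilde g_i^2\gamma_{d,i}/\lambda_i^2)v_{d,i}$, where $\hat g_i/\tilde\lambda_i^2 = h(1+\mathcal{O}(h\lambda_i^2)) = t_{d,i}\lambda_i^{-2} + \mathrm{o}(t_{d,i}\lambda_i^{-2})$ and $\tilde g_i^2\gamma_{d,i}/\lambda_i^2 = \mathcal{O}(h^2\lambda_i^2) = \mathrm{o}(t_{d,i}\lambda_i^{-2})$ is absorbed into the error, giving $\mathrm{E}_{\pi_d}[(q_i^T(\hat x'-\hat x))^2] = h\,\mathrm{E}_{\pi_d}[\alpha(x,y)] + \mathrm{o}(h) = 2h\Phi(-l^3|\theta-\tsfrac12|\sqrt\tau/4) + \mathrm{o}(h)$.

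The conceptual content is light; the work is in the uniform bookkeeping: keeping every $\mathcal{O}$/$\mathrm{o}$ estimate uniform in $i$ (which is exactly what $ci^\kappa\le\lambda_i\le Ci^\kappa$ buys, through $h\lambda_i^2 = \mathcal{O}(d^{-1/3})$), verifying each hypothesis of Theorems~\ref{thm:ng1}–\ref{thm:ng2} including the Lindeberg condition, and — the one genuine cancellation — pinning down the leading coefficients $\sigma_{d,i}^2 \sim (\theta-\tsfrac12)^2\tsfrac{h^3}{4}\lambda_i^6$ and $\mu_{d,i} \sim -(\theta-\tsfrac12)^2\tsfrac{h^3}{8}\lambda_i^6$ precisely enough to get $\mu = -\sigma^2/2$, since it is this identity that collapses the general acceptance formula to $2\Phi(\cdot)$ and the jump-size formula to $2h\,\mathrm{E}[\alpha]$. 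A minor side task is confirming that Assumption~\ref{assumption1} for $\psi_d = \phi_d(V^{1/2}\cdot)$ forces $\gamma_{d,i}$ and $\kappa_{d,i}$ to be uniformly bounded, so that the limits defining $\mu_{ng},\sigma_{ng}^2$ (and the quantities appearing in Theorem~\ref{thm:ng2}) exist.
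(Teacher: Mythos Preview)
Your proposal is correct and follows essentially the same route as the paper: transform via $\hat x = V^{-1/2}x$ (Lemma~\ref{lem ch1}), compute the eigenvalue quantities $G_i,\tilde\lambda_i^2,r_i,\tilde r_i,\hat r_i=0$ and hence $T_{ji}$, verify the hypotheses of Theorems~\ref{thm:ng1}--\ref{thm:ng2} using $h\lambda_i^2=\mathcal{O}(d^{-1/3})$, show $T_{5i}^2$ dominates so that $\sigma^2=(\theta-\tfrac12)^2\tfrac{l^6}{4}\tau$ and $\mu=-\sigma^2/2$, invoke Corollary~\ref{cor:ng}, and read off the jump size from Theorem~\ref{thm:ng2} with $\hat r_i=0$. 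The paper packages the vanishing of the higher-order sums via a small technical lemma (Lemma~\ref{lem:a4}), whereas you absorb them into uniform $(1+\mathrm{o}(1))$ factors, but the substance is identical; your aside on the Lindeberg condition~\eqref{eq Tcond} is harmless but not strictly needed here, since Theorem~\ref{thm:ng1} replaces it by the listed $\mathcal{O}(d^{-1/2})$ conditions you already check.
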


When $\theta \neq 1/2$,~\eqref{eq jump2} can be maximized by tuning $h$ (by tuning $l$).  Using $s^3 = l^3 |\theta-\tsfrac{1}{2}| \sqrt{\tau} / 4$, and ignoring the $\mathrm{o}(h)$ term, we have 
\begin{equation}
\label{eq:mj}
	\max_{l > 0} \frac{2 l^2}{d^{1/3+2\kappa}} \Phi \! \left( -\frac{l^3 |\theta-\tsfrac{1}{2}| \sqrt{\tau}}{4} \right) \! = \max_{s > 0} \left( \frac{128}{|\theta-\tsfrac{1}{2}|^{2} \tau  d^{1+6\kappa}} \right)^{\frac{1}{3}} \!\! s^2 \Phi(-s^3),
\end{equation}
which is maximized at $s = s_0=0.8252$, independent of $\tau$ and $\theta$.  Therefore, the acceptance probability that maximizes the expected jump distance is $2 \Phi(-s_0^3) = 0.574$.  This result was first stated for MALA in \cite{RR1998} for product-form target distributions, and generalized in \cite{RR2001,BRS2009}.  This result is a further generalization because it allows  off-diagonal covariance terms in the Gaussian reference measure.  The Gaussian case with off-diagonal covariance is also considered in \cite{BS2009a}, where it is assumed that the spectral decomposition of $A$ is available to explicitly perform computations in the coordinate system where the Gaussian has product form.  We stress that Algorithm~\ref{GenLang} is implemented in the original coordinate system and the spectral decomposition of $A$ never needs to be computed.

Even though the optimal expected acceptance probability is independent of $\tau$ and $\theta$, the jump size does depend on these quantities.  Jump size can be improved by choosing $V$ to control the eigenvalues of $B$ and minimize $\tau$, which is a scaled $6$-norm of the sequence $\{\lambda_i\}$ of eigenvalues of $B$.  The dependence of $\tau$ on $\{\lambda_i\}$ shows the relative importance of controlling small and large $\lambda_i$.   Other results in \cite{BS2009a,CRSW2013,PST2012} consider only $V = I$ or $A^{-1}$, i.e.,  no preconditioner or the perfect preconditioner.  Choosing $V = A^{-1}$ is typically computationally infeasible for large $d$, so it is useful to consider general s.p.d. $V$.  

We can also improve the jump size by choosing $\theta \approx \tsfrac{1}{2}$; however, if the action of $(I + \tsfrac{\theta h}{2} VA)^{-1}$ is expensive to compute then the computational cost of choosing $\theta \neq 0$ may outweigh the benefit of improved jump size.  

If $\theta = \tsfrac{1}{2}$, Theorem~\ref{thm genlang conv} still applies but the choice of $h = \mathcal{O}(d^{-1/3-2\kappa})$ is suboptimal because it implies a zero expected jump size in the limit $d \rightarrow \infty$.  Alternatively, $\theta = \tsfrac{1}{2}$ implies $Z = \log ( \frac{\tilde{\pi}_d(y)q(y,x)}{\tilde{\pi}_d(x) q(x,y)} ) = 0$, and, provided $\phi_d$ is bounded, we can apply Theorem \ref{thm:ng12} to obtain a strictly positive expected acceptance probability for any $h$.   Cotter \emph{et al.}~\cite{CRSW2013} have shown it is possible to obtain a strictly positive expected acceptance probability and jump size with positive $h$ as $d \rightarrow \infty$.  In this sense, Theorem~\ref{thm genlang conv} is slightly weaker than the results in \cite{CRSW2013} for the case $\theta = \tsfrac{1}{2}$, since our result does not prevent a zero expected acceptance probability as $d \rightarrow \infty$.  If $\theta = \tsfrac{1}{2}$ and the target is Gaussian then acceleration is possible \cite{F2013,FP2016,FP2014}.

\subsection{Multi-step proposals}
\label{sec lstep}

Given a stochastic AR(1) proposal~\eqref{eq:ar1} for some $G$, $g$ and $\Sigma$, we can form a new stochastic AR(1) proposal by taking $L$ steps
$$
	y^{(l)} = G y^{(l-1)} + g + \nu^{(l)}, \qquad l=1,\dotsc,L,
$$
where $\nu^{(l)} \stackrel{\text{i.i.d.}}{\sim} \normal(0,\Sigma)$ and $y^{(0)} = x$.  This yields a new \emph{multi-step} stochastic AR(1) proposal 
\begin{equation}
\label{eq:lprop}
	y = G_L x + g_L + \nu, \qquad \mbox{with } \nu \iid \normal(0,\Sigma_L)
\end{equation}
where $G_L = G^L$, $g_L = (I-G)^{-1}(I-G^L)g$ and $\Sigma_L = \sum_{l=0}^{L-1} G^l \Sigma (G^T)^l$.  Hence, the eigenvalues of $G_L$ are $G_i^L$, and if $G_i < 1$ then the multi-step proposal chain is convergent in distribution with the same equilibrium distribution as the $1$-step proposal chain, i.e., $\mathcal{A}_L = \mathcal{A}$ and $\beta_L = \beta$.  

When the proposal chain is $\normal(\mathcal{A}^{-1}\beta,\mathcal{A}^{-1})$-reversible, see Lemma~\ref{lem reversible}, the multi-step proposal MH algorithm corresponds to a surrogate transition method \cite[p.194]{Liu2001book} and the MH acceptance probability has simplified form and low computational cost, as shown in the following lemma.
\begin{lemma}
\label{lem:lstepalpha}
If $G$ and $\Sigma$ are functions of $A$, then the acceptance probability for a multi-step proposal satisfies
$$
	\alpha(x,y) 
	= 1 \wedge \frac{ \pi_d(y) q_L(y,x)}{ \pi_d(x) q_L(x,y) }
	= 1 \wedge \frac{ \pi_d(y) \pi^*(x)}{\pi_d(x) \pi^*(y)}
$$
where $q_L(x,dy) = q_L(x,y) \dd y$ is the transition kernel for a multi-step proposal $y$ given $x$ from \eqref{eq:lprop} and $\pi^*(x) \propto \exp( -\frac{1}{2} x^T \mathcal{A} x + \beta^T x )$.  
\end{lemma}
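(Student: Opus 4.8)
The plan is to show that the proposal kernel $q_L(x,\cdot)$ for the $L$-step chain is reversible with respect to $\pi^*$, i.e.\ that $\pi^*(x)\,q_L(x,y) = \pi^*(y)\,q_L(y,x)$ for all $x,y$; once this is established, substituting $q_L(y,x)/q_L(x,y) = \pi^*(x)/\pi^*(y)$ into the Metropolis--Hastings ratio \eqref{eq alpha} gives the claimed simplified acceptance probability immediately, and the computational point is that one never needs to evaluate the (typically intractable) Gaussian density $q_L$, only the quadratic form $\pi^*$.

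First I would record that the hypothesis ``$G$ and $\Sigma$ are functions of $A$'' forces $G\Sigma$ symmetric, so Corollary~\ref{lem equiv} and Lemma~\ref{lem reversible} apply to the single-step proposal: the AR(1) chain $y = Gx + g + \nu$ is $\normal(\mathcal{A}^{-1}\beta,\mathcal{A}^{-1})$-reversible, and $\pi^*$ is exactly this equilibrium density (up to normalization). The $L$-step proposal \eqref{eq:lprop} is the $L$-fold composition of this single-step kernel. Then I would invoke the standard fact that a composition of kernels each reversible with respect to the \emph{same} measure $\pi^*$ is itself $\pi^*$-reversible --- more precisely, if $q(x,\cdot)$ is $\pi^*$-reversible then so is $q^{(L)}(x,\cdot) := \int q(x,z_1)q(z_1,z_2)\cdots q(z_{L-1},y)\,\dd z_1\cdots \dd z_{L-1}$, which one checks by relabelling the integration variables in reverse order and using detailed balance at each stage. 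This is a one-line induction; I would either cite it (e.g.\ \cite{Liu2001book}) or spell out the two-step case and remark that the general case follows by induction.

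Alternatively, and perhaps more transparently given what the excerpt has already set up, I would verify reversibility of $q_L$ \emph{directly} from the explicit formulas $G_L = G^L$, $g_L = (I-G)^{-1}(I-G^L)g$, $\Sigma_L = \sum_{l=0}^{L-1} G^l\Sigma(G^T)^l$, exactly mirroring the proof of Lemma~\ref{lem reversible}. The key algebraic identities to reproduce at level $L$ are: $\Sigma_L = \mathcal{A}^{-1} - G_L\mathcal{A}^{-1}G_L^T$ (which follows from the telescoping sum $\sum_{l=0}^{L-1}G^l(\mathcal{A}^{-1} - G\mathcal{A}^{-1}G^T)(G^T)^l = \mathcal{A}^{-1} - G^L\mathcal{A}^{-1}(G^T)^L$ via the Corollary to Theorem~\ref{thm:2.1}); the symmetry of $G_L\Sigma_L$, hence $\mathcal{A} = \Sigma_L^{-1}(I-G_L^2) = \Sigma_L^{-1} - G_L^T\Sigma_L^{-1}G_L$; and $\beta = \mathcal{A}(I-G_L)^{-1}g_L$, which reduces to $\beta = \mathcal{A}(I-G)^{-1}g$ after cancelling $(I-G^L)$. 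Feeding these into the detailed-balance check $\pi^*(x)q_L(x,y) = \pi^*(y)q_L(y,x)$ with $q_L(x,y)\propto \exp(-\tfrac12(y-G_Lx-g_L)^T\Sigma_L^{-1}(y-G_Lx-g_L))$ is then the same quadratic-form bookkeeping as in Lemma~\ref{lem reversible}.

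The main obstacle is not conceptual but notational: keeping the $L$-dependent telescoping identities straight, in particular confirming that $\mathcal{A}_L = \mathcal{A}$ and $\beta_L = \beta$ (already asserted in the text preceding the lemma) so that the \emph{same} $\pi^*$ serves as the reversibility measure for every $L$. I would lean on the composition-of-reversible-kernels argument to sidestep most of this, relegating the identity $\mathcal{A}_L=\mathcal{A}$, $\beta_L = \beta$ to a one-sentence remark justified by Theorem~\ref{thm:2.1} applied to $(G_L,g_L,\Sigma_L)$ together with the geometric-series manipulations above.
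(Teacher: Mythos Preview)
Your proposal is correct and follows precisely the route the paper takes: the text preceding the lemma already signals that the result follows from Lemma~\ref{lem reversible} (single-step $\pi^*$-reversibility) together with the surrogate-transition formulation in \cite[p.194]{Liu2001book}, and no separate proof is written out. Your plan to deduce $\pi^*$-reversibility of $q_L$ from that of $q$ by composition (or, equivalently, by redoing the detailed-balance check with $G_L,g_L,\Sigma_L$ and the identity $\mathcal{A}_L=\mathcal{A}$, $\beta_L=\beta$) is exactly the intended argument.
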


The computational cost of a multi-step proposal is $L$ times the cost of the single-step proposal, but the expected squared jump size for a MH algorithm with a multi-step proposal is, in general, not $L$ times the original.  

For example, consider multi-step SLA with $G = (I-\tsfrac{h}{2}A)$, $\beta = \tsfrac{h}{2}b$, and $\Sigma = hI$, and the proposal is given by \eqref{eq:lprop}.  Applying Theorems \ref{thm:ng1} and \ref{thm:ng2} and Corollary \ref{cor:ng} we obtain the following result.  
\begin{theorem}
\label{thm:lstep}
Suppose there exist constants $c,C>0$ and $\kappa \geq 0$ such that the eigenvalues $\lambda_i^2$ of $A$ satisfy
$$
	c i^\kappa \leq \lambda_i \leq C i^\kappa, \qquad i=1,\dotsc,d.
$$
Also suppose that $\phi_d$ satisfies Assumption \ref{assumption1} and $r = \max \{ s,s',s'' \}$ satisfies Assumption \ref{assumption2}.

If $h = l^2 d^{-1/3-2\kappa}$ for some $l >0$ then multi-step SLA, in equilibrium, satisfies
\begin{equation}
\label{eq:lstepa}
	\mathrm{E}[\alpha(x,y)] \rightarrow 2 \Phi\left( - \frac{l^3 \sqrt{L\tau}}{8} \right)
\end{equation}
and
\begin{equation}
\label{eq ljump}
	\mathrm{E}[(x_i'-x_i)^2] = 2 L h \Phi\left( -\frac{l^3 \sqrt{L\tau}}{8} \right) + \mathrm{o}(h)
\end{equation}
as $d \rightarrow \infty$ where $\tau = \lim_{d \rightarrow \infty} \frac{1}{d^{1 + 6 \kappa}} \sum_{i=1}^d \lambda_i^6$.
\end{theorem}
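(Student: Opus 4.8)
The plan is to view multi-step SLA as an instance of~\eqref{eq:lprop} with the single-step data $G = I - \tfrac{h}{2}A$, $g = \tfrac{h}{2}b$, $\Sigma = hI$, and then to apply Theorems~\ref{thm:ng1} and~\ref{thm:ng2} together with Corollary~\ref{cor:ng}. First I would record the structural facts: $G_L = G^L$ and $\Sigma_L = \sum_{l=0}^{L-1}G^l\Sigma(G^T)^l$ are polynomials in $A$, so ``$G$ and $\Sigma$ are functions of $A$'' holds; by Theorem~\ref{thm:2.1} the $L$-step chain has the same equilibrium as the $1$-step chain, and since $g = \tfrac{h}{2}b$ forces $\mathcal{A}^{-1}\beta = A^{-1}b$, we get $\hat{r}_i = \mu_i - \tilde{\mu}_i = 0$ for every $i$. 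Writing $\lambda_i^2$ for the eigenvalues of $A$, Corollary~\ref{lem equiv} gives the eigenvalues of $\mathcal{A}$ as $\tilde{\lambda}_i^2 = \lambda_i^2(1 - \tfrac{h}{4}\lambda_i^2)$, and the eigenvalue of $G_L$ is $G_i = (1 - \tfrac{h}{2}\lambda_i^2)^L$. Because $h = l^2 d^{-1/3-2\kappa}$ and $\lambda_i \le Ci^\kappa \le Cd^\kappa$, the quantity $h\lambda_i^2$ is $\mathcal{O}(d^{-1/3})$ uniformly in $i$, so Taylor expansion yields $\tilde{g}_i = \tfrac{Lh}{2}\lambda_i^2(1+\mathcal{O}(h\lambda_i^2))$, $\hat{g}_i = Lh\lambda_i^2(1+\mathcal{O}(h\lambda_i^2))$, $r_i = \tfrac{h}{4}\lambda_i^2$, and $\tilde{r}_i = (1-\tfrac{h}{4}\lambda_i^2)^{-1} = 1 + \mathcal{O}(h\lambda_i^2)$, all uniformly in $i$.

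Next I would check the hypotheses of Theorem~\ref{thm:ng1} with $t_{d,i} = h\lambda_i^2 = \mathcal{O}(d^{-1/3})$. Since $\hat{r}_i = 0$ we have $T_{0i} = T_{1i} = T_{2i} = 0$ and $\tilde{g}_i^2\hat{r}_i^2\lambda_i^2 = 0$, while the estimates above give $\tilde{g}_i^2 = \mathcal{O}(h^2\lambda_i^4)$ and $\hat{g}_i = \mathcal{O}(h\lambda_i^2)$, both $\mathcal{O}(t_{d,i})$, and $\tilde{r}_i$ bounded. Substituting into the definitions, $T_{3i} = \tfrac12 r_i\hat{g}_i$ and $T_{4i} = -\tfrac12 r_i\tilde{r}_i\hat{g}_i$ are $\mathcal{O}(h^2\lambda_i^4) = \mathcal{O}(d^{-2/3})$, and $T_{5i} = -r_iG_i(\tilde{r}_i\hat{g}_i)^{1/2}$ is $\mathcal{O}((h\lambda_i^2)^{3/2}) = \mathcal{O}(d^{-1/2})$, so all $T_{ji}$ are $\mathcal{O}(d^{-1/2})$ uniformly. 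For the limiting sums I would compute leading orders: $T_{5i}^2 = \tfrac{Lh^3}{16}\lambda_i^6(1+\mathcal{O}(h\lambda_i^2))$, so with $h^3 = l^6 d^{-1-6\kappa}$ and $\tau = \lim d^{-1-6\kappa}\sum_i\lambda_i^6$ we get $\sum_i T_{5i}^2 \to \tfrac{Ll^6\tau}{16}$, whereas $\sum_i T_{3i}^2$ and $\sum_i T_{4i}^2$ are $\mathcal{O}(h^4\sum_i\lambda_i^8) = \mathcal{O}(d^{-1/3}) \to 0$ (using $\lambda_i \le Ci^\kappa$). Hence $\sigma^2 = \lim\sum_i\sigma_{d,i}^2 = \tfrac{Ll^6\tau}{16}$ exists, and similarly $\mu = \lim\sum_i(T_{3i}+T_{4i}) = \lim\sum_i\tfrac12 r_i\hat{g}_i(1-\tilde{r}_i) = -\tfrac{Ll^6\tau}{32}$ exists. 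The point to extract is $\mu = -\tfrac12\sigma^2$, after which $\mathrm{e}^{\mu+\sigma^2/2} = 1$ and $\mu/\sigma = -\sigma/2$, so the limit in Theorem~\ref{thm accept} collapses to $2\Phi(-\sigma/2) = 2\Phi(-\tfrac{l^3\sqrt{L\tau}}{8})$. Finally, $T_{1i} = 0$ and $\sum_i T_{3i}^2 \to 0$, so Corollary~\ref{cor:ng} gives $\mu_{ng} = \mu$ and $\sigma_{ng}^2 = \sigma^2$; with Assumptions~\ref{assumption1} and~\ref{assumption2} in force (as hypothesized), Theorem~\ref{thm:ng1} then delivers~\eqref{eq:lstepa}.

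For~\eqref{eq ljump} I would feed the same quantities into Theorem~\ref{thm:ng2}. With $\hat{r}_i = 0$ the cross term vanishes and the formula reduces to $\mathrm{E}_{\pi_d}[(q_i^T(x'-x))^2] = \tfrac{\hat{g}_i}{\tilde{\lambda}_i^2}\mathrm{E}_{\pi_d}[\alpha(x,y)] + \tfrac{\tilde{g}_i^2\gamma_{d,i}}{\lambda_i^2}v_{d,i} + \mathrm{o}(t_{d,i}\lambda_i^{-2})$ for some $0 \le v_{d,i} \le 1$. Here $\tfrac{\hat{g}_i}{\tilde{\lambda}_i^2} = Lh(1+\mathrm{o}(1))$ uniformly in $i$, the term $\tfrac{\tilde{g}_i^2\gamma_{d,i}}{\lambda_i^2} = \mathcal{O}(h^2\lambda_i^2) = \mathcal{O}(h\cdot d^{-1/3}) = \mathrm{o}(h)$ since $\gamma_{d,i}$ is bounded, and $\mathrm{o}(t_{d,i}\lambda_i^{-2}) = \mathrm{o}(h\lambda_i^2\lambda_i^{-2}) = \mathrm{o}(h)$; combined with~\eqref{eq:lstepa} this gives $\mathrm{E}_{\pi_d}[(q_i^T(x'-x))^2] = 2Lh\,\Phi(-\tfrac{l^3\sqrt{L\tau}}{8}) + \mathrm{o}(h)$ for $q_i$ a unit eigenvector of $A$, which is~\eqref{eq ljump} (reading $x_i'-x_i$ as $q_i^T(x'-x)$; they coincide in the product-form coordinates). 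I expect the main obstacle to be the uniform-in-$i$ bookkeeping: making the Taylor expansions of $(1-\tfrac{h}{2}\lambda_i^2)^L$ and of $\tilde{r}_i$ valid uniformly over $i = 1,\dots,d$, and, above all, correctly isolating the terms of $\sum_i\sigma_{d,i}^2$ and $\sum_i\mu_{d,i}$ that survive the limit (only $T_{5i}^2$, and $\tfrac12 r_i\hat{g}_i(1-\tilde{r}_i)$, each of order $h^3\lambda_i^6$) from those that vanish, since this is exactly what produces the constant $\tfrac{Ll^6\tau}{16}$ and the identity $\mu = -\tfrac12\sigma^2$ underlying the $0.574$ rule.
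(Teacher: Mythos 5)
Your proposal is correct and follows essentially the same route as the paper's proof: identifying $t_{d,i}=h\lambda_i^2$, computing $\tilde{\lambda}_i^2$, $G_i=(1-\tfrac12 t_{d,i})^L$, and the $T_{ji}$, verifying the hypotheses of Theorems~\ref{thm:ng1} and~\ref{thm:ng2} with Corollary~\ref{cor:ng}, and extracting $\sigma^2=\tfrac{Ll^6\tau}{16}$, $\mu=-\tfrac12\sigma^2$ so that the acceptance limit collapses to $2\Phi(-\tfrac{l^3\sqrt{L\tau}}{8})$, then reading off the jump size from $\hat{g}_i\tilde{\lambda}_i^{-2}=Lh+\mathrm{o}(h)$. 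Your computations of the surviving terms agree with the paper's in every detail.
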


The performance of multi-step SLA can be maximized by tuning $l$ to maximize the expected jump size.  From \eqref{eq ljump}, using $s = l (L\tau)^{1/6}/2$, we have
\begin{equation}
\label{eq:lmj}
	\max_{l > 0} \frac{ 2 L l^2}{ d^{1/3 + 2 \kappa}} \Phi\left( -\frac{l^3 \sqrt{L\tau}}{8} \right) = \max_{s>0} \left( \frac{512 L^{2}}{ d^{1 + 6 \kappa} \tau } \right)^{\frac{1}{3}} s^2 \Phi( - s^3 ),
\end{equation}
which is maximized at $s=s_0 = 0.8252$.  Therefore, the expected jump size of multi-step SLA is maximized when the acceptance probability is $2 \Phi(-s_0^3) = 0.574$, which is the same as SLA, but this corresponds to an expected jump size that is $L^{2/3}$ times larger than the jump size for SLA (compare \eqref{eq:lmj} and \eqref{eq:mj} with $\theta=0$) in the limit $d \rightarrow \infty$.

To compare the efficiency of multi-step SLA for varying $L$ we must also consider the computational cost of the method.  For example, suppose that matrix-vector products with $A$ cost $1$ unit of CPU time, inner products and drawing independent samples from $\normal(0,I)$ have negligible cost, and evaluating $\phi_d$ costs $t$ units of CPU time.  From Lemma \ref{lem:lstepalpha}, the acceptance ratio for multi-step SLA is
$$
	\alpha(x,y) = 1 \wedge \exp\left( \tsfrac{h}{4} (|A x|^2 - |Ay|^2) - \tsfrac{h}{2} b^T (Ax-Ay) + \phi_d(x) - \phi_d(y) \right),
$$
so multi-step SLA uses $L$ matrix vector products with $A$ per proposal and an additional matrix-vector product and two evaluations of $\phi_d$ in the acceptance ratio.  If the proposal is accepted then we can reuse some of the calculations in the acceptance ratio, but if it is rejected then a matrix-vector product and an evaluation of $\phi_d$ are wasted.  The average cost of a multi-step SLA iteration is then
$$
	L + t + (1-\alpha)(1 + t) = 1.426 + 0.426 t + L
$$
units of CPU time, after optimally tuning multi-step SLA so that the acceptance probability is $0.574$.  Also let $1$ unit of jump size be the expected jump size of SLA, then multi-step SLA has an expected jump size of $L^{2/3}$ units, and the `efficiency' of multi-step SLA is calculated as jump size divided by CPU time,
$$
	\frac{L^{2/3}}{1.426 + 0.426 t + L},
$$
which is maximized at $L = 2(1.426 + 0.426 t)$.  Therefore, SLA can be improved by using multi-step SLA with $L > 1$, with the optimal value of $L$ increasing with the cost of evaluating $\phi_d$.  If $t = 0$, then $L=3$ is optimal.  Figure \ref{fig:1} also shows the efficiency of multi-step SLA for other values of $t$, showing that using $L\geq 3$ always improves efficiency.  This type of analysis can be repeated for other multi-step proposals.

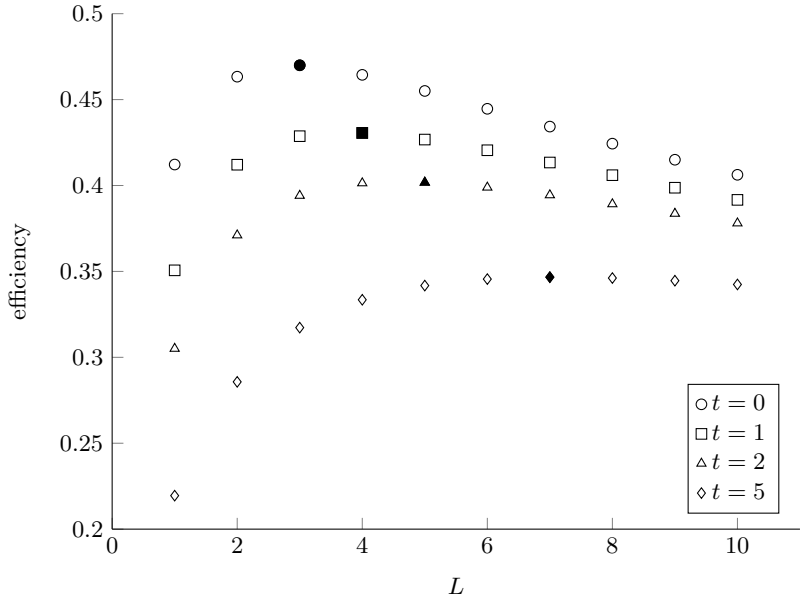
\begin{figure}
\begin{center}
%
%
%
%
\begin{tikzpicture}

\begin{axis}[%
view={0}{90},
width = 3.6in,
height = 2.7in,
scale only axis,
xmin=0, xmax=11,
xlabel={$L$},
ymin=0.2, ymax=0.5,
ylabel={efficiency},
axis lines*=left,
legend style={at={(0.97,0.03)},anchor=south east,align=left}]
\addplot [
color=black,
only marks,
mark=o,
mark options={solid}
]
coordinates{
 (1,0.412201154163232)(2,0.463339478099299)(3,0.46996923250156)(4,0.464401419054505)(5,0.455029215408165)(6,0.444644121854919)(7,0.434287409212316)(8,0.424358158285593)(9,0.414996039796876)(10,0.406230424786695) 
};
\addlegendentry{$t=0$};

\addplot [
color=black,
only marks,
mark=square,
mark options={solid}
]
coordinates{
 (1,0.350631136044881)(2,0.412097884726947)(3,0.428706476309131)(4,0.430595027305151)(5,0.426739307970354)(6,0.420520536028353)(7,0.413387450296314)(8,0.406008932196508)(9,0.398705188990253)(10,0.391629162471547) 
};
\addlegendentry{$t=1$};

\addplot [
color=black,
only marks,
mark=triangle,
mark options={solid}
]
coordinates{
 (1,0.305064063453325)(2,0.371061489473632)(3,0.394104551544506)(4,0.401376568937519)(5,0.401761162161702)(6,0.398879831951513)(7,0.39440673744589)(8,0.389180774469741)(9,0.383645035549053)(10,0.378041116925621) 
};
\addlegendentry{$t=2$};

\addplot [
color=black,
only marks,
mark=diamond,
mark options={solid}
]
coordinates{
 (1,0.219490781387182)(2,0.285709332607667)(3,0.317279411691871)(4,0.333488896213571)(5,0.34175055378832)(6,0.345534454677127)(7,0.346656471203389)(8,0.346140533056421)(9,0.344596106317476)(10,0.342401064739804) 
};
\addlegendentry{$t=5$};

\addplot [
color=blue,
only marks,
mark=*,
mark options={solid,fill=black,draw=black},
forget plot
]
coordinates{
 (3,0.46996923250156) 
};
\addplot [
color=blue,
only marks,
mark=square*,
mark options={solid,fill=black,draw=black},
forget plot
]
coordinates{
 (4,0.430595027305151) 
};
\addplot [
color=blue,
only marks,
mark=triangle*,
mark options={solid,fill=black,draw=black},
forget plot
]
coordinates{
 (5,0.401761162161702) 
};
\addplot [
color=blue,
only marks,
mark=diamond*,
mark options={solid,fill=black,draw=black},
forget plot
]
coordinates{
 (7,0.346656471203389) 
};
\end{axis}
\end{tikzpicture}%
\end{center}
\caption{Efficiency of multi-step SLA for varying number of steps $L$ and varying computing cost $t$ for evaluating $\phi_d$.  Filled markers correspond to optimal efficiency for given $t$. }
\label{fig:1}
\end{figure}

\subsection{Hybrid Monte Carlo}

HMC \cite{DKPR1987,BPRSS2010,N1993} with a normal target is another example of a stochastic AR(1) proposal.  When the target is a perturbation of a Gaussian ($\pi_d$ given by \eqref{eq target0} and \eqref{eq ref})  we can replace $\nabla \log \pi_d(x)$ in the definition of HMC with $\nabla \log \tilde{\pi}_d(x) = -Ax+b$ and also obtain a stochastic AR(1) proposal.  This is analogous to simplifying MALA to obtain SLA.

Choosing a s.p.d. matrix $V$, a step size $h$, and a number of steps $L$, the HMC proposal we analyze is computed by defining $q_0 = x$, sampling $p_0 \iid \normal(0,V^{-1})$, then for $l=0,\dotsc,L-1$ computing
\begin{align*}
	p_{l+1/2} &= p_{l} - \tsfrac{h}{2} (A q_{l} - b), \\
	q_{l+1} &= q_{l} + h V p_{l+1/2}, \\
	p_{l+1} &= p_{l+1/2} - \tsfrac{h}{2} (A q_{l+1} - b).
\end{align*}
The proposal is then $y := q_L$  given by 
\begin{equation}
\label{hmc prop2}
	\left[ \twobyone{y}{p_L} \right]
	=
	K^L \left[ \twobyone{x}{\xi} \right] + \sum_{l=0}^{L-1} K^l J \left[ \twobyone{0}{\tsfrac{h}{2}b} \right] 
	\quad\mbox{with $\xi \iid \normal(0,V)$},
\end{equation}
where $K,J \in \bbR^{2d \times 2d}$ are defined as
$$
	K = 
	\left[ \!\! \twobytwo{I-\tsfrac{h^2}{2}VA}{hV}{-h A + \tsfrac{h^3}{4}AVA}{I-\tsfrac{h^2}{2}AV} \!\right]
\mbox{ and }
	J = 	
	\left[ \twobytwo{2I}{hV}{-\tsfrac{h}{2}A}{2I - \tsfrac{h^2}{2}AV} \right],
$$
which can also be written as
\begin{equation}
\label{hmc prop}
	y = (K^L)_{11} x + \left( SJ \left[ \twobyone{0}{\tsfrac{h}{2}b} \right] \right)_1 + (K^L)_{12} \xi, \quad \mbox{with $\xi \iid \normal(0,V^{-1})$,}
\end{equation}
where $(K^L)_{ij}$ is the $ij$ block (of size $d\times d$) of $K^L$, $S = (I-K)^{-1} (I-K^L)$ and $(\cdot)_1$ are the first $d$ entries of the vector $(\cdot)$.

Therefore, applying Theorem \ref{thm:2.1} and Corollary \ref{lem equiv}, $\mathcal{A} = \Sigma^{-1} (I-G^2)$ and $\mathcal{A}^{-1} \beta = A^{-1} b$, where $G = (K^L)_{11}$ and $\Sigma = (K^L)_{12} V^{-1} (K^L)_{12}^T$.

For this version of HMC, $G$ and $\Sigma$ are not functions of $\mathcal{A}$, as required for our theory.  Again, a coordinate transformation will fix this.

\begin{theorem}
\label{thm iso hmc}
Under the coordinate transformation
$$
	\left[ \twobyone{\hat{x}}{\hat{p}} \right] \leftrightarrow \mathcal{V}^{-1} \left[ \twobyone{x}{p} \right], \qquad \mbox{where } 
	\mathcal{V} = \left[ \twobytwo{V^{1/2}}{0}{0}{V^{-1/2}} \right] \in \bbR^{2d \times 2d},
$$
the Hamiltonian $H(x,p):= \frac{1}{2} p^T V p + \frac{1}{2} x^T A x - b^T x$ and MH algorithm with target $\pi_d$ and proposal \eqref{hmc prop} are transformed to Hamiltonian $\mathcal{H}(x,p):= \tsfrac{1}{2} p^T p + \tsfrac{1}{2} x^T B x - (V^{1/2}b)^T x$ and a MH algorithm with
\begin{equation}
\label{eq hmchat1}
\begin{split}
	\mbox{Target:} & \; \pi_d \mbox{ where } \tsfrac{\dd \pi_d}{\dd \tilde{\pi}_d}(x) = \exp( - \psi(x) ), \tilde{\pi}_d \mbox{ is }  \normal(B^{-1} V^{1/2} b, B^{-1}), \\
	\mbox{Proposal:} & \; 
	y = (\mathcal{K}^L)_{11} x + \left( \mathcal{S} \mathcal{J} \left[ \twobyone{0}{\tsfrac{h}{2} V^{1/2} b} \right] \right)_1 + (\mathcal{K}^L)_{12} \xi,
\end{split}
\end{equation}
where $\xi \sim \normal(0,I)$, $\psi(x) = \phi_d(V^{1/2}x)$, $B = V^{1/2} A V^{1/2}$, $\mathcal{S} = (I-\mathcal{K})^{-1}(I-\mathcal{K}^L)$,
$$
	\mathcal{K} = \left[ \twobytwo{I-\tsfrac{h^2}{2}B}{hI}{-h B + \tsfrac{h^3}{4}B^2}{I-\tsfrac{h^2}{2}B} \right]
	\quad \mbox{and} \quad
	\mathcal{J} = \left[ \twobytwo{2I}{hI}{-\tsfrac{h}{2}B}{2I - \tsfrac{h^2}{2}B} \right].
$$
\end{theorem}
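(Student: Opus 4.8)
The plan is to prove the claim by direct computation, exploiting that $\mathcal{V}$ is block diagonal so conjugation by $\mathcal{V}$ acts blockwise and commutes with extraction of the $d\times d$ sub-blocks, and then to transfer the target distribution via Lemma~\ref{lem:orthog}, exactly as in Lemma~\ref{lem ch1}. Throughout, the correspondence $[\hat{x};\hat{p}] = \mathcal{V}^{-1}[x;p]$ reads $x = V^{1/2}\hat{x}$, $p = V^{-1/2}\hat{p}$.

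First I would verify invariance of the Hamiltonian: substituting $x = V^{1/2}\hat{x}$, $p = V^{-1/2}\hat{p}$ into $H(x,p) = \tsfrac{1}{2} p^T V p + \tsfrac{1}{2} x^T A x - b^T x$ and using $V^{-1/2} V V^{-1/2} = I$, $V^{1/2} A V^{1/2} = B$, and $b^T V^{1/2} = (V^{1/2} b)^T$ gives $H(x,p) = \mathcal{H}(\hat{x},\hat{p})$. Correspondingly the momentum refresh $p_0 \iid \normal(0,V^{-1})$ from the definition of the HMC proposal becomes $\hat{p}_0 = V^{1/2} p_0 \iid \normal(0,I)$, matching $\xi \iid \normal(0,I)$ in~\eqref{eq hmchat1} and the isotropic kinetic term of $\mathcal{H}$.

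Next I would conjugate the Verlet integrator. Since one leapfrog step in phase space is the affine map associated with $K$ and the forcing $J[0;\tsfrac{h}{2}b]$ (the content of~\eqref{hmc prop2}), it suffices to check the blockwise identities $\mathcal{V}^{-1} K \mathcal{V} = \mathcal{K}$ and $\mathcal{V}^{-1} J \mathcal{V} = \mathcal{J}$, using $V^{-1/2} V V^{-1/2} = I$, $V^{1/2} A V^{1/2} = B$, and hence $V^{1/2} A V A V^{1/2} = B^2$. From these, $\mathcal{V}^{-1} K^L \mathcal{V} = \mathcal{K}^L$, $\mathcal{V}^{-1} \big( \sum_{l=0}^{L-1} K^l \big) \mathcal{V} = \sum_{l=0}^{L-1} \mathcal{K}^l$ so that $\mathcal{V}^{-1} S \mathcal{V} = \mathcal{S}$, and the lower block of $\mathcal{V}^{-1}$ being $V^{1/2}$ turns $J[0;\tsfrac{h}{2}b]$ into $\mathcal{J}[0;\tsfrac{h}{2}V^{1/2}b]$. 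Reading off the first block row of~\eqref{hmc prop} under $\hat{y} = V^{-1/2} y$, $\hat{x} = V^{-1/2} x$, $\hat{\xi} = V^{1/2}\xi$ and using that block-diagonal conjugation preserves the block structure, the terms $(K^L)_{11} x$, $\big(SJ[0;\tsfrac{h}{2}b]\big)_1$, and $(K^L)_{12}\xi$ become $(\mathcal{K}^L)_{11}\hat{x}$, $\big(\mathcal{S}\mathcal{J}[0;\tsfrac{h}{2}V^{1/2}b]\big)_1$, and $(\mathcal{K}^L)_{12}\hat{\xi}$, which is precisely the proposal of~\eqref{eq hmchat1}.

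Finally, since the HMC proposal with $p_0$ integrated out defines a transition kernel on $\bbR^d$, Lemma~\ref{lem:orthog} applies with $W = V^{1/2}$: the transformed target is $\propto \pi_d(V^{1/2}\hat{x})$, and as $\normal(A^{-1}b,A^{-1})$ pushes forward under $\hat{x} = V^{-1/2} x$ to $\normal(B^{-1}V^{1/2}b,B^{-1})$ (using $V^{-1/2} A^{-1} V^{-1/2} = B^{-1}$), the change of measure with respect to this Gaussian is $\exp(-\phi_d(V^{1/2}\hat{x})) = \exp(-\psi(\hat{x}))$; Lemma~\ref{lem:orthog} also yields equality of acceptance probabilities and conjugacy of the $n$-step kernels, so analysing~\eqref{eq hmchat1} suffices. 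The proof involves no genuine difficulty, but the step needing most care is the symplectic bookkeeping: positions scale by $V^{1/2}$ while momenta scale by $V^{-1/2}$, so $\mathcal{V} = \operatorname{diag}(V^{1/2},V^{-1/2})$ is not a scalar-type dilation — this is exactly what renders $H$ invariant, closes the Verlet conjugation, and converts the refresh covariance $V^{-1}$ into $I$.
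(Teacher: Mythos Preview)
Your proof is correct and follows essentially the same approach as the paper: the paper's proof consists of the single line ``Use $K = \mathcal{V}\mathcal{K}\mathcal{V}^{-1}$ and $J = \mathcal{V}\mathcal{J}\mathcal{V}^{-1}$,'' and you have simply unpacked these conjugation identities block by block, checked the Hamiltonian and momentum-refresh transformations explicitly, and invoked Lemma~\ref{lem:orthog} for the target, exactly as in Lemma~\ref{lem ch1}.
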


\begin{proof}
Use $K = \mathcal{V} \mathcal{K} \mathcal{V}^{-1}$ and $J = \mathcal{V} \mathcal{J} \mathcal{V}^{-1}$.
\end{proof}

Similar coordinate transformations are used in classical mechanics \cite[p. 103]{A1989}, see also \cite{BCSS2014}.

Note that the transformed MH algorithm has $G$ and $\Sigma$ as functions of $B$ and $\mathcal{A} = (\mathcal{K}^L)_{12}^{-2} (I - (\mathcal{K}^L)_{11}^2)$ and $\mathcal{A}^{-1} \beta = B^{-1} V^{1/2} b$.  Therefore, we can apply Theorems \ref{thm:ng1} and \ref{thm:ng2} to this version of HMC.  First, we need to determine the eigenvalues of $G = (\mathcal{K}^L)_{11}$, which are the same as the eigenvalues of $(K^L)_{11}$ since the matrices are similar.  The proof of the following result is in the appendix.
\begin{theorem}
\label{thm uhmc conv}
Let $\lambda_i^2$ be eigenvalues of $B = V^{1/2} A V^{1/2}$.  Then $G = (\mathcal{K}^L)_{11}$ has eigenvalues 
$$
	G_i = \cos( L \theta_i )
$$
where $\theta_i = -\cos^{-1} ( 1 - \tsfrac{h^2}{2} \lambda_i^2 )$.
\end{theorem}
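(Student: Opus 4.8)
The plan is to reduce the $2d\times 2d$ matrix $\mathcal{K}$ to a family of $2\times 2$ blocks indexed by the eigenvalues of $B$, recognize each block as a matrix in $SL(2,\bbR)$ with trace $2\cos\theta_i$, and compute the $(1,1)$ entry of its $L$-th power. Since $B = V^{1/2}AV^{1/2}$ is s.p.d., write $B = Q\Lambda Q^T$ with $Q$ orthogonal and $\Lambda = \operatorname{diag}(\lambda_1^2,\dots,\lambda_d^2)$. Conjugating $\mathcal{K}$ by $\operatorname{diag}(Q^T,Q^T)\in\bbR^{2d\times 2d}$ replaces $B$ by $\Lambda$ throughout; the resulting matrix leaves each coordinate plane $\operatorname{span}\{e_i,e_{d+i}\}$ invariant, acting there as
$$
	\mathcal{K}_i = \left[ \twobytwo{1-\tsfrac{h^2}{2}\lambda_i^2}{h}{-h\lambda_i^2 + \tsfrac{h^3}{4}\lambda_i^4}{1-\tsfrac{h^2}{2}\lambda_i^2} \right].
$$
Consequently $G = (\mathcal{K}^L)_{11}$ is orthogonally similar (via $Q$) to the diagonal matrix with entries $(\mathcal{K}_i^L)_{11}$, $i=1,\dots,d$, so it remains to show $(\mathcal{K}_i^L)_{11} = \cos(L\theta_i)$.

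Now $\operatorname{tr}\mathcal{K}_i = 2 - h^2\lambda_i^2 = 2\cos\theta_i$ (recall $\cos\theta_i = 1-\tsfrac{h^2}{2}\lambda_i^2$; the minus sign in $\theta_i = -\cos^{-1}(1-\tsfrac{h^2}{2}\lambda_i^2)$ is immaterial since $\cos$ is even), while a short computation gives $\det\mathcal{K}_i = (1-\tsfrac{h^2}{2}\lambda_i^2)^2 + h^2\lambda_i^2 - \tsfrac{h^4}{4}\lambda_i^4 = 1$, reflecting symplecticity of the leapfrog map. By Cayley--Hamilton, $\mathcal{K}_i^2 = 2\cos\theta_i\,\mathcal{K}_i - I$, and induction yields the standard closed form
$$
	\mathcal{K}_i^L = \frac{\sin L\theta_i}{\sin\theta_i}\,\mathcal{K}_i - \frac{\sin(L-1)\theta_i}{\sin\theta_i}\,I
$$
valid whenever $\sin\theta_i\neq 0$; the exceptional cases $h^2\lambda_i^2\in\{0,4\}$ follow by continuity, replacing $\sin n\theta_i/\sin\theta_i$ by its limit. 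Taking the $(1,1)$ entry, using $(\mathcal{K}_i)_{11}=\cos\theta_i$ together with $\sin(L-1)\theta_i = \sin L\theta_i\cos\theta_i - \cos L\theta_i\sin\theta_i$, everything except $\cos L\theta_i$ cancels, giving $G_i = (\mathcal{K}_i^L)_{11} = \cos(L\theta_i)$.

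The argument is essentially computational; the only delicate points are the bookkeeping in the first step — checking that conjugation by $\operatorname{diag}(Q^T,Q^T)$ genuinely decouples $\mathcal{K}$ into the planes $\operatorname{span}\{e_i,e_{d+i}\}$ and that this carries the $(1,1)$ block of $\mathcal{K}^L$ to $\operatorname{diag}\big((\mathcal{K}_i^L)_{11}\big)_i$ — and the degenerate case $\sin\theta_i = 0$. One can avoid the decoupling entirely by noting that the four $d\times d$ blocks of $\mathcal{K}$ are polynomials in $B$ and hence commute, so that $2\times 2$ matrix algebra (Cayley--Hamilton and the Chebyshev recursion above) may be carried out verbatim over the commutative algebra generated by $B$, with the eigenvalue formula then following from the functional calculus; I would present whichever variant reads more cleanly.
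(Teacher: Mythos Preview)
Your proof is correct and shares the same first step as the paper: conjugate $\mathcal{K}$ by $\operatorname{diag}(Q,Q)$ to decouple it into the $2\times 2$ blocks $\mathcal{K}_i$ (the paper calls them $k_i$), so that the eigenvalues of $(\mathcal{K}^L)_{11}$ are the numbers $(\mathcal{K}_i^L)_{11}$.

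The methods diverge only in how $(\mathcal{K}_i^L)_{11}$ is computed. The paper exhibits an explicit similarity $\mathcal{K}_i = D_i R(\theta_i) D_i^{-1}$ with $D_i=\operatorname{diag}(1,a_i)$, $a_i=\lambda_i\sqrt{1-\tsfrac{h^2}{4}\lambda_i^2}$, and $R(\theta_i)$ a rotation, from which $\mathcal{K}_i^L = D_i R(L\theta_i) D_i^{-1}$ and the $(1,1)$ entry is read off directly as $\cos(L\theta_i)$. You instead observe $\det\mathcal{K}_i=1$ and $\operatorname{tr}\mathcal{K}_i=2\cos\theta_i$, invoke Cayley--Hamilton, and run the Chebyshev recursion to get the closed form for $\mathcal{K}_i^L$. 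The paper's route yields all four entries of $\mathcal{K}_i^L$ at once (in particular $(\mathcal{K}_i^L)_{12}=-a_i^{-1}\sin(L\theta_i)$, which is used downstream in Theorem~\ref{thm hmc conv} to compute $\tilde\lambda_i^2$), whereas your route avoids having to guess the conjugating matrix and, as you note, lifts verbatim to the commutative algebra generated by $B$ without the preliminary diagonalization. Both are standard and short; either would serve.
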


Using these two theorems, we could apply Theorems \ref{thm:ng1} and \ref{thm:ng2} to HMC, when $L$ is fixed as $d \rightarrow \infty$.  Instead, as in \cite{BPRSS2010}, we consider the case when $Lh$ is kept fixed while $d \rightarrow \infty$.  For this latter case, HMC does not satisfy all of the conditions for Theorems \ref{thm:ng1} and \ref{thm:ng2}, since $\mathrm{E}_{\pi_d}[|y - x|_r^{2q}] \nrightarrow 0$ as $d \rightarrow \infty$, since $\tilde{g}_i^2, \; \hat{g}_i \nrightarrow 0$ as $d \rightarrow \infty$, see Lemma \ref{lem:phi} required for the proof of Theorems \ref{thm:ng1} and \ref{thm:ng2}.  However, if we consider the normal target case ($\phi(x)=0$), then we can apply Theorem \ref{thm accept} and Lemma \ref{lem jump} to get the following result.

\begin{theorem}
\label{thm hmc conv}
Suppose there are constants $c,C>0$ and $\kappa \geq 0$ such that the eigenvalues of $B = V^{1/2} A V^{1/2}$ satisfy
$$
	c i^\kappa \leq \lambda_i \leq C i^\kappa, \qquad i=1,\dotsc,d.
$$
If $h = l d^{-1/4-\kappa}$ for $l > 0$, and $L = \lfloor \tsfrac{T}{h} \rfloor$ for fixed $T$, then HMC (with proposal \eqref{hmc prop} and target $\normal(A^{-1}b,A^{-1})$), in equilibrium, satisfies
\begin{equation}
\label{eq hmcexpect}
	\mathrm{E}[\alpha(x,y)] \rightarrow  2 \Phi\left( -\frac{l^2 \sqrt{\tau}}{8} \right) 
\end{equation}
where $\tau = \lim_{d \rightarrow \infty} \frac{1}{d^{1+4\kappa}} \sum_{i=1}^d \lambda_i^4 \sin^2(\lambda_i T')$
and for eigenvector $q_i$ of $B$ corresponding to $\lambda_i^2$,
\begin{equation}
\label{eq hmcjump}
	\mathrm{E}[ |q_i^T V^{-1/2} (x' - x)|^2 ] \rightarrow 2 \frac{1-\cos(\lambda_i T')}{\lambda_i^2} 2 \Phi\left( -\frac{l^2 \sqrt{\tau}}{8} \right)
\end{equation}
as $d \rightarrow \infty$, where $T' = Lh$.
\end{theorem}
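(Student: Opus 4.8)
The plan is to reduce HMC with normal target to the diagonalized Gaussian-target setting of Section~\ref{sec gaussian} and then read off the two limits from Theorem~\ref{thm accept} and Lemma~\ref{lem jump}. First I would pass to the transformed algorithm~\eqref{eq hmchat1} via Theorem~\ref{thm iso hmc}; since $\phi_d\equiv 0$ the change-of-measure exponent $\psi(x)=\phi_d(V^{1/2}x)$ vanishes, so \eqref{eq hmchat1} is a MH algorithm with Gaussian target $\normal(B^{-1}V^{1/2}b,B^{-1})$ and AR(1) proposal whose $G=(\mathcal{K}^L)_{11}$ and $\Sigma=(\mathcal{K}^L)_{12}(\mathcal{K}^L)_{12}^{T}$ are functions of $B$; hence Theorem~\ref{thm accept} and Lemma~\ref{lem jump} apply to it, and, by the coordinate map $\hat x=V^{-1/2}x$ of Theorem~\ref{thm iso hmc}, the jump $q_i^{T}(\hat x'-\hat x)$ of the transformed chain equals $q_i^{T}V^{-1/2}(x'-x)$ with $q_i$ an eigenvector of $B$, matching the left side of~\eqref{eq hmcjump}.

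Working in the eigenbasis of $B$, Theorem~\ref{thm uhmc conv} gives $G_i=\cos(L\theta_i)$ with $\cos\theta_i=1-\tfrac{h^2}{2}\lambda_i^2$. The next step is to evaluate $\Sigma_i$: each $2\times2$ block $\mathcal{K}_i$ has determinant $1$ and trace $2\cos\theta_i$, so the Cayley--Hamilton / Chebyshev identity $\mathcal{K}_i^{L}=\tfrac{\sin(L\theta_i)}{\sin\theta_i}\mathcal{K}_i-\tfrac{\sin((L-1)\theta_i)}{\sin\theta_i}I$ yields $((\mathcal{K}^L)_{12})_i=h\sin(L\theta_i)/\sin\theta_i$, whence $\Sigma_i=h^2\sin^2(L\theta_i)/\sin^2\theta_i$ and, by Corollary~\ref{lem equiv}, $\tilde\lambda_i^2=\Sigma_i^{-1}(1-G_i^2)=\sin^2\theta_i/h^2=\lambda_i^2(1-\tfrac{h^2}{4}\lambda_i^2)$. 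Consequently $r_i=\tfrac{h^2}{4}\lambda_i^2$, $\tilde r_i=(1-\tfrac{h^2}{4}\lambda_i^2)^{-1}$, and $\hat r_i=0$ because $\mathcal{A}^{-1}\beta=B^{-1}V^{1/2}b$ already equals the target mean. Since $h\lambda_i\le Cl\,d^{-1/4}\to0$ uniformly in $i$, each of $r_i$, $h^2\lambda_i^2$ and $1-\tilde r_i$ is $\mathcal{O}(d^{-1/2})$ uniformly, while $\hat g_i=\sin^2(L\theta_i)\in[0,1]$ and $\tilde g_i=1-\cos(L\theta_i)\in[0,2]$.

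Then I would substitute into the $T_{ji}$: since $\hat r_i=0$ we get $T_{0i}=T_{1i}=T_{2i}=0$ and $T_{3i}=\tfrac12 r_i\hat g_i$, $T_{4i}=-\tilde r_iT_{3i}$, $T_{5i}=-r_iG_i(\tilde r_i\hat g_i)^{1/2}$, all $\mathcal{O}(d^{-1/2})$ uniformly, so the Lyapunov condition~\eqref{eq Tcond} holds because $\max_i|T_{ji}|^{\delta}\big/(\sum_iT_{ji}^2)^{\delta/2}\to0$ for any $\delta>0$ while $\sum_iT_{ji}^2$ stays bounded. A short computation gives $\mu_{d,i}=T_{3i}(1-\tilde r_i)=-\tfrac12 r_i^2\hat g_i(1+\mathcal{O}(d^{-1/2}))$ and, using $\sin^2+\cos^2=1$, $\sigma_{d,i}^2=2T_{3i}^2(1+\tilde r_i^2)+T_{5i}^2=r_i^2\hat g_i(1+\mathcal{O}(d^{-1/2}))$, hence $\mu_{d,i}=-\tfrac12\sigma_{d,i}^2(1+\mathcal{O}(d^{-1/2}))$ uniformly. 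Since $r_i^2\hat g_i=\tfrac{h^4}{16}\lambda_i^4\sin^2(L\theta_i)$ and $h^4=l^4d^{-1-4\kappa}$, replacing $\sin^2(L\theta_i)$ by $\sin^2(\lambda_iT')$ in the sums gives $\sigma^2=\lim_{d\to\infty}\sum_i\sigma_{d,i}^2=\tfrac{l^4}{16}\tau$ and $\mu=-\tfrac{l^4}{32}\tau=-\tfrac12\sigma^2$. Feeding the relation $\mu=-\tfrac12\sigma^2$ into Theorem~\ref{thm accept} collapses $\Phi(\tfrac\mu\sigma)+\mathrm{e}^{\mu+\sigma^2/2}\Phi(-\sigma-\tfrac\mu\sigma)$ to $2\Phi(-\sigma/2)=2\Phi(-l^2\sqrt\tau/8)$, which is~\eqref{eq hmcexpect}. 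For~\eqref{eq hmcjump} I would apply Lemma~\ref{lem jump}: for fixed $i$, $\mu_{d,i},\sigma_{d,i}=\mathcal{O}(h^2\lambda_i^2)\to0$, and with $\hat r_i=0$, $\tilde\lambda_i^2=\lambda_i^2(1+\mathcal{O}(d^{-1/2}))$ and $G_i=\cos(L\theta_i)\to\cos(\lambda_iT')$, the factor $\big(\tfrac{1+G_i}{\tilde\lambda_i^2}+\tfrac{1-G_i}{\lambda_i^2}\big)(1-G_i)\to\tfrac{2}{\lambda_i^2}\big(1-\cos(\lambda_iT')\big)$, so multiplying by $\mathrm{E}[\alpha(x,y)]\to2\Phi(-l^2\sqrt\tau/8)$ yields~\eqref{eq hmcjump}.

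The routine parts are the block-power identity for $\mathcal{K}^L$ and the algebra collapsing the $T_{ji}$ to $\mu_{d,i}\approx-\tfrac12\sigma_{d,i}^2\approx-\tfrac12 r_i^2\hat g_i$. The genuine obstacle is the uniform asymptotic bookkeeping needed to pass from the modified leapfrog angles $L\theta_i$ to $-\lambda_iT'$ inside the weighted sums defining $\sigma^2$ and $\mu$ (and correspondingly $G_i$ to $\cos(\lambda_iT')$): this rests on $h\lambda_i\to0$ uniformly together with a bound on the accumulated phase discrepancy $\tfrac{1}{d^{1+4\kappa}}\sum_i\lambda_i^4\,|\sin^2(L\theta_i)-\sin^2(\lambda_iT')|$, and on verifying that the $\mathrm{o}(\cdot)$ remainders in Theorems~\ref{thm accept} and~\ref{thm jumpsize} survive summation; the appendix carries this out.
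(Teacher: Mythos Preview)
Your proposal is correct and follows essentially the same route as the paper: transform via Theorem~\ref{thm iso hmc}, compute $G_i=\cos(L\theta_i)$ and $\tilde\lambda_i^2=\lambda_i^2(1-\tfrac{h^2}{4}\lambda_i^2)$, observe $\hat r_i=0$ so only $T_{3i},T_{4i},T_{5i}$ survive, verify the Lyapunov condition, obtain $\mu=-\tfrac{l^4\tau}{32}$, $\sigma^2=\tfrac{l^4\tau}{16}$ (hence $\mu=-\sigma^2/2$), and then invoke Theorem~\ref{thm accept} and Lemma~\ref{lem jump}. The only cosmetic difference is that you extract $(\mathcal{K}^L)_{12}$ via the Cayley--Hamilton/Chebyshev recurrence, whereas the paper reuses the rotation factorization $k_i=\mathrm{diag}(1,a)\,R(\theta_i)\,\mathrm{diag}(1,a^{-1})$ from the proof of Theorem~\ref{thm uhmc conv} to read off $(k_i^L)_{12}=a^{-1}\sin(L\theta_i)$; both yield the same $\tilde\lambda_i^2$, and the paper also handles the phase replacement $L\theta_i\to-\lambda_iT'$ exactly as you anticipate, via $\cos^{-1}(1-z)=\sqrt{2z}+\mathcal{O}(z^{3/2})$ giving $L\theta_i=-T'\lambda_i(1+\mathcal{O}(d^{-1/2}))$.
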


This result extends results in \cite{BPSSS2011,BPRSS2010} that only cover target distributions with diagonal covariance, and $\kappa > 1/2$ or $\kappa = 0$.  

Using the above result we can extend the cases for which $0.651$ is the optimal acceptance probability, obtained by by tuning $h$ (equivalently, by tuning $l$).  To maximize efficiency of HMC we should choose $h$ to maximize the jump size divided by compute time per proposal.  Since the compute time per proposal is proportional to $L = \frac{T'}{h}$, this corresponds to choosing $l$ to maximize
$$
	 C \sqrt{s} \Phi(s) 
$$
where $s = \frac{l^2 \sqrt{\tau}}{8}$ and $C$ is a constant, which is maximized at $s = s_0 = 0.4250$, which corresponds to an expected acceptance probability of $2 \Phi(s_0) = 0.651$.  This is the same expected acceptance probability found in \cite{BPRSS2010} that considered product form target distributions with $\lambda_i$ constant, $\forall i$, i.e., $d$ i.i.d. random variables.

Existing analyses of HMC \cite{BPRSS2010,BPSSS2011} has focused on how to tune $h$ for optimal efficiency; the expected acceptance probability should be $0.651$.  We also determine how efficiency depends on the mass matrix $V$ and the integration time $T$.  From Theorem \ref{thm hmc conv} we see that $V$ has a similar role to a preconditioner for linear systems of equations.  We should choose $V$ to minimize $\tau$, which is a weighted $4$-norm, while ensuring that the action of matrix multiplication with $V$ and sampling from $\normal(0,V^{-1})$ are cheap to compute.  This result is touched on in \cite{BPSSS2011} where they suggest taking $V = A^{-1}$, the perfect preconditioner, which then requires sampling from $\normal(0,A)$ that can be computationally infeasible.

For Langevin proposals in Section \ref{sec langevin}, $\tau$ was a weighted $6$-norm, see Theorem \ref{thm genlang conv}, but for HMC $\tau$ is a weighted $4$-norm.  As for the Langevin proposal case, this determines the relative importance of how large and small $\lambda_i$ contribute to efficiency, and how $V$ should be chosen.  
HMC is therefore less affected by variation in $\{ \lambda_i \}$ than Langevin proposals due to the $4$-norm instead of $6$-norm dependence of $\tau$.  While the $6$-norm result for Langevin proposals has been seen before in special cases \cite{RR2001,BRS2009}, the appearance of a $4$-norm of $\{\lambda_i \}$ in HMC is new.

Theorem \ref{thm hmc conv} also shows how to choose $T$ to maximize efficiency.  After tuning $h$ to achieve an acceptance probability of $0.651$ the expected squared jump size satisfies
$$
	\mathrm{E}[ |q_i^T V^{-1/2} (x' - x)|^2 ] \rightarrow 1.302 \frac{1-\cos(\lambda_i T')}{\lambda_i^2} \qquad \mbox{as $d \rightarrow \infty$}.
$$
After specifying which $i$ correspond to directions $q_i$ that are important, one we can then choose $T$ to maximize $1 - \cos(\lambda_i T')$ for those $i$.  Thus, an optimal choice of $T$ depends on the target, $V$ and directions of interest.  In the special case when $\lambda_i$ are all equal then $T = \frac{\pi}{\lambda_i}$ is optimal.

The analysis presented here is for HMC with the leap-frog or Stormer-Verlet integrator.  Versions of HMC using higher order integrators are also suggested in \cite{BPRSS2010} and integrators based on splitting methods (in the ODEs context) are suggested in \cite{BCSS2014} that minimize the Hamiltonian error after $L$ steps of the integrator.  It may be possible to evaluate these other methods by first expressing them as stochastic AR(1) proposals and applying this theory.  

The dimension independent version of HMC presented in \cite{BPSSS2011} corresponding to $V = A^{-1}$ and an additional coordinate change of $v = A^{-1} p$ is not analyzed here.  It exactly integrates the Hamiltonian system when the target is Gaussian, so acceptance is guaranteed.

\section{Discussion}
\label{sec discussion}

Until now, each MH algorithm with a stochastic AR(1) proprosal has required separate analysis, e.g. RWM, MALA, SLA, pCN and HMC.  In this article we have designed a unifying theory that encompasses all stationary convergent stochastic AR(1) proposals (not RWM) where $G$ and $\Sigma$ are functions of $A$, for the case where the target distribution is a change of measure from a Gaussian reference measure.  We are no longer constrained to constructing proposals by discretizing differential equations, and are free to consider a wider class of MH algorithms with stochastic AR(1) proposals.  For example, it is now possible to analyze multi-step proposals.  

When the target is $\normal(A^{-1}b,A^{-1})$, then statistical efficiency (expected squared jump size) of these MH algorithms depends on the eigenvalues of the iteration matrix $G$, and the difference between the proposal equilibrium distribution $\normal(\mathcal{A}^{-1}\beta,\mathcal{A}^{-1})$ and the target, see Theorems \ref{thm accept} and \ref{thm jumpsize}.  In particular, if $\normal(\mathcal{A}^{-1}\beta,\mathcal{A}^{-1})$ is identically $\normal(A^{-1}b,A^{-1})$, then proposals are always accepted and jump size is dimension independent provided $\rho(G)$ is bounded strictly below $1$ uniformly in $d$.  Such algorithms are analyzed and accelerated in \cite{F2013,FP2014,FP2016}.   Our theory shows that dimension independent jump size can also be attained when $\normal(\mathcal{A}^{-1}\beta,\mathcal{A}^{-1})$ is not equal to $\normal(A^{-1}b,A^{-1})$, but there are quite strict bounds on the difference, see Lemma \ref{lem accept}, Corollary \ref{cor:accept} and Lemma \ref{lem jump}.

When the target is a change of measure from a Gaussian reference measure, then provided $|\phi_d|$ is bounded, the performance of a MH algorithm with stochastic AR(1) proposal is similar to the normal target case, see Theorems \ref{thm:ng12} and \ref{thm:ng3}.  Corollary \ref{cor:dimindep} provides a sufficient condition for a dimension independent algorithm.  Under weaker conditions on $\phi_d$ (bounded below, a type of Lipschitz continuity, and bounded growth), but stronger conditions on the proposal that effectively force $y \rightarrow x$ as $d \rightarrow \infty$ (see conditions in Theorem \ref{thm:ng1}), then we precisely quantify the acceptance rate and jump size, see Theorems \ref{thm:ng1} and \ref{thm:ng2}.  These final results can be applied to Langevin diffusion-based proposals and HMC.  

Our analysis is different from earlier analyses that relied on a limiting differential equation (Langevin diffusion or Hamiltonian dynamics).  In particular, we have not specified anywhere that the Gaussian reference measure should be of trace class, $\sum_{i} \lambda_i^{-2} < \infty$, since we never define the limiting target distribution as $d \rightarrow \infty$.  Equivalently, Theorems \ref{thm genlang conv}, \ref{thm:lstep} and \ref{thm hmc conv} are not restricted to $\kappa > 1/2$, however, Theorems \ref{thm genlang conv} and \ref{thm:lstep} require that $\lim_{d\rightarrow\infty}\sum_{i=1}^d \lambda_i^{4r-2} < \infty$ for some $r \in \mathbb{R}$ related to the regularity of $\phi_d$.  

Of course, one must also consider the computational cost to determine the true efficiency of an algorithm.  Often, the computational costs are problem specific, so we have been careful to include a range of computationally feasible algorithms in our analysis.  Restricting $G$ and $\Sigma$ to be functions of $A$, as we have done, is a natural restriction to make in high dimensions for computational feasibility.  It also allows a simple coordinate transformation to diagonalize both the proposal and the Gaussian part of the target, reducing the analysis to the case when all matrices are diagonal.  Immediately, this extends the analysis of existing MH algorithms with stochastic AR(1) proposals, such as SLA, pCN and HMC, to target distributions where the covariance of the Gaussian part can have off-diagonal terms.  Computation of the spectral decomposition of $A$ (that defines a coordinate transformation) is not required to implement the methods we analyze.

The performance of initial versions of MALA and HMC was unsatisfactory in high dimensions because they require $\mathcal{O}(N^{1/3})$ and $\mathcal{O}(N^{1/4})$ steps, respectively, to traverse state space.  This led to dimension independent versions of these algorithms, e.g. CN and pCN \cite{CRSW2013}, and a modified version of HMC \cite{BPSSS2011}.  However, these dimension independent methods require the computation of a spectral decomposition of $A$, the action of $A^{-1}$, or sampling from $N(0,A)$ or $N(0,A^{-1})$, which could be computationally infeasible in high dimensions.  Theorems \ref{thm genlang conv} and \ref{thm hmc conv} allow us to use an imperfect preconditioner $V \neq A^{-1}$ for Langevin and HMC-type methods and trade off the benefits of choosing $V \approx A^{-1}$ with the added computational cost this entails.  A `good' preconditioner for Langevin proposals reduces a scaled $6$-norm of the eigenvalues of $VA$, while maintaining cheap computation of independent samples from $\normal(0,V)$ and multiplication with $V$.  For HMC with a normal target, one should choose $V$ to minimize a weighted scaled $4$-norm of the eigenvalues of $VA$, while maintaining cheap computation of samples from $N(0,V)$ and multiplication with $V$.  

Beskos \emph{et al.} \cite{BPRSS2010} showed that $0.651$ is the optimal acceptance probability for HMC when the target distribution has i.i.d. components.  This result also applies for a normal target when the eigenvalues of the preconditioned precision matrix $VA$ grow like $i^{2\kappa}$ for some $\kappa \geq 0$.  After tuning to an acceptance probability of $0.651$, we also provide a condition for choosing the optimal integration time $T$.

Multi-step proposals, that compute $L$ iterations of a stochastic AR(1) proposal before the MH accept/reject step, are a potential `free lunch' if they improve the efficiency of a `single-step' algorithm.  For SLA it is optimal to take $L > 3$ if the dominant computational costs are multiplying with $A$ and evaluating $\phi_d$.  The optimal acceptance probability of multi-step SLA is $0.574$, the same as $1$-step SLA.

We have not analyzed non-stationary stochastic AR(1) proposals where $G$ and $\Sigma$ may depend on $x$, but formally our analysis can be viewed as describing the local behaviour of such algorithms, where the current proposal has proposal equilibrium that is a local Gaussian approximation to the target.  

\appendix
\section{Proofs of Theorems}

\subsection{Proof of Theorem \protect\ref{thm accept}}

We will use the following Lyapunov central limit theorem, see e.g. \cite[Thm. 27.3]{billingsley1995}.

\begin{theorem}
\label{thm clt}
For each $d \in \mathbb{N}$ let $X_{d,1},\dotsc,X_{d,d}$ be a sequence of independent random variables each with finite expected value $\mu_{d,i}$ and variance $\sigma_{d,i}^2$.  Define
$
	s_d^2 := \sum_{i=1}^d \sigma_{d,i}^2.
$
If there exists a $\delta > 0$ such that
$$
	\lim_{d \rightarrow \infty} \frac{1}{s_d^{2+\delta}} \sum_{i=1}^d \mathrm{E}[ |X_{d,i} - \mu_{d,i}|^{2+\delta} ] = 0,
$$
then 
$$
	\frac{1}{s_d} \sum_{i=1}^d (X_{d,i} - \mu_{d,i}) \xrightarrow{\mathcal{D}} \normal(0,1) \qquad \mbox{as $d \rightarrow \infty$}.
$$
\end{theorem}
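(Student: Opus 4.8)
Since this is the classical Lyapunov central limit theorem, the shortest route is simply to cite it, as the authors do; but if one wants a self-contained argument, the plan is to reduce the Lyapunov hypothesis to Lindeberg's condition and then run Lindeberg's replacement (telescoping) method. First I would normalize: set $Y_{d,i} := (X_{d,i}-\mu_{d,i})/s_d$, so that $\mathrm{E}[Y_{d,i}]=0$, $\sigma_i^2 := \mathrm{Var}(Y_{d,i}) = \sigma_{d,i}^2/s_d^2$, $\sum_{i=1}^d \sigma_i^2 = 1$, and the hypothesis becomes $L_d := \sum_{i=1}^d \mathrm{E}[|Y_{d,i}|^{2+\delta}] \to 0$; the goal is then $\sum_{i=1}^d Y_{d,i} \xrightarrow{\mathcal{D}} \normal(0,1)$. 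From $L_d\to0$ I would extract two facts: (i) \emph{uniform negligibility}, since by Jensen's inequality $\sigma_i^{2+\delta} = (\mathrm{E}[Y_{d,i}^2])^{1+\delta/2} \le \mathrm{E}[|Y_{d,i}|^{2+\delta}]$, hence $\max_{1\le i\le d}\sigma_i \le L_d^{1/(2+\delta)}\to0$; and (ii) \emph{Lindeberg's condition}, since for every $\varepsilon>0$, on $\{|Y_{d,i}|>\varepsilon\}$ we have $Y_{d,i}^2 \le \varepsilon^{-\delta}|Y_{d,i}|^{2+\delta}$, so $\sum_{i=1}^d \mathrm{E}[Y_{d,i}^2\mathbf{1}_{\{|Y_{d,i}|>\varepsilon\}}] \le \varepsilon^{-\delta}L_d \to 0$.

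Next I would adjoin to the probability space independent variables $Z_{d,i}\sim\normal(0,\sigma_i^2)$, independent of the $Y_{d,j}$, fix a test function $f\in C_b^3$, and telescope: with $S=\sum_i Y_{d,i}$, $Z=\sum_i Z_{d,i}$, and $W_i := Y_{d,1}+\dots+Y_{d,i-1}+Z_{d,i+1}+\dots+Z_{d,d}$,
$$
	\mathrm{E}[f(S)] - \mathrm{E}[f(Z)] = \sum_{i=1}^d \big(\mathrm{E}[f(W_i+Y_{d,i})] - \mathrm{E}[f(W_i+Z_{d,i})]\big).
$$
For each $i$ I would Taylor-expand $f$ about $W_i$ to second order; because $W_i$ is independent of both $Y_{d,i}$ and $Z_{d,i}$, and these have matching first two moments, the zeroth-, first- and second-order terms cancel in each summand, leaving only remainders bounded by $\min(\|f''\|_\infty y^2,\tfrac{1}{6}\|f'''\|_\infty|y|^3)$. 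Splitting each remainder expectation at level $\varepsilon$ then yields $\sum_i \mathrm{E}|R_i^{Y}| \le C\big(\varepsilon\sum_i\sigma_i^2 + \sum_i\mathrm{E}[Y_{d,i}^2\mathbf{1}_{\{|Y_{d,i}|>\varepsilon\}}]\big)$ and $\sum_i \mathrm{E}|R_i^{Z}| \le C'\sum_i\sigma_i^3 \le C'\max_i\sigma_i$; by (i) and (ii) both vanish on letting $d\to\infty$ and then $\varepsilon\to0$. Since $Z\sim\normal(0,1)$ exactly (because $\sum_i\sigma_i^2=1$), this gives $\mathrm{E}[f(S)]\to\mathrm{E}[f(\xi)]$ with $\xi\sim\normal(0,1)$ for every $f\in C_b^3$, and since $C_b^3$ is convergence-determining this yields $S\xrightarrow{\mathcal{D}}\normal(0,1)$.

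The only delicate step is the remainder bookkeeping: because the hypothesis supplies only a $(2+\delta)$-th moment with possibly $\delta<1$, one cannot use a pure cubic remainder bound and must keep the two-sided bound $\min(y^2,|y|^3)$, splitting at $\{|Y_{d,i}|\le\varepsilon\}$ to trade the cubic estimate against the quadratic one and bring in the Lindeberg sum. The same split reappears if one prefers the characteristic-function proof instead, expanding $\prod_{i=1}^d \mathrm{E}[e^{\mathrm{i}tY_{d,i}}]$ using $|e^{\mathrm{i}x}-1-\mathrm{i}x+\tfrac{1}{2}x^2|\le\min(x^2,\tfrac{1}{6}|x|^3)$, the elementary bound $|\prod z_i-\prod w_i|\le\sum|z_i-w_i|$ for $|z_i|,|w_i|\le1$ (with $w_i=1-\tfrac{1}{2}t^2\sigma_i^2$, licit once negligibility makes $|w_i|\le1$), and Lévy's continuity theorem; either way the substantive content is the passage from Lyapunov to Lindeberg in the first paragraph together with the uniform negligibility bound.
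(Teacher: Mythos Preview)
Your proposal is correct: the paper does not prove this theorem at all but simply cites it as the classical Lyapunov central limit theorem (Billingsley, Thm.~27.3), exactly as you anticipate in your first sentence. Your self-contained reduction of the Lyapunov hypothesis to Lindeberg's condition, followed by the Lindeberg replacement argument, is the standard route and is carried out accurately.
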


An equivalent conclusion to this theorem is $\sum_{i=1}^d X_{d,i} \rightarrow \normal(\sum_{i=1}^d \mu_{d,i}, s_d^2)$ in distribution as $d \rightarrow \infty$.  Another useful fact is
\begin{equation}
\label{eq uf1}
	X \sim \normal(\mu,\sigma^2) \qquad \Rightarrow \qquad \mathrm{E}[ 1 \wedge \mathrm{e}^X ] = \Phi(\tsfrac{\mu}{\sigma}) + \mathrm{e}^{\mu + \sigma^2/2} \Phi(-\sigma - \tsfrac{\mu}{\sigma}),
\end{equation}
see e.g. \cite[Prop. 2.4]{RGG1997} or \cite[Lem. B.2]{BRS2009}.

\begin{proof}[Proof of Theorem \ref{thm accept}]
By Lemma \ref{lem2} it is sufficient to only consider \eqref{MH0} in the case where all matrices are diagonal matrices, e.g. $A = \operatorname{diag}(\lambda_1^2,\dotsc,\lambda_d^2)$, $\mathcal{A} = \operatorname{diag}(\tilde{\lambda}_1^2,\dotsc,\tilde{\lambda}_d^2)$, $G = \operatorname{diag}(G_1,\dotsc,G_d)$, $\mu_i = \lambda_i^{-2} b_i$, and $\tilde{\mu}_i = \tilde{\lambda}_i^{-2} \beta_i$.  

Since $\normal(\mathcal{A}^{-1}\beta,\mathcal{A}^{-1})$ is the equilibrium distribution of the proposal chain, $\mathcal{A}^{-1}\beta = G \mathcal{A}^{-1}\beta + g$, so $\tilde{\mu}_i = G_i \tilde{\mu}_i + g_i$.  Also, in equilibrium we have $x_i = \mu_i + \frac{1}{\lambda_i} \xi_i$ where $\xi_i \stackrel{\text{i.i.d.}}{\sim} \normal(0,1)$.  It then follows from \eqref{eq:ar1} and Corollary \ref{lem equiv} that
\begin{align*}
	y_i &= G_i \left( \mu_i + \frac{1}{\lambda_i} \xi_i \right) + g_i  + \frac{(1-G_i^2)^{1/2}}{\tilde{\lambda}_i} \nu_i \\
	&= \tilde{\mu}_i + G_i \hat{r} + \frac{G_i}{\lambda_i} \xi_i + \frac{\hat{g}_i^{1/2}}{\tilde{\lambda}_i} \nu_i \qquad \mbox{where $\nu_i \stackrel{\text{i.i.d.}}{\sim} \normal(0,1)$.}
\end{align*}
From Lemma \ref{lem1} we have $Z = \sum_{i=1}^d Z_{d,i}$ where
$$
	Z_{d,i} = -\tsfrac{1}{2} (\lambda_i^2 - \tilde{\lambda}_i^2)(y_i^2 - x_i^2) + (b_i-\beta_i)(y_i - x_i).
$$
Substituting $x_i$ and $y_i$ as above, using the identity $(b_i - \beta_i)\lambda_i^{-2} = \hat{r}_i + r_i \tilde{\mu}_i$, then after some algebra we eventually find
\begin{align*}
	Z_{d,i} &= T_{0i} + T_{1i}\xi_i + T_{2i}\nu_i + T_{3i} \xi_i^2 + T_{4i} \nu_i^2 + T_{5i} \xi_i \nu_i.
\end{align*}
Hence
$$
	\mu_{d,i}:= \mathrm{E}[Z_{d,i}] = T_{0i} + T_{3i} + T_{4i}
$$
and
\begin{align*}
	\sigma_{d,i}^2 &:= \mathrm{Var}[Z_{d,i}] = \mathrm{E}[Z_{d,i}^2] - \mathrm{E}[Z_{d,i}]^2 \\
	&= \left( T_{0i}^2 + T_{1i}^2 + T_{2i}^2 + 3 T_{3i}^2 + 3 T_{4i}^2 + T_{5i}^2 
	+ 2 T_{0i} T_{3i} + 2 T_{0i} T_{4i} + 2 T_{3i} T_{4i} \right) \\
	& \qquad - \left( T_{0i} + T_{3i} + T_{4i} \right)^2 \\
	&= T_{1i}^2 + T_{2i}^2 + 2 T_{3i}^2 + 2 T_{4i}^2 + T_{5i}^2	 
\end{align*}
and
\begin{align*}
	Z_{d,i} - \mu_{d,i} &= T_{1i} \xi_i + T_{2i} \nu_i + T_{3i} (\xi_i^2 - 1) + T_{4i} (\nu_i^2 - 1) + T_{5i} \xi_i \nu_i.
\end{align*}
Therefore, for any $d \in \mathbb{N}$ and $\delta > 0$ we can bound the Lyapunov condition in Theorem \ref{thm clt} as follows 
\begin{align*}
	\frac{1}{s_d^{2+\delta}} \sum_{i=1}^d \mathrm{E}[|Z_{d,i}-\mu_{d,i}|^{2+\delta}] 
	&\leq \frac{5^{2+\delta}}{s_d^{2+\delta}} \sum_{j=1}^5 C_j(\delta) \sum_{i=1}^d |T_{ji}|^{2+\delta} \\
	&\leq 5^{2+\delta} \sum_{j=1}^5 C_j(\delta) \frac{\sum_{i=1}^d |T_{ji}|^{2+\delta}}{\left( \sum_{i=1}^d T_{ji}^2 \right)^{1+\delta/2} }
\end{align*}
where $C_1(\delta) = C_2(\delta) = \mathrm{E}[|\xi|^{2+\delta}]$ , $C_3(\delta) = C_4(\delta) = \mathrm{E}[|\xi^2-1|^{2+\delta}]$
and $C_5(\delta) = \mathrm{E}[|\xi|^{2+\delta}]^2$, and $\xi \sim \normal(0,1)$.

Therefore, if \eqref{eq Tcond} holds then the result follows from Theorem \ref{thm clt} and \eqref{eq uf1}.
\end{proof}

\subsection{Proof of Theorem \protect\ref{thm jumpsize}}

We need the following technical lemma.

\begin{lemma}
\label{lem 5}
Suppose $\{ t_i \}_{i=1}^\infty \subset \bbR$ and $r > 0$.  Then, for any $k \in \mathbb{N}$,
\begin{equation}
\label{eq l1a}
	\lim_{d\rightarrow \infty} \frac{\sum_{i=1}^d |t_i|^r}{\left( \sum_{i=1}^d t_i^2 \right)^{r/2}} = 0 
	\quad \Rightarrow \quad
	\lim_{d\rightarrow \infty} \frac{\sum_{i=1, i \neq k}^d |t_i|^r}{\left( \sum_{i=1, i \neq k}^d t_i^2 \right)^{r/2}} = 0.
\end{equation}
\end{lemma}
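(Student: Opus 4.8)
The plan is to reduce the whole statement to one fact: the hypothesis forces $S_d := \sum_{i=1}^d t_i^2 \to \infty$. Once that is established, deleting the single fixed index $k$ changes the denominator only by a factor tending to $1$, while the numerator only decreases, so the conclusion follows by a squeeze. Throughout, write $P_d := \sum_{i=1}^d |t_i|^r$, and, for $d \geq k$, $S_d' := S_d - t_k^2 = \sum_{i=1, i\neq k}^d t_i^2$ and $P_d' := P_d - |t_k|^r = \sum_{i=1, i\neq k}^d |t_i|^r$. We may assume some $t_{i_0} \neq 0$, since if all $t_i = 0$ both ratios are trivial.

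First I would prove $S_d \to \infty$. Since $(S_d)$ and $(P_d)$ are non-decreasing, they have limits $S_\infty, P_\infty \in (0,\infty]$, both positive because $t_{i_0}\neq 0$. Suppose, for contradiction, that $S_\infty < \infty$. If $P_\infty < \infty$, then $P_d/S_d^{r/2} \to P_\infty/S_\infty^{r/2} > 0$, contradicting the hypothesis; if $P_\infty = \infty$, then, using $S_d \leq S_\infty$ and $r/2 > 0$, we get $P_d/S_d^{r/2} \geq P_d/S_\infty^{r/2} \to \infty$, again a contradiction. Hence $S_\infty = \infty$.

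With $S_d \to \infty$ and $t_k^2$ fixed we have $S_d' \to \infty$ and $S_d'/S_d \to 1$, so $(S_d/S_d')^{r/2} \to 1$; in particular $S_d' > 0$ for all large $d$, so the ratio in the conclusion is well defined. Since $0 \leq P_d' \leq P_d$,
$$
	0 \leq \frac{P_d'}{(S_d')^{r/2}} \leq \frac{P_d}{(S_d')^{r/2}} = \frac{P_d}{S_d^{r/2}}\left(\frac{S_d}{S_d'}\right)^{\!r/2} \longrightarrow 0,
$$
and the squeeze theorem gives \eqref{eq l1a}. The one genuinely load-bearing step is ruling out the possibility that $S_d$ stays bounded — that is the only place the hypothesis is actually used; everything afterwards is a routine perturbation estimate valid for every $r > 0$, with no need to split into the cases $r \geq 2$ and $r < 2$.
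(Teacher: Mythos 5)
Your proof is correct. The only point that needs care is the one you flag yourself: ruling out $\sum_{i=1}^d t_i^2$ staying bounded. Your contradiction argument for that (splitting on whether $\sum|t_i|^r$ converges or diverges, and using that both partial sums are monotone and eventually positive once some $t_{i_0}\neq 0$) is sound, and the subsequent squeeze via $(S_d/S_d')^{r/2}\to 1$ is routine and valid for every $r>0$, as you say.

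The paper reaches the same conclusion by a different, more local mechanism. Rather than first proving $S_d\to\infty$, it fixes $\rho\in(0,1)$ and observes that the deleted term is itself one summand of the numerator: $|t_k|^r\leq \sum_{i=1}^d|t_i|^r<\rho^{r/2}\bigl(\sum_{i=1}^d t_i^2\bigr)^{r/2}$ for all large $d$, which upon taking $r$-th roots and squaring gives $t_k^2\leq\rho\sum_{i=1}^d t_i^2$ directly. From there $\sum_{i\neq k}t_i^2\geq(1-\rho)\sum_i t_i^2$ and the truncated ratio is bounded by $(1-\rho)^{-r/2}$ times the original ratio, which tends to zero for the fixed $\rho$. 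The paper's route is shorter and avoids any discussion of whether the sums diverge; your route is slightly longer but isolates and proves the stronger structural fact that the hypothesis forces $\sum_{i=1}^d t_i^2\to\infty$, which makes the perturbation step transparent and would survive deleting any fixed finite set of indices, or indeed any set of indices whose squared sum stays bounded. Either argument is acceptable; neither requires a case split on $r$.
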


\begin{proof}
Suppose $\lim_{d\rightarrow \infty} (\sum_{i=1}^d |t_i|^r)/ (\sum_{i=1}^d t_i^2 )^{r/2} = 0$ and fix $\rho \in (0,1)$.  Then there exists $D \geq k$ such that for any $d >D$, $|t_k|^r \leq \sum_{i=1}^d |t_i|^r < \rho^{r/2} ( \sum_{i=1}^d t_i^2 )^{r/2}$.  Hence $t_k^2 \leq \rho \sum_{i=1}^d t_i^2$ and so
$$
	\frac{\sum_{i=1, i \neq k}^d |t_i|^r}{\left( \sum_{i=1, i \neq k}^d t_i^2 \right)^{r/2}}  
	< \frac{ \sum_{i=1}^d |t_i|^r }{\left((1-\rho) \sum_{i=1}^d t_i^2 \right)^{r/2}} 
	= \left( \frac{1}{1-\rho} \right)^{r/2} \frac{ \sum_{i=1}^d |t_i|^r }{\left(\sum_{i=1}^d t_i^2 \right)^{r/2}}.
$$
Hence result.
\end{proof}

\begin{proof}[Proof of Theorem \ref{thm jumpsize}]
Under the coordinate transformation $\hat{x} = Q^T x$, \eqref{MH0} becomes \eqref{MH1} and $\mathrm{E}[(q_i^T(x'-x))^2]$ becomes $\mathrm{E}[(\hat{x}_i' - \hat{x}_i)^2]$.  Therefore it is sufficient to only consider $\mathrm{E}[(x_i' - x_i)^2]$ for the case when all matrices are diagonal matrices.  Let $A = \operatorname{diag}(\lambda_1^2,\dotsc,\lambda_d^2)$, $\mathcal{A} = \operatorname{diag}(\tilde{\lambda}_1^2,\dotsc,\tilde{\lambda}_d^2)$, $G = \operatorname{diag}(G_1,\dotsc,G_d)$, $\mu_i = \lambda_i^{-2} b_i$, and $\tilde{\mu}_i = \tilde{\lambda}_i^{-2} \beta_i$.  Since the chain is in equilibrium we have $x_i = \mu_i + \lambda_i^{-1} \xi_i$ for i.i.d. $\xi_i \sim \normal(0,1)$ and $y_i = \tilde{\mu}_i + G_i \hat{r} + G_i \lambda_i^{-1} \xi_i + \hat{g}_i^{1/2} \tilde{\lambda}_i^{-1} \nu_i$ for i.i.d. $\nu_i \sim \normal(0,1)$, see proof of Theorem \ref{thm accept}.  Define $\alpha^-(x,y) := 1 \wedge \exp( \sum_{j=1,j\neq i}^d Z_{d,j} )$ where $Z_{d,j}$ is defined as in the proof of Theorem \ref{thm accept}.  

The proof strategy is to approximate $\mathrm{E}[(x_i' - x_i)^2] = \mathrm{E}[(y_i - x_i)^2\alpha(x,y)]$ with $\mathrm{E}[(y_i-x_i)^2 \alpha^-(x,y)]$;
$$
	\mathrm{E}[(x_i'-x_i)^2] = \mathrm{E}[(y_i-x_i)^2 \alpha^-(x,y)] + \mathrm{E}[(\alpha(x,y)-\alpha^-(x,y))(y_i-x_i)^2].
$$
By independence, 
\begin{align*}
	\mathrm{E}[(y_i-x_i)^2 \alpha^-(x,y)] 
	&= \mathrm{E}[(y_i-x_i)^2] \mathrm{E}[\alpha^-(x,y)] \\
	&= \mathrm{E}\left[ \left(-\tilde{g}_i \hat{r}_i - \frac{\tilde{g}_i}{\lambda_i}\xi_i + \frac{\hat{g}_i^{1/2}}{\tilde{\lambda}_i} \nu_i \right)^2 \right] \mathrm{E}[\alpha^-(x,y)]\\
	&= U_1 \mathrm{E}[\alpha^-(x,y)].
\end{align*}
Also, by Theorem \ref{thm accept} (using Lemma \ref{lem 5} to ensure \eqref{eq Tcond} is met) we obtain $\mathrm{E}[\alpha^-(x,y)] \rightarrow U_2$ as $d \rightarrow \infty$, so $\mathrm{E}[(y_i-x_i)^2 \alpha^-(x,y)] = U_1 U_2 + \mathrm{o}(U_1)$ as $d\rightarrow \infty$.

The error is bounded using the Cauchy-Schwarz inequality;
$$
	|\mathrm{E}[ (\alpha(x,y) - \alpha^-(x,y))(y_i - x_i)^2]|
	\leq \mathrm{E}[(\alpha(x,y)-\alpha^-(x,y))^2]^{1/2} \mathrm{E}[(y_i-x_i)^4]^{1/2}.
$$
Since $1\wedge \mathrm{e}^X$ is Lipschitz with constant $1$, and using results from the proof of Theorem \ref{thm accept}, we obtain
$$
	\mathrm{E}[(\alpha(x,y)-\alpha^-(x,y))^2]^{1/2} 
	\leq \mathrm{E}[Z_{d,i}^2]^{1/2} 
	= ( \sigma_{d,i}^2 + \mu_{d,i}^2 )^{1/2},
$$
and some algebra yields
\begin{align*}
	\mathrm{E}[(y_i-x_i)^4]^{1/2}
	&= \mathrm{E}\left[ \left( -\tilde{g}_i \hat{r}_i - \frac{\tilde{g}_i}{\lambda_i}\xi_i + \frac{\hat{g}_i^{1/2}}{\tilde{\lambda}_i} \nu_i \right)^4 \right]^{1/2} \\
	&= \left( \tilde{g}_i^4 \hat{r}_i^4 +  \frac{3}{\lambda_i^4} (\tilde{g}_i^2 + \tilde{r}_i \hat{g}_i)^2 + \frac{6}{\lambda_i^2} \hat{r}_i^2 \tilde{g}_i^2 ( \tilde{g}_i^2 + \tilde{r}_i \hat{g}_i) \right)^{1/2}.
\end{align*}
\end{proof}

\subsection{Proof of Theorem \protect\ref{thm:ng1} and Corollary \protect\ref{cor:ng}}

First, a lemma which is similar to part of the proof of \cite[Thm. 3]{BRS2009}.

\begin{lemma}
\label{lem:phi}
Suppose $\phi_d$ satisfies Assumption \ref{assumption1} and $r = \max \{ s,s',s'' \}$ satisfies Assumption \ref{assumption2}.  Also suppose that there exists $\{ t_{d,i} \}$ and a $t > 0$ such that $t_{d,i}=\mathcal{O}(d^{-t})$ (uniformly in $i$) as $d \rightarrow \infty$ such that MH algorithm \eqref{MH0ng} satisfies
\begin{align*}
	&\mbox{$G$ and $\Sigma$ are functions of $A$}, \\
	&\tilde{g}_i^2\hat{r}_i^2 \lambda_i^2, \; \tilde{g}_i^2, \; \hat{g}_i \mbox{ are $\mathcal{O}(t_{d,i})$ (uniformly in $i$) as $d \rightarrow \infty$, and } \\
	&\tilde{r}_i \mbox{ is bounded uniformly in $d$ and $i$}.
\end{align*}
Then for any $q' \in \mathbb{N}$ there exists a constant $C > 0$ (that may depend on $q'$) such that
\begin{align*}
	&\mathrm{E}_{\tilde{\pi}_d}[|x-A^{-1}b|_r^{2q'}] < C \mbox{ for all $d$} \\
	&\mathrm{E}_{\tilde{\pi}_d}[|y-x|_r^{2q'}] = \mathcal{O}(t_{d,i}^{q'}) \mbox{ (uniformly in $i$) as $d \rightarrow \infty$},
\end{align*}
and for proposal $y$ from $x$,
$$
	\phi_d(x) - \phi_d(y) \rightarrow 0 \qquad \mbox{in $L^{q'}(\pi_d)$ as $d \rightarrow \infty$}.
$$
\end{lemma}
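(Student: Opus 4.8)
The plan is to first pass to the diagonalizing coordinates of Lemma~\ref{lem2}: under $\hat x = Q^T x$ we get $A=\operatorname{diag}(\lambda_1^2,\dots,\lambda_d^2)$, $\mathcal A=\operatorname{diag}(\tilde\lambda_1^2,\dots,\tilde\lambda_d^2)$ and $G,\Sigma$ diagonal, the norm is unchanged since $|x-A^{-1}b|_r=|A^r(x-A^{-1}b)|=|Q\Lambda^rQ^T(x-A^{-1}b)|$ is invariant under the orthogonal transformation, and the hypotheses on $\phi_d$ transcribe directly. In these coordinates $x-A^{-1}b\sim\normal(0,A^{-1})$, so $x_i-\mu_i=\lambda_i^{-1}\xi_i$ with $\xi_i\iid\normal(0,1)$ and $|x-A^{-1}b|_r^2=\sum_{i=1}^d\lambda_i^{4r-2}\xi_i^2$. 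The elementary estimate I will use repeatedly is that for any nonnegative $a_1,\dots,a_d$,
$$
	\mathrm{E}\!\left[\left(\sum_{i=1}^d a_i\xi_i^2\right)^{\!q'}\right]\;\le\;(2q'-1)!!\left(\sum_{i=1}^d a_i\right)^{\!q'},
$$
which follows from Jensen's inequality for the convex map $u\mapsto u^{q'}$ applied with the probability weights $a_i/\sum_j a_j$, together with $\mathrm{E}[\xi^{2q'}]=(2q'-1)!!$ for $\xi\sim\normal(0,1)$. Taking $a_i=\lambda_i^{4r-2}$ and invoking Assumption~\ref{assumption2} to bound $\sum_i\lambda_i^{4r-2}$ uniformly in $d$ gives the first assertion.

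For the second assertion I use the explicit form of the proposal increment in these coordinates, as in the proof of Theorem~\ref{thm jumpsize}: from Theorem~\ref{thm:2.1} and Corollary~\ref{lem equiv} one has $g_i=\tilde g_i\tilde\mu_i$ and $\Sigma_i=\hat g_i/\tilde\lambda_i^2$, hence, for proposal $y$ from $x\sim\tilde\pi_d$,
$$
	y_i-x_i\;=\;-\tilde g_i\hat r_i\;-\;\frac{\tilde g_i}{\lambda_i}\,\xi_i\;+\;\frac{\hat g_i^{1/2}}{\tilde\lambda_i}\,\nu_i,\qquad \nu_i\iid\normal(0,1)\ \text{independent of the }\xi_i.
$$
Then $|y-x|_r^2=\sum_i\lambda_i^{4r}(y_i-x_i)^2\le 3\big(\sum_i\lambda_i^{4r-2}(\lambda_i^2\tilde g_i^2\hat r_i^2)+\sum_i\lambda_i^{4r-2}\tilde g_i^2\,\xi_i^2+\sum_i\lambda_i^{4r-2}(\hat g_i\tilde r_i)\,\nu_i^2\big)$, and raising to the power $q'$ and taking expectations --- handling the deterministic term directly and the two random sums with the displayed estimate --- bounds $\mathrm{E}_{\tilde\pi_d}[|y-x|_r^{2q'}]$ by a constant (depending on $q'$) times $(\max_i t_{d,i})^{q'}\big(\sum_i\lambda_i^{4r-2}\big)^{q'}$, because $\lambda_i^2\tilde g_i^2\hat r_i^2$, $\tilde g_i^2$ and $\hat g_i\tilde r_i$ are all $\mathcal O(t_{d,i})$ uniformly in $i$ (the last using that $\tilde r_i$ is bounded). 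Thus $\mathrm{E}_{\tilde\pi_d}[|y-x|_r^{2q'}]=\mathcal O((\max_i t_{d,i})^{q'})$, giving the claim, and this tends to $0$ since $t_{d,i}=\mathcal O(d^{-t})$.

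For the third assertion I transfer to the Gaussian $\tilde\pi_d$ and then combine convergence in probability with uniform integrability. First, $\mathrm{E}_{\pi_d}[f]\le(e^{-m}/Z_d)\,\mathrm{E}_{\tilde\pi_d}[f]$ for $f\ge0$ (with $y$ the proposal from $x$ on both sides), where $Z_d=\mathrm{E}_{\tilde\pi_d}[e^{-\phi_d}]$ is bounded below uniformly in $d$: the polynomial growth bound of Assumption~\ref{assumption1} gives $Z_d\ge e^{-C}\mathrm{E}_{\tilde\pi_d}[e^{-C|x-A^{-1}b|_{s''}^p}]$, which is bounded below by restricting to the event $\{|x-A^{-1}b|_{s''}\le R\}$, of probability exceeding $1/2$ for $R$ large by Chebyshev and the first assertion --- this is the argument of \cite{BRS2009}. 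Next, the Lipschitz-type bound $|\phi_d(x)-\phi_d(y)|\le\delta(|x-A^{-1}b|_s,|y-A^{-1}b|_s)\,|x-y|_{s'}$, the norm comparison \eqref{eq:normbound} reducing $|\cdot|_s,|\cdot|_{s'}$ to $|\cdot|_r$, local boundedness of $\delta$, and the fact that $|x-y|_{s'}\to0$ in every $L^m(\tilde\pi_d)$ (second assertion) while $|x-A^{-1}b|_s$ is stochastically bounded, together give $\phi_d(x)-\phi_d(y)\to0$ in $\pi_d$-probability. Uniform integrability of $|\phi_d(x)-\phi_d(y)|^{q'}$ follows by bounding a $(1+\epsilon)$-moment: the growth bound gives $|\phi_d(x)-\phi_d(y)|^{q'(1+\epsilon)}\le C'\big(1+|x-A^{-1}b|_{s''}^{pq'(1+\epsilon)}+|y-A^{-1}b|_{s''}^{pq'(1+\epsilon)}\big)$, whose $\tilde\pi_d$-expectation is uniformly bounded by the first two assertions (and $|y-A^{-1}b|_{s''}\le|x-A^{-1}b|_{s''}+|y-x|_{s''}$). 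Convergence in probability together with uniform integrability yields $L^{q'}(\pi_d)$ convergence.

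The main obstacle is the third part. Since $\delta$ is only assumed locally bounded, not polynomially bounded, one cannot estimate $\mathrm{E}_{\pi_d}[|\phi_d(x)-\phi_d(y)|^{q'}]$ directly from the Lipschitz bound and must instead route through convergence in probability plus a separately obtained uniform integrability bound coming from the growth condition; interlocked with this is the need to control the normalizing constant $Z_d$ uniformly in $d$ even though $\phi_d$ is only bounded below. The first two assertions are routine once the Jensen moment estimate and the formula for $y_i-x_i$ are in hand.
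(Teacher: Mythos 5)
Your proof is correct and follows essentially the same route as the paper's: the same Jensen/moment bound for the first two assertions, and for the third the same truncation to the ball on which $\delta$ is bounded combined with the polynomial growth bound on the complement. The only difference is cosmetic: you package the third step as convergence in probability plus uniform integrability (and separately control the normalizing constant $Z_d$), whereas the paper performs a single direct estimate of $\mathrm{E}[|\phi_d(x)-\phi_d(y)|^{q'}]$ via Cauchy--Schwarz and Markov's inequality and treats $\mathrm{e}^{-\phi_d}$ as the exact Radon--Nikodym derivative so that $\phi_d \geq m$ alone transfers the bound from $\tilde{\pi}_d$ to $\pi_d$.
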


\begin{proof}
For $x \sim \normal(A^{-1}b,A^{-1})$, $\xi \sim \normal(0,I)$, $\nu \sim \normal(0,I)$ and $\Lambda = \operatorname{diag}(\lambda_1^2,\dotsc,\lambda_d^2)$, 
\begin{align*}
	\mathrm{E}_{\tilde{\pi}_d}[ |x - A^{-1}b |_r^{2q'} ] 
	&= \mathrm{E}_{\tilde{\pi}_d}[ | A^{r-1/2} \xi |^{2q'} ] 
	= \mathrm{E}_{\tilde{\pi}_d}[ | \Lambda^{r-1/2} \nu |^{2q'} ] \\
	&= \mathrm{E}_{\tilde{\pi}_d}\left[ \left( \sum_{i=1}^d \lambda_i^{4r-2} \nu_i^2 \right)^{q'} \right] 
	\leq C \left( \sum_{i=1}^d \lambda_i^{4r-2} \right)^{q'},
\end{align*}
which is bounded uniformly in $d$ by Assumption \ref{assumption2}.

As above, and using the transformation $\hat{x} = Q^T x$, and $(y_i - x_i) = -\tilde{g}_i \hat{r}_i - \tilde{g}_i \lambda_i^{-1} \xi_i + \hat{g}_i^{1/2} \tilde{\lambda}_i^{-1} \nu_i$ from the proof of Theorem \ref{thm jumpsize} where $\xi_i$ and $\nu_i \iid \normal(0,1)$, we have
\begin{align*}
	\mathrm{E}_{\tilde{\pi}_d}[ |y - x|_r^{2q'} ]
	&= \mathrm{E}_{\tilde{\pi}_d}\left[ \left( \sum_{i=1}^d \lambda_i^{4r} \left( -\tilde{g}_i \hat{r}_i - \frac{\tilde{g}_i}{\lambda_i} \xi_i + \frac{\hat{g}_i^{1/2}}{\tilde{\lambda}_i} \nu_i \right)^2 \right)^{q'} \right]  \\
	&= \mathrm{E}_{\tilde{\pi}_d}\left[ \left( \sum_{i=1}^d \lambda_i^{4r-2} \left( -\tilde{g}_i \hat{r}_i \lambda_i - \tilde{g}_i \xi_i + \tilde{r}_i^{1/2} \hat{g}_i^{1/2}  \nu_i \right)^2 \right)^{q'} \right]  .
\end{align*}
Since $\tilde{g}_i^2 \hat{r}_i^2 \lambda_i^2$, $\tilde{g}_i^2$, $\hat{g}_i = \mathcal{O}(t_{d,i})$, and $\tilde{r}_i$ is bounded uniformly in $d$ and $i$, it follows that for all sufficiently large $d$, 
$$
	\mathrm{E}_{\tilde{\pi}_d}[ |y - x|_r^{2q'} ] \leq C t_{d,i}^{q'} \left( \sum_{i=1}^d \lambda^{4r-2} \right)^{q'},
$$
so by Assumption \ref{assumption2}, $\mathrm{E}_{\tilde{\pi}_d}[ |y - x|_r^{2q'} ] = \mathcal{O}(t_{d,i}^{q'})$ as $d \rightarrow \infty$.

From $\mathrm{E}_{\tilde{\pi}_d}[|x-A^{-1}b|_r^{2q'}] < C$ and $\mathrm{E}_{\tilde{\pi}_d}[|y-x|_r^{2q'}] \rightarrow 0$, it follows from the triangle inequality that there is a (new) constant $C>0$ such that
\begin{equation}
\label{eq:ybd}
	\mathrm{E}_{\tilde{\pi}_d}[|y-A^{-1}b|_r^{2q'}] < C \qquad \mbox{for all $d$}.
\end{equation}

Let $\Delta_d = \phi_d(x) - \phi_d(y)$.  For any $R>0$ and $q' \in \mathbb{N}$ define
\begin{align*}
	\gamma(R) &= \sup \{  \delta(a,b)^{q'} : a\leq R,b\leq R \} \\
	S_1 &= \{ x \in \bbR^d : |x-A^{-1}b|_s \leq R \}, \\
	S_2 &= \{ x \in \bbR^d: |y-A^{-1}b|_s \leq R \},
\end{align*}
and let $\mathbb{I}_{S}$ be the indicator function for set $S$.  Using Assumption \ref{assumption1}, a generic constant $C$ that may vary between lines, the Cauchy-Schwarz inequality, and then Markov's inequality, we have for each $q' \in \mathbb{N}$
\begin{align*}
	\mathrm{E}_{\tilde{\pi}_d}[|\Delta_d|^{q'}] 
	 &= \mathrm{E}_{\tilde{\pi}_d}[|\Delta_d|^{q'} \mathbb{I}_{S_1 \cap S_2} ] +  \mathrm{E}_{\tilde{\pi}_d}[|\Delta_d|^{q'} \mathbb{I}_{\bbR^d \backslash (S_1 \cap S_2)}] \\
	&\leq \gamma(R) \mathrm{E}[ |x-y|_{s'}^{q'} ] + C \mathrm{E}[(1+|x-A^{-1}b|_{s''}^{pq'} + |y-A^{-1}b|_{s''}^{pq'}) \mathbb{I}_{\bbR^d \backslash (S_1 \cap S_2)}] \\
	&\leq \gamma(R) \mathrm{E}[ |x-y|_{s'}^{q'} ] \\
	&\qquad + C \mathrm{E}[1+|x-A^{-1}b|_{s''}^{2pq'} + |y-A^{-1}b|_{s''}^{2pq'}]^{1/2} (\mathrm{P}(\bbR^d\backslash S_1) + \mathrm{P}(\bbR^d \backslash S_2))^{1/2} \\
	&\leq \gamma(R) \mathrm{E}[ |x-y|_{s'}^{q'} ] + C (\mathrm{P}(\bbR^d\backslash S_1) + \mathrm{P}(\bbR^d \backslash S_2))^{1/2} \\
	&\leq \gamma(R) \mathrm{E}[ |x-y|_{s'}^{q'} ] + \frac{C}{R^{1/2}}  \left( \mathrm{E}[|x-A^{-1}b|_s] + \mathrm{E}[|y-A^{-1}b|_s] \right)^{1/2}\\
	&\leq \gamma(R) \mathrm{E}[ |x-y|_{s'}^{q'} ] + \frac{C}{R^{1/2}}.
\end{align*}
Note that we used Jensen's inquality (which implies $\nrm{f}_{L^a(\tilde{\pi}_d)} \leq \nrm{f}_{L^b(\tilde{\pi}_d)}$ for $1\leq a \leq b \leq \infty$) and \eqref{eq:normbound} to obtain bounds on $\mathrm{E}[|x-A^{-1}b|_{s''}^{2pq'}]$, $\mathrm{E}[|y-A^{-1}b|_{s''}^{2pq'}]$, $\mathrm{E}[|x-A^{-1}b|_s]$ and $\mathrm{E}[|y-A^{-1}b|_s]$.

Hence, for any $\epsilon > 0$ we can choose $R = R(\epsilon)$ such that $C/R^{1/2} < \epsilon/2$ and since $\mathrm{E}_{\tilde{\pi}_d}[|y-x|_r^{q'}] \rightarrow 0$ as $d \rightarrow \infty$ (by Jensen's inequality), for all sufficiently large $d$ we have
$$ 
\mathrm{E}_{\tilde{\pi}_d}[|\phi_d(x) - \phi_d(y)|] = \mathrm{E}_{\tilde{\pi}_d}[|\Delta_d|^{q'}] < \epsilon.
$$
Thus, 
$$
	\phi_d(x)-\phi_d(y) \rightarrow 0 \qquad \mbox{ in $L^{q'}(\tilde{\pi}_d)$ as $d \rightarrow \infty$}.
$$
The result then follows from $\phi_d(x) \geq m$.
\end{proof}

\begin{proof}[Proof of Theorem \ref{thm:ng1}]
As in the proofs of Theorems \ref{thm accept} and \ref{thm jumpsize} it is sufficient to prove the result in the case when all matrices are diagonal.  This follows from the coordinate transformation $\hat{x} = Q^T x$ where $A = Q \Lambda Q^T$, since
$$
	|x|_s = |\Lambda^s \hat{x}| \qquad \mbox{and} \qquad
	|x - A^{-1}b|_s = |\Lambda^s( \hat{x} - \Lambda^{-1}Q^Tb)|
$$
for all $x \in \bbR^d$ and $s \in \bbR$, and since $\psi_d(\hat{x}) := \phi_d(Q \hat{x})$ satisfies Assumption \ref{assumption1}.  Henceforth and without loss of generality, let us assume that $A$, $G$ and $\Sigma$ are diagonal matrices.

Using Lemma \ref{lem:phi} with $q' = 1$, and the fact that $z \mapsto 1 \wedge \exp(z)$ is globally Lipschitz continuous, it follows that
\begin{equation}
\label{eq:acceptlim}
	\mathrm{E}_{\pi_d}[\alpha(x,y)] - \mathrm{E}_{\pi_d}[\tilde{\alpha}(x,y)] \rightarrow 0 \qquad \mbox{as $d \rightarrow \infty$}.
\end{equation}
To complete the proof we must find the limit of $\mathrm{E}_{\pi_d}[\tilde{\alpha}(x,y)]$ as $d \rightarrow \infty$.

As in the proof of Theorem \ref{thm accept} we have $Z = \sum_{i=1}^d Z_{d,i}$ where
$$
	Z_{d,i} = T_{0i} + T_{1i} \xi_i + T_{2i} \nu_i + T_{3i} \xi_i^2 + T_{4i} \nu_i^2 + T_{5i} \xi_i \nu_i,
$$
noting that $\xi_i = \lambda_i (x_i - \mu_i)$ with $x \sim \pi_d$, $\mu_i = (A^{-1}b)_i$, and $\nu_i \stackrel{\text{i.i.d.}}{\sim} \normal(0,1)$.  

Note that $\kappa_{d,i} = \mathrm{E}_{\pi_d}[ \lambda_i (x_i - \mu_i)] = \mathrm{E}_{\pi_d}[\xi_i]$ and $\gamma_{d,i} = \mathrm{E}_{\pi_d}[ \lambda_i^2 (x_i - \mu_i)^2] = \mathrm{E}_{\pi_d}[\xi_i^2]$ are both uniformly bounded in $d$ and $i$ since $\phi_d(x) \geq m$ and $|\kappa_{d,i}|\leq \mathrm{e}^{-m} \mathrm{E}_{\tilde{\pi}_d}[|\xi_i|] = \mathrm{e}^{-m} \mathrm{E}[|u|]$ and $0 \leq \gamma_{d,i} \leq \mathrm{e}^{-m} \mathrm{E}_{\tilde{\pi}_d}[\xi_i^2] = \mathrm{e}^{-m} \mathrm{E}[u^2]$ where $u \sim \normal(0,1)$.  If we define 
$$
	S_{d,j} := \sum_{i=1}^j T_{1i} (\xi_i - \kappa_{d,i}) + T_{2i} \nu_i + T_{3i} (\xi_i^2 - \gamma_{d,i}) + T_{4i} (\nu_i^2 - 1) + T_{5i} \xi_i \nu_i,
$$
then
$$
	Z = \sum_{i=1}^d (T_{0i} + T_{1i} \kappa_{d,i} + T_{3i} \gamma_{d,i} + T_{4i}) + S_{d,d}.
$$

We will now show that $S_{d,d}$ converges in distribution towards a normal distribution as $d \rightarrow \infty$, using a Martingale central limit theorem, see \cite[Thm. 3.2, p. 58]{HallHeyde}.  

The set $\{ S_{d,j} : 1 \leq j \leq d, d \in \mathbb{N} \}$ is a zero mean, square-integrable Martingale array, i.e. for each $d \in \mathbb{N}$ and $1 \leq j \leq d$, $S_{d,j}$ is measurable, 
$$
	\mathrm{E}_{\pi_d}[S_{d,j}] = 0, \qquad
	\mathrm{E}_{\pi_d}[|S_{d,j}|]<\infty, \qquad \mbox{and} \qquad
	\mathrm{E}_{\pi_d}[(S_{j,d})^2] < \infty.
$$
For definitions, see \cite[p. 1 and 53]{HallHeyde}.  Define $X_{d,j} := S_{d,j} - S_{d,j-1}$.  To ensure we satisfy the conditions for \cite[Thm. 3.2]{HallHeyde} we must show that there exists an a.s. finite random variable $\eta^2$ such that
\begin{align}
	&\max_{1 \leq i \leq d} | X_{d,i} | \xrightarrow{p} 0 \qquad \mbox{as $d \rightarrow \infty$}, \label{eq:c1}\\
	&\sum_{i=1}^d X_{d,i}^2 \xrightarrow{p} \eta^2 \qquad \mbox{as $d \rightarrow \infty$, and} \label{eq:c2} \\
	&\mathrm{E}_{\pi_d} \left( \max_{1\leq i \leq d} X_{d,i}^2 \right) \mbox{ is bounded in $d$}. \label{eq:c3}
\end{align}

First consider \eqref{eq:c1}.  We have
$$
	|X_{d,i}| \leq |T_{1i}|(|\xi_i| + |\kappa_{d,i}|) + |T_{2i}| |\nu_i| + |T_{3i}| (\xi_i^2 + \gamma_{d,i}) + |T_{4i}| (\nu_i^2 + 1) + |T_{5i}| |\xi_i| |\nu_i|
$$
which goes to zero in probability since $\kappa_{d,i}$ and $\gamma_{d,i}$ are bounded uniformly and $|T_{ji}|$ are all $\mathcal{O}(d^{-1/2})$ as $d \rightarrow \infty$ uniformly in $i$.

Now consider \eqref{eq:c2}.  Define $\eta^2:= \sigma_{ng}^2$ so that
$$
	\eta^2 := \lim_{d \rightarrow \infty} \sum_{i=1}^d T_{1i}^2 + T_{2i}^2 + 2 T_{3i}^2 + 2 T_{4i}^2 + T_{5i}^2 + \left( \kappa_{d,i} T_{1i} + T_{3i} ( \gamma_{d,i} - 1 ) \right)^2 < \infty
$$
and $Y_{d,i} := d X_{d,i}^2$ so that $\overline{Y}_{d} = \frac{1}{d} \sum_{i=1}^d Y_{d,i} = \sum_{i=1}^d X_{d,i}^2$.  Then
$$
	\mathrm{E}_{\tilde{\pi}_d}[Y_{d,i}] = d \left( T_{1i}^2 + T_{2i}^2 + 2 T_{3i}^2 + 2 T_{4i}^2 + T_{5i}^2 + (\kappa_{d,i}T_{1i} + T_{3i} (\gamma_{d,i} - 1))^2 \right),
$$
and
$$
	\mathrm{Var}_{\tilde{\pi}_d}[Y_{d,i}] = d^2 \sum_{|\omega| = 4} C_{\omega} T_{1i}^{\omega_1} T_{2i}^{\omega_2} T_{3i}^{\omega_3} T_{4i}^{\omega_4} T_{5i}^{\omega_5}
$$
where $\omega$ is a multi-index with $\omega_i \geq 0$ and $|\omega| = \sum_{i} \omega_i = 4$, and $C_\omega$ are uniformly bounded constants.  Since $T_{ji}$ are all uniformly $\mathcal{O}(d^{-1/2})$, $\mathrm{E}_{\tilde{\pi}_d}[Y_{d,i}]$ and $\mathrm{Var}_{\tilde{\pi}_d}[Y_i]$ are uniformly bounded in $d$ and $i$.

Then by the Markov inequality, and independence of $Y_{d,i}$, for any $\epsilon > 0$,
\begin{align*}
	\mathrm{Pr}\left( \left| \overline{Y}_d - \mathrm{E}_{\tilde{\pi}_d}\left[ \overline{Y}_d \right] \right| \geq \epsilon \right) 
	&\leq \frac{1}{\epsilon^2} \mathrm{Var}_{\tilde{\pi}_d}\left[ \overline{Y}_d \right] 
	= \frac{1}{\epsilon^2 d^2} \sum_{i=1}^d \mathrm{Var}_{\tilde{\pi}_d} [Y_{d,i} ] \\
	&\leq \frac{C}{\epsilon^2 d} \rightarrow 0 \quad \mbox{as $d \rightarrow \infty$}.
\end{align*}
Hence $\overline{Y}_d \xrightarrow{p} \eta^2$ as $d \rightarrow \infty$.  This is not \eqref{eq:c2} yet, because it is convergence with respect to $\tilde{\pi}_d$ rather than $\pi_d$.  

Since $\lim_{d \rightarrow \infty} \mathrm{E}_{\tilde{\pi}_d}[ \overline{Y}_d ] = \eta^2 < \infty$ and $|\overline{Y}_d| = \overline{Y}_d$, it follows that $\mathrm{E}_{\tilde{\pi}_d}[|\overline{Y}_d| ]$ is uniformly bounded in $d$.  Therefore, $\overline{Y}_d$ is uniformly integrable and so $\overline{Y}_d \rightarrow \eta^2$ in $L^1(\tilde{\pi}_d)$ as $d \rightarrow \infty$ \cite[Thm. 6.5.5 on p. 169]{sensinger1993}.  Hence 
$$
	\sum_{i=1}^d X_{d,i}^2 \xrightarrow{L^1(\tilde{\pi}_d)} \eta^2 \qquad \mbox{as $d \rightarrow \infty$}.
$$
From $\phi_d(x) \geq m$, the same limit holds in $L^1(\pi_d)$, which also implies convergence in probability, hence we have shown \eqref{eq:c2}.

Condition \eqref{eq:c3} follows from $X_{d,i}^2 \leq \overline{Y}_d$ for $1 \leq i \leq d$, $\mathrm{E}_{\tilde{\pi}_d}[|\overline{Y}_d| ]$ uniformly bounded in $d$, and $\phi_d(x) \geq m$.

Therefore, by the Martingale central limit theorem \cite[Thm. 3.2]{HallHeyde}, 
$$
	S_{d,d} \xrightarrow{\mathcal{D}} \normal(0,\eta^2) \qquad \mbox{as $d \rightarrow \infty$}.
$$	
Hence, 
\begin{equation}
\label{eq:zlim}
	Z  \xrightarrow{\mathcal{D}} \normal(\mu_{ng}, \sigma_{ng}^2 ) \qquad \mbox{as $d \rightarrow \infty$}.
\end{equation}
The result then follows from \eqref{eq:acceptlim}, \eqref{eq:zlim} and \eqref{eq uf1}.
\end{proof}

\begin{proof}[Proof of Corollary \ref{cor:ng}]
With $\xi_i$ defined as in the proof of Theorem \ref{thm:ng1} (we only need to consider the case when matrices are diagonal), since $\mathrm{E}_{\tilde{\pi}_d}[\xi_i^2] = 1$ and $\mathrm{E}_{\tilde{\pi}_d}[\xi_i^4] = 3$ for all $i$ and $d$, and since $\xi_i$ and $\xi_j$ are i.i.d. under $\tilde{\pi}_d$,
$$
	\lim_{d \rightarrow \infty} \mathrm{E}_{\tilde{\pi}_d}\left[ \left( \sum_{i=1}^d T_{1i} \xi_i + T_{3i} (\xi_i^2 - 1) \right)^2 \right] = \lim_{d \rightarrow \infty} \sum_{i=1}^d T_{1i}^2 + 2T_{3i}^2 = 0
$$
From Jensen's inequality and $\phi_d(x) \geq m$ we have $\sum_{i=1}^d (T_{1i} \xi_i + T_{3i} (\xi_i^2 - 1)) \rightarrow 0$ in $L^1(\pi_d)$ as $d \rightarrow \infty$.  Therefore, since $\kappa_{d,i} = \mathrm{E}_{\pi_d}[\xi_i]$ and $\gamma_{d,i} = \mathrm{E}_{\pi_d}[\xi_i^2]$,
$$
	\sum_{i=1}^d T_{1i} \kappa_{d,i} + T_{3i}(\gamma_{d,i} - 1) = \mathrm{E}_{\pi_d}\left[ \sum_{i=1}^d T_{1i} \xi_i + T_{3i}(\xi_i^2 - 1) \right] \rightarrow 0 \qquad \mbox{as $d \rightarrow \infty$},
$$
and $\mu_{ng} = \mu$.  Also, since $\kappa_{d,i}$ and $\gamma_{d,i}$ are uniformly bounded in $d$ and $i$,
$$
	\sum_{i=1}^d (T_{1i}\kappa_{d,i} + T_{3i} (\gamma_{d,i}-1))^2 \leq C \sum_{i=1}^d T_{1i}^2 + T_{3i}^2 \rightarrow 0 \qquad \mbox{as $d \rightarrow \infty$},
$$
and $\sigma_{ng}^2 = \sigma^2$.
\end{proof}

\subsection{Proof of Theorem \protect\ref{thm:ng2}}

\begin{proof}
As in earlier proofs, it is sufficient to prove the result in the case when all matrices are diagonal, so let $A$, $G$ and $\Sigma$ be diagonal matrices.  

Let $S_d$ denote the expected squared jump size in coordinate direction $i$, so that
$$
	S_d = \mathrm{E}_{\pi_d}[(x_i' - x_i)^2] = \mathrm{E}_{\pi_d}[(y_i - x_i)^2 \alpha(x,y)].
$$
Also define
\begin{align*}
	\tilde{S}_d := \mathrm{E}_{\pi_d}[(y_i-x_i)^2 \tilde{\alpha}(x,y)] \quad \mbox{and} \quad
	\tilde{S}_d^- := \mathrm{E}_{\pi_d}[(y_i-x_i)^2 \tilde{\alpha}^-(x,y)]
\end{align*}
where $\tilde{\alpha}^-(x,y) = 1 \wedge \exp( \sum_{j=1,j\neq i}^d Z_{d,i})$, and $Z_{d,i}$ is the same as earlier.  Recall that $\xi_i = \lambda_i (x_i-\mu_i)$.

We first show that $\mathrm{E}_{\pi_d}[(y_i-x_i)^4] = \mathcal{O}(t_{d,i}^2 \lambda_i^{-4})$ (uniformly in $i$) as $d \rightarrow \infty$.  As in the proof of Theorem \ref{thm jumpsize}, from $y = Gx + g + \Sigma^{1/2} \nu$ where $\nu \sim \normal(0,I)$, and $\mathcal{A}^{-1}\beta = G \mathcal{A}^{-1}\beta + g$ it follows that
$$
	y_i - x_i = -\tilde{g}_i \hat{r}_i - \frac{\tilde{g}_i}{\lambda_i} \xi_i + \frac{\hat{g}_i^{1/2}}{\tilde{\lambda}_i} \nu_i
$$
where $\nu_i \stackrel{\text{i.i.d.}}{\sim} \normal(0,1)$, $\xi_i = \lambda_i (x_i - \mu_i)$ and $x \sim \pi_d$.  Therefore,
$$
	(y_i - x_i)^4 = \lambda_i^{-4} \left( -\tilde{g}_i \hat{r}_i \lambda_i - \tilde{g}_i \xi_i + \tilde{r}_i^{1/2} \hat{g}_i^{1/2} \nu_i \right)^4.
$$
In the proof of Theorem \ref{thm:ng1} we showed that $|\kappa_{d,i}| = |\mathrm{E}_{\pi_d}[\xi_i]|$ and $\gamma_{d,i} = \mathrm{E}_{\pi_d}[\xi_i^2]$ are uniformly bounded.  Similarly, $|\mathrm{E}_{\pi_d}[\xi_i^3]|$ and $\mathrm{E}_{\pi_d}[\xi_i^4]$ are also uniformly bounded.  Using these facts and $\tilde{g}_i^2 \hat{r}_i^2 \lambda_i^2$, $\tilde{g}_i^2 $, $\hat{g}_i$ $\mathcal{O}(t_{d,i})$ (uniformly in $i$) and $\tilde{r}_i$ bounded uniformly in $d$ and $i$, it follows that $\mathrm{E}_{\pi_d}[(y_i-x_i)^4] = \mathcal{O}(t_{d,i}^2 \lambda_i^{-4})$ (uniformly in $i$) as $d \rightarrow \infty$.

Now let us show that $S_d - \tilde{S}_d = \mathrm{o}(t_{d,i}\lambda_i^{-2})$  as $d \rightarrow \infty$.  From the Lipschitz continuity of $z \mapsto 1 \wedge \exp(z)$ and the Cauchy-Schwarz inequality,
\begin{align*}
	S_d - \tilde{S}_d &= \mathrm{E}_{\pi_d}[ (y_i - x_i)^2 (\alpha(x,y) - \tilde{\alpha}(x,y) ] \\
	&\leq \mathrm{E}_{\pi_d}[(y_i-x_i)^4]^{1/2} \mathrm{E}_{\pi_d}[ |\phi_d(x) - \phi_d(y) |^2 ]^{1/2} = \mathrm{o}(t_{d,i} \lambda_i^{-2}) \quad \mbox{as $d \rightarrow \infty$},
\end{align*}
by Lemma \ref{lem:phi}, since $\mathrm{E}_{\pi_d}[(y_i-x_i)^4]  =\mathcal{O}(t_{d,i}^2\lambda_i^{-4})$.

Now show that $\tilde{S}_d - \tilde{S}_d^- = \mathrm{o}(t_{d,i} \lambda_i^{-2})$ as $d \rightarrow \infty$.  Again, by the Lipschitz continuity of $z \mapsto 1 \wedge \exp(z)$ and the Cauchy-Schwarz inequality,
\begin{align*}
	\tilde{S}_d - \tilde{S}_d^- &= \mathrm{E}_{\pi_d}[ (y_i - x_i)^2 (\tilde{\alpha}(x,y) - \tilde{\alpha}^-(x,y))] \\
	&\leq \mathrm{E}_{\pi_d}[(y_i-x_i)^4]^{1/2} \mathrm{E}_{\pi_d}[ Z_{d,i}^2 ]^{1/2} = \mathrm{o}(t_{d,i}\lambda_i^{-2}) \quad \mbox{as $d \rightarrow \infty$},
\end{align*}	
since $\mathrm{E}_{\pi_d}[(y_i-x_i)^4] = \mathcal{O}(t_{d,i}^2 \lambda_i^{-4})$ and $T_{ji}$ are all $\mathcal{O}(d^{-1/2})$.

Now consider $\tilde{S}_d^-$.  Since $\nu_i$ is independent of $\xi_i$ and $\tilde{\alpha}^-(x,y)$, $\mathrm{E}[\nu_i] = 0$ and $\mathrm{E}[\nu_i^2] = 1$,
\begin{align*}
	\tilde{S}_d^-
	&= \mathrm{E}_{\pi_d}[ (y_i - x_i)^2 \tilde{\alpha}^-(x,y)] \\
	&= \mathrm{E}_{\pi_d}\left[ \left( \tilde{g}_i^2 \hat{r}_i^2 + 2 \frac{\hat{r}_i \tilde{g}_i^2}{\lambda_i} \xi_i + \frac{\tilde{g}_i^2}{\lambda_i^2} \xi_i^2 + \frac{\hat{g}_i}{\tilde{\lambda}_i^2} \right) \tilde{\alpha}^-(x,y) \right] \\
	&= \left( \tilde{g}_i^2 \hat{r}_i^2 + \frac{\hat{g}_i}{\tilde{\lambda}_i^2} \right) \mathrm{E}_{\pi_d}[ \tilde{\alpha}^-(x,y)] + 2 \frac{\hat{r}_i \tilde{g}_i^2}{\lambda_i} \mathrm{E}_{\pi_d} [\xi_i \tilde{\alpha}^-(x,y) ] + \frac{\tilde{g}_i^2}{\lambda_i^2} \mathrm{E}_{\pi_d} [\xi_i^2 \tilde{\alpha}^-(x,y) ].
\end{align*}
Since $\tilde{\alpha}^-(x,y) \in (0,1]$, it follows from Jensen's inequality that 
$$
	| \mathrm{E}_{\pi_d}[\xi_i \tilde{\alpha}^-(x,y)] | \leq \mathrm{E}_{\pi_d}[|\xi_i| \tilde{\alpha}^-(x,y)] \leq \mathrm{E}_{\pi_d}[|\xi_i|] \leq \mathrm{E}_{\pi_d}[\xi_i^2]^{1/2} = \gamma_{d,i}^{1/2}.
$$
Also, $0 \leq \mathrm{E}_{\pi_d} [\xi_i^2 \tilde{\alpha}^-(x,y) ] \leq \mathrm{E}_{\pi_d} [\xi_i^2 ] = \gamma_{d,i}$.  

Finally, using $z \mapsto 1\wedge \exp(z)$ Lipschitz, and since $T_{ji}$ are $\mathcal{O}(d^{-1/2})$,
$$
	|\mathrm{E}_{\pi_d}[ \tilde{\alpha}^-(x,y)] - \mathrm{E}_{\pi_d}[ \tilde{\alpha}(x,y) ]|
	\leq \mathrm{E}_{\pi_d}|Z_{d,i}| \rightarrow 0 \qquad \mbox{as $d \rightarrow \infty$}.
$$
Since $\tilde{g}_i^2 \hat{r}_i^2 + \frac{\hat{g}_i}{\tilde{\lambda}_i^2} = \mathcal{O}(t_{d,i}\lambda_i^{-2})$ it follows that as $d \rightarrow \infty$
\begin{align*}
	S_d &=\tilde{S}_d^- + \mathrm{o}(t_{d,i} \lambda_i^{-2}) \\
	&= \left( \left(\tilde{g}_i^2 \hat{r}_i^2 + \frac{\hat{g}_i}{\tilde{\lambda}_i^2} \right) \mathrm{E}_{\pi_d}[\tilde{\alpha}(x,y)] + 2 \frac{\hat{r}_i \tilde{g}_i^2 \gamma_{d,i}^{1/2}}{\lambda_i} u_{d,i} + \frac{\tilde{g}_i^2 \gamma_{d,i}}{\lambda_i^2} v_{d,i} \right) + \mathrm{o}(t_{d,i} \lambda_i^{-2}) 
\end{align*}
for some  $u_{d,i} \in [-1,1]$ and $v_{d,i} \in [0,1]$.  The result then follows from \eqref{eq:acceptlim} and $\tilde{g}_i^2 \hat{r}_i^2 + \frac{\hat{g}_i}{\tilde{\lambda}_i^2} = \mathcal{O}(t_{d,i}\lambda_i^{-2})$.
\end{proof}

\subsection{Proof of Theorem \ref{thm genlang conv}}

We can use the following technical lemma in the proof of Theorem \ref{thm genlang conv}.

\begin{lemma}
\label{lem:a4}
Suppose $\{ t_i \} \subset \mathbb{R}$ is a sequence satisfying $0 \leq t_i \leq C d^{-1/3} (\frac{i}{d})^{2\kappa}$ for some $C > 0$ and $\kappa \geq 0$.  If $s>3$, then $\lim_{d \rightarrow \infty} \sum_{i=1}^d t_i^s = 0$.
\end{lemma}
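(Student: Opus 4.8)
The plan is to bound the sum crudely from above by throwing away all structure beyond the pointwise estimate. Using the hypothesis $0 \le t_i \le C d^{-1/3}(i/d)^{2\kappa}$ together with monotonicity of $x \mapsto x^s$ on $[0,\infty)$ (valid for every real $s>0$ precisely because each $t_i$ is nonnegative), I would first write
$$
	\sum_{i=1}^d t_i^s \;\le\; \sum_{i=1}^d C^s d^{-s/3} \left(\tsfrac{i}{d}\right)^{2\kappa s} \;=\; C^s d^{-s/3} d^{-2\kappa s} \sum_{i=1}^d i^{2\kappa s}.
$$

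Next I would estimate the residual sum by its largest term times the number of terms, $\sum_{i=1}^d i^{2\kappa s} \le d \cdot d^{2\kappa s} = d^{1+2\kappa s}$, which is legitimate since $2\kappa s \ge 0$. Substituting this in, the powers of $d$ carrying $\kappa$ cancel and one is left with
$$
	\sum_{i=1}^d t_i^s \;\le\; C^s d^{-s/3} d^{-2\kappa s} d^{1+2\kappa s} \;=\; C^s d^{\,1 - s/3}.
$$
Since $s>3$ gives $1 - s/3 < 0$, the right-hand side tends to $0$ as $d \to \infty$, and therefore $\sum_{i=1}^d t_i^s \to 0$, which is the claim.

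\textbf{Main obstacle.} There is essentially no obstacle: the argument is a one-line comparison once the exponents are tracked. The only points worth flagging are that the estimate is valid for non-integer $s$ (nonnegativity of $t_i$ is what is used, not any binomial expansion), and that the degenerate case $\kappa = 0$ is automatically covered --- then $(i/d)^{2\kappa}\equiv 1$, the same bound $\sum_i t_i^s \le C^s d^{1-s/3}$ results, and the conclusion is unchanged. If a sharper constant were ever needed, one could replace the crude bound on $\sum_i i^{2\kappa s}$ by the integral comparison $\sum_{i=1}^d i^{2\kappa s} \le d^{2\kappa s+1}/(2\kappa s+1)$, but this refinement plays no role here.
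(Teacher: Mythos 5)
Your proof is correct and follows essentially the same route as the paper: extract the factor $C^s d^{-s/3}$ and show the remaining sum contributes at most a factor of $d$, so everything reduces to $d^{1-s/3}\to 0$. The only cosmetic difference is that you bound $\sum_{i=1}^d (i/d)^{2\kappa s}$ by $d$ using $(i/d)^{2\kappa s}\le 1$, whereas the paper writes it as $d$ times a Riemann sum converging to $\int_0^1 z^{2\kappa s}\,\dd z$; both give the same conclusion.
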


\begin{proof}
\begin{align*}
\lim_{d \rightarrow \infty} \sum_{i=1}^d t_i^s \leq C^s \lim_{d\rightarrow \infty} d^{1-s/3} \sum_{i=1}^d \tsfrac{1}{d} (\tsfrac{i}{d})^{2\kappa s} = C^s \lim_{d \rightarrow \infty} d^{1-s/3} \int_0^1 z^{2\kappa s} \dd z = 0.
\end{align*}
\end{proof}

\begin{proof}[Proof of Theorem \ref{thm genlang conv}]
Define $t_{d,i} = h \lambda_i^2 = \mathcal{O}(d^{-1/3})$, $\rho = (\theta-\frac{1}{2})/2$ and $w_{d,i} = 1 + \frac{\theta}{2} t_{d,i}$.  Applying Theorem \ref{thm:2.1} to MH algorithm \eqref{eq hat1} we have $\mathcal{A} = B + \rho h B^2$ and $\mathcal{A}^{-1}\beta = B^{-1} V^{1/2} b$.  Then 
\begin{align*}
	\tilde{\lambda}_i^2 &= (1 + \rho t_{d,i}) \lambda_i^2, &
	G_i &= 1 - \frac{\frac{1}{2}t_{d,i}}{w_{d,i}}, &
	\tilde{g}_i&= \frac{\frac{1}{2}t_{d,i}}{w_{d,i}}, &
	\hat{g}_i &= \frac{t_{d,i} (1 + \rho t_{d,i})}{w_{d,i}^2}, \\
	r_i &= -\rho t_{d,i}, &
	\tilde{r}_i &= \frac{1}{1 + \rho t_{d,i}}, &
	\hat{r}_i &= 0,
\end{align*}
so that $T_{0i} = T_{1i} = T_{2i} = 0$ and
\begin{align*}
	T_{3i} &= - \frac{\rho t_{d,i}^2 (1 + \rho t_{d,i})}{2 w_{d,i}^2} &
	T_{4i} &= \frac{\rho t_{d,i}^2}{2 w_{d,i}^2} &
	T_{5i} &= \frac{\rho t_{d,i}^{3/2}}{w_{d,i}^2} \left( 1 - \tsfrac{(1-\theta)}{2} t_{d,i} \right)
\end{align*}
Then $\tilde{g}_i^2 \hat{r}_i^2 \lambda_i^2 =0$, $\tilde{g}_i^2 = \mathcal{O}(t_{d,i}^2)$, $\hat{g}_i = \mathcal{O}(t_{d,i})$, $\tilde{r}_i = \mathcal{O}(1)$, $T_{3i} = \mathcal{O}(d^{-2/3})$, $T_{4i} = \mathcal{O}(d^{-2/3})$, and $T_{5i} = \mathcal{O}(d^{-1/2})$ (uniformly in $i$) as $d \rightarrow \infty$.

Also, using Lemma \ref{lem:a4}
$$
	\mu = \lim_{d \rightarrow \infty} \sum_{i=1}^d T_{3i} + T_{4i} = - \frac{l^6 (\theta-\frac{1}{2})^2 \tau}{8}
$$
and
$$
	\sigma^2 = \lim_{d \rightarrow \infty} \sum_{i=1}^d 2 T_{3i}^2 + 2 T_{4i}^2 + T_{5i}^2 = - \frac{l^6 (\theta-\frac{1}{2})^2 \tau}{4}.
$$
Hence $\lim_{d\rightarrow\infty} \sum_{i=1}^d T_{1i}^2 + T_{2i}^2 + 2 T_{3i}^2 + 2 T_{4i}^2 + T_{5i}^2 < \infty$.  So the conditions for applying Theorems \ref{thm:ng1} and \ref{thm:ng2} are satisfied.

Moreover, using Lemma \ref{lem:a4}, $ \lim_{d\rightarrow \infty} \sum_{i=1}^d T_{1i}^2 + T_{3i}^2 = 0$, so we can apply Corollary \ref{cor:ng}.

It follows that $\frac{\mu}{\sigma} = -\sigma - \frac{\mu}{\sigma} = -l^3 |\theta-\frac{1}{2}| \sqrt{\tau}/4$ and $\mu + \frac{\sigma^2}{2} = 0$.  Hence we obtain \eqref{eq expect} from Theorem \ref{thm:ng1}, using Corollary \ref{cor:ng}.

For the expected jump size, first note that $\hat{r}_i = 0$.  Also, $\hat{g}_i \tilde{\lambda}_i^{-2}  = h (1 + \tsfrac{\theta}{2} t_{d,i})^{-1} = h + \mathcal{O}(h t_{d,i}) =h + \mathrm{o}(h)$ as $d \rightarrow \infty$, and using the fact that $\gamma_{d,i}^2$ is uniformly bounded (see proof of Theorem \ref{thm:ng1}), $\tilde{g}_i^2 \gamma_{d,i}^{1/2} \lambda_i^{-2} = \mathcal{O}(t_{d,i}^2 \lambda_i^{-2}) = \mathcal{O}(h s_i) = \mathrm{o}(h)$ as $d \rightarrow \infty$.

Therefore, applying Theorem \ref{thm:ng2} we find
$$
	\mathrm{E}_{\pi_d}[(q_i^T(x_i' - x_i))^2] = 2 h \Phi\left( -\frac{l^3 |\theta-\tsfrac{1}{2}| \sqrt{\tau}}{4} \right) + \mathrm{o}(h)
$$
as $d \rightarrow \infty$, where $\pi_d$ is given in \eqref{eq hat1}.  Reversing the coordinate transformation $\hat{x} = V^{-1/2} x$ we obtain \eqref{eq jump2}.
\end{proof}

\subsection{Proof of Theorem \protect\ref{thm:lstep}}

\begin{proof}
The proof of Theorem \ref{thm:lstep} is similar to the proof of Theorem \ref{thm genlang conv} so we only give brief details here.  Define $t_{d,i} = h \lambda_i^2 = \mathcal{O}(d^{-1/3})$.  For SLA, $G = I - \frac{h}{2}A$, $\Sigma = hI$, $\mathcal{A} = A - \frac{h^2}{2} A^2$ and $\mathcal{A}^{-1}\beta = A^{-1}b$.  Multi-step SLA has the same $\mathcal{A}$ and $\mathcal{A}^{-1}\beta$, so $\tilde{\lambda}_i^2 = (1-\frac{1}{4} t_{d,i})\lambda_i^2$, $\hat{r}_i = 0$, $r_i = \frac{1}{4}t_{d,i}$, and $\tilde{r}_i = (1-\frac{1}{4}t_{d,i})^{-1}$.

The eigenvalues of $G_L = G^L$ are $G_i^L = (1-\frac{1}{2} t_{d,i})^L = 1 - \frac{L}{2} t_{d,i} + \mathcal{O}(t_{d,i}^2)$, so $\tilde{g}_i = \frac{L}{2} t_{d,i} + \mathcal{O}(t_{d,i}^2)$ and $\hat{g}_i = L t_{d,i} + \mathcal{O}(t_{d,i}^2)$.  Hence $\tilde{g}_i^2 \hat{r}_i^2 \lambda_i^2 = 0$, $\tilde{g}_i = \mathcal{O}(t_{d,i}^2)$ and $\hat{g}_i = \mathcal{O}(t_{d,i})$.

$T_{0i} = T_{1i} = T_{2i}=0$, $T_{3i} = \frac{L}{8} t_{d,i}^2 + \mathcal{O}(t_{d,i}^3)$, $T_{4i} = -\frac{L}{8} t_{d,i}^2 + \mathcal{O}(t_{d,i}^3)$, and $T_{5i} = - \frac{1}{4} t_{d,i} (L t_{d,i} + \mathcal{O}(t_{d,i}^2))^{1/2} (1 + \mathcal{O}(t_{d,i}))$, so $T_{3i}$ and $T_{4i}$ are $\mathcal{O}(d^{-2/3})$ while $T_{5i} = \mathcal{O}(d^{-1/2})$.

This leads to $\lim_{d\rightarrow \infty} \sum_{i=1}^d T_{3i}^2 = 0$ and $\lim_{d\rightarrow \infty} \sum_{i=1}^d T_{4i}^2 = 0$, so $\sigma^2 = \lim_{d\rightarrow\infty} \sum_{i=1}^d T_{5i}^2 = \frac{l^6 L \tau}{16}$.  Also, $\mu = - \frac{l^6 L \tau}{32}$.

Hence we obtain \eqref{eq:lstepa}, using Theorem \ref{thm:ng1} and Corollary \ref{cor:ng}.  

For the expected jump size use Theorem \ref{thm:ng2} with $\hat{r}_i=0$, $\hat{g}_i \tilde{\lambda}_i^{-2} = Lh + \mathrm{o}(h)$, $\tilde{g}_i^2 \lambda_i^{-2} = \mathrm{o}(h)$ and $\gamma_{d,i}$ bounded uniformly.
\end{proof}

\subsection{Proof of Theorem \protect\ref{thm uhmc conv}}

\begin{proof}
Define a spectral decomposition $B = Q \Lambda Q^T$ where $Q$ is an orthogonal matrix and $\Lambda = \operatorname{diag}(\lambda_1^2,\dotsc,\lambda_d^2)$ is a diagonal matrix of eigenvalues of $B$.  Define
$$
	\tilde{Q} = 	\left[ \twobytwo{Q}{0}{0}{Q} \right] 
	\quad \mbox{and} \quad
	\tilde{K} = \left[ \twobytwo{I - \tsfrac{h^2}{2}\Lambda}{hI}{-h\Lambda + \tsfrac{h^3}{4} \Lambda^2}{I - \tsfrac{h^2}{2}\Lambda} \right].
$$
Then $\mathcal{K} = \tilde{Q} \tilde{K} \tilde{Q}^T$, $\mathcal{K}^L = \tilde{Q} \tilde{K}^L \tilde{Q}^T$ and $(\mathcal{K}^L)_{11} = Q (\tilde{K}^L)_{11} Q^T$.  Thus $(K^L)_{11}$ and $(\tilde{K}^L)_{11}$ have the same eigenvalues.

Notice that $\tilde{K}$ is a $2\times2$ block matrix where each $d\times d$ block is diagonal.  Therefore, $\tilde{K}^L$ is also a $2\times2$ block matrix with diagonal blocks.  In particular, $(\tilde{K}^L)_{11}$ is a diagonal matrix, with eigenvalues on the diagonal.  Moreover, 
$$
	[(\tilde{K}^L)_{11}]_{ii} = (k_i^L)_{11}
$$
where $[(\tilde{K}^L)_{11}]_{ii}$ is the $i^{\mathrm{th}}$ diagonal entry of $(\tilde{K}^L)_{11}$,  $(k_i^L)_{11}$ is the $(1,1)$ entry of the matrix $k_i^L \in \bbR^{2\times2}$, and $k_i \in \bbR^{2\times2}$ is defined by
$$
	k_i = \left[ \twobytwo{(\tilde{K}_{11})_{ii}}{(\tilde{K}_{12})_{ii}}{(\tilde{K}_{21})_{ii}}{(\tilde{K}_{22})_{ii}} \right]
	= \left[ \twobytwo{1-\tsfrac{h^2}{2}\lambda_i^2}{h}{-h\lambda_i^2 + \tsfrac{h^3}{4}\lambda_i^4}{1-\tsfrac{h^2}{2}\lambda_i^2} \right].
$$  
This matrix can be factorized
$$
	k_i = \left[ \twobytwo{1}{0}{0}{a} \right] \left[ \twobytwo{\cos(\theta_i)}{-\sin(\theta_i)}{\sin(\theta_i)}{\cos(\theta_i)} \right]
	\left[ \twobytwo{1}{0}{0}{a^{-1}} \right]
$$
where $a = \lambda_i \sqrt{1 - \tsfrac{h^2}{4}\lambda_i^2}$ and $\theta_i = -\cos( 1-\tsfrac{h^2}{2}\lambda_i^2)$.  Therefore,
$$
	k_i^L = \left[ \twobytwo{1}{0}{0}{a} \right] \left[ \twobytwo{\cos(L \theta_i)}{-\sin(L \theta_i)}{\sin(L \theta_i)}{\cos(L \theta_i)} \right]
	\left[ \twobytwo{1}{0}{0}{a^{-1}} \right],
$$
and $[(\tilde{K}^L)_{11}]_{ii} = (k_i^L)_{11} = \cos(L\theta_i)$.
\end{proof}

\subsection{Proof of Theorem \protect\ref{thm hmc conv}}

\begin{proof}
By Theorem \ref{thm iso hmc}, instead of analyzing HMC with proposal \eqref{hmc prop} and target $\pi_d$, we can analyze MH algorithm \eqref{eq hmchat1}.  By Theorem \ref{thm uhmc conv}, the eigenvalues of $G$ for the proposal of \eqref{eq hmchat1} are $G_i = \cos(L \theta_i)$ where $\theta_i = -\cos^{-1}(1-\frac{h^2}{2} \lambda_i^2)$, so 
$$
	\tilde{g}_i = 1-\cos(L \theta_i) \qquad \mbox{and} \qquad \hat{g}_i = \sin^2(L \theta_i).
$$
Similarly, since $\mathcal{A} = (\mathcal{K}^L)_{12}^{-2} (I - (\mathcal{K}^L)_{11}^2)$, using the same notation as in the proof of Theorem \ref{thm uhmc conv}, 
$$
	\tilde{\lambda}_i^2 = (k_i^L)_{12}^{-2} (1 - (k_i^L)_{11}^2) = \frac{ 1- \cos^2(L \theta_i) }{(a_i^{-1} \sin^2(L \theta_i))^{2}} = a_i^2 = \lambda_i^2 \left( 1 - \tsfrac{1}{4} s_i \right)
$$
where $a_i = \lambda_i \sqrt{1 - \tsfrac{h^2}{4}\lambda_i^2}$ and $s_i = h^2 \lambda_i^2 = l^2 d^{-1/2} (\frac{\lambda_i}{d^\kappa})^2 = \mathcal{O}(d^{-1/2})$.  Hence,
\begin{align*}
	r_i &= \tsfrac{1}{4} s_i, & \tilde{r}_i &= \tsfrac{1}{1-\tsfrac{1}{4}s_i}, & \hat{r}_i &= 0.
\end{align*}
Note that $\mathcal{A}^{-1}\beta = A^{-1}b$ for the untransformed HMC algorithm, and $\hat{r}_i=0$ is preserved by the coordinate transformation.  Hence, $T_{0i} = T_{1i} = T_{2i} = 0$, and
\begin{align*}
	T_{3i} &= \tsfrac{1}{8} s_i \sin^2(L\theta_i), &
	T_{4i} &= - \frac{\tsfrac{1}{8} s_i \sin^2(L\theta_i)}{1-\tsfrac{1}{4}s_i}, &
	T_{5i} &= - \frac{ \tsfrac{1}{8} s_i \sin(2L\theta_i)}{\sqrt{1-\tsfrac{1}{4}s_i}}.
\end{align*}
Using the trigonmetric expansion $\cos^{-1}(1-z) = \sqrt{2z} + \mathcal{O}(z^{3/2})$, and defining $T'$ such that $L = \tsfrac{T'}{h}$ we find 
$$
	L \theta_i
			= L ( - \sqrt{s_i} + \mathcal{O}(s_i^{3/2}) )
			= - L s_i^{1/2} (  1 + \mathcal{O}(s_i) )
			= - T' \lambda_i ( 1 + \mathcal{O}(d^{-1/2}) )
$$
hence, there exists a function $T''(d)$ such that $L \theta_i = - T'' \lambda_i$ and
$T''(d) = T' + \mathcal{O}(d^{-1/2})$.

To apply Theorems \ref{thm accept} and \ref{thm jumpsize} we need to check \eqref{eq Tcond}.  For some $h > 0$, $c l^2 (\tsfrac{i}{d})^{2\kappa} \leq d^{1/2} s_i = l^2 (\tsfrac{\lambda_i}{d^\kappa})^2 \leq C l^2 (\tsfrac{i}{d})^{2\kappa}$ and so we find
\begin{multline*}
	\lim_{d \rightarrow \infty} \frac{ \sum_{i=1}^d |T_{3i}|^{2+\delta} }{ \left( \sum_{i=1}^d |T_{3i}|^2 \right)^{1+\delta/2} }
	= \lim_{d \rightarrow \infty} \frac{ \sum_{i=1}^d |s_i \sin^2(L\theta_i) |^{2+\delta} }{ \left( \sum_{i=1}^d |s_i \sin^2(L\theta_i)|^2 \right)^{1+\delta/2} } \\
	= \lim_{d \rightarrow \infty} d^{-\delta/2} \left( \frac{ \left( \sum_{i=1}^d | d^{1/2} s_i \sin^2(T'' \lambda_i) |^{2+\delta} \right)^{1/(2+\delta)} }{ \left( \sum_{i=1}^d |d^{1/2} s_i \sin^2(T'' \lambda_i)|^2 \right)^{1/2} } \right)^{2+\delta}
	= 0
\end{multline*}
since $\nrm{v}_{2+\delta} \leq \nrm{v}_2$ for all $v \in \bbR^d$.
Similar arguments verify \eqref{eq Tcond} for $T_{4i}$ and $T_{5i}$.  With
\begin{align*}
	\mu 
	&= \lim_{d \rightarrow \infty} \sum_{i=1}^d T_{3i} + T_{4i} 
	= -\frac{l^4 \tau}{32}
\end{align*}
and 
\begin{align*}
	\sigma^2 &= \lim_{d \rightarrow \infty} \sum_{i=1}^d 2 T_{3i}^2 + 2 T_{4i}^2 + T_{5i}^2 
	= \frac{l^4 \tau}{16},
\end{align*}
we obtain $\frac{\mu}{\sigma} = -\sigma - \frac{\mu}{\sigma} = - \frac{l^2 \sqrt{\tau}}{8}$ and $\mu+\sigma^2/2 = 0$, so from Theorem \ref{thm accept} we obtain \eqref{eq hmcexpect}.

For \eqref{eq hmcjump}, we apply Lemma \ref{lem jump} since $\mu_{d,i} = \mathcal{O}(d^{-1})$ and $\sigma_{d,i}^2 = \mathcal{O}(d^{-1})$, using
$$
	\left( \frac{1+G_i}{\tilde{\lambda}_i^2} + \frac{1-G_i}{\lambda_i^2} \right) = \frac{2}{\lambda_i^2} + \mathcal{O}(h^2), \quad \mbox{as } d \rightarrow \infty.
$$
\end{proof}

\bibliographystyle{imsart-number}
\bibliography{paper8bib} 

\def\cprime{$'$}
\begin{thebibliography}{35}

\bibitem{ABPS2014}
\begin{barticle}[author]
\bauthor{\bsnm{Agapiou},~\bfnm{Sergios}\binits{S.}},
  \bauthor{\bsnm{Bardsley},~\bfnm{Johnathan~M.}\binits{J.~M.}},
  \bauthor{\bsnm{Papaspiliopoulos},~\bfnm{Omiros}\binits{O.}} \AND
  \bauthor{\bsnm{Stuart},~\bfnm{Andrew~M.}\binits{A.~M.}}
(\byear{2014}).
\btitle{Analysis of the {G}ibbs Sampler for Hierarchical Inverse Problems}.
\bjournal{SIAM/ASA Journal on Uncertainty Quantification}
\bvolume{2}
\bpages{511-544}.
\end{barticle}
\endbibitem

\bibitem{A1989}
\begin{bbook}[author]
\bauthor{\bsnm{Arnol{\cprime}d},~\bfnm{V.~I.}\binits{V.~I.}}
(\byear{1989}).
\btitle{Mathematical methods of classical mechanics},
\bedition{second} ed.
\bseries{Graduate Texts in Mathematics}
\bvolume{60}.
\bpublisher{Springer-Verlag, New York}
\bnote{Translated from the Russian by K. Vogtmann and A. Weinstein}.
\end{bbook}
\endbibitem

\bibitem{B2007}
\begin{barticle}[author]
\bauthor{\bsnm{B\'edard},~\bfnm{Myl\`ene}\binits{M.}}
(\byear{2007}).
\btitle{Weak convergence of Metropolis algorithms for non-i.i.d. target
  distributions}.
\bjournal{The Annals of Applied Probability}
\bvolume{17}
\bpages{1222-1244}.
\end{barticle}
\endbibitem

\bibitem{B2008}
\begin{barticle}[author]
\bauthor{\bsnm{B\'edard},~\bfnm{Myl\`ene}\binits{M.}}
(\byear{2008}).
\btitle{Optimal acceptance rates for Metropolis algorithms: Moving beyond
  0.234}.
\bjournal{Stochastic Processes and their Applications}
\bvolume{118}
\bpages{2198 - 2222}.
\bdoi{http://dx.doi.org/10.1016/j.spa.2007.12.005}
\end{barticle}
\endbibitem

\bibitem{BR2008}
\begin{barticle}[author]
\bauthor{\bsnm{B{\'e}dard},~\bfnm{Myl{\`e}ne}\binits{M.}} \AND
  \bauthor{\bsnm{Rosenthal},~\bfnm{Jeffrey~S.}\binits{J.~S.}}
(\byear{2008}).
\btitle{Optimal scaling of {M}etropolis algorithms: heading toward general
  target distributions}.
\bjournal{Canad. J. Statist.}
\bvolume{36}
\bpages{483--503}.
\bdoi{10.1002/cjs.5550360401}
\end{barticle}
\endbibitem

\bibitem{BPRSS2010}
\begin{barticle}[author]
\bauthor{\bsnm{Beskos},~\bfnm{A.}\binits{A.}},
  \bauthor{\bsnm{Pillai},~\bfnm{N.}\binits{N.}},
  \bauthor{\bsnm{Roberts},~\bfnm{G.}\binits{G.}},
  \bauthor{\bsnm{Sanz-Serna},~\bfnm{J.~M.}\binits{J.~M.}} \AND
  \bauthor{\bsnm{Stuart},~\bfnm{A.}\binits{A.}}
(\byear{2013}).
\btitle{Optimal tuning of the hybrid {M}onte {C}arlo algorithm}.
\bjournal{Bernoulli}
\bvolume{19}
\bpages{1501--1534}.
\end{barticle}
\endbibitem

\bibitem{BPSSS2011}
\begin{barticle}[author]
\bauthor{\bsnm{Beskos},~\bfnm{A.}\binits{A.}},
  \bauthor{\bsnm{Pinski},~\bfnm{F.~J.}\binits{F.~J.}},
  \bauthor{\bsnm{Sanz-Serna},~\bfnm{J.~M.}\binits{J.~M.}} \AND
  \bauthor{\bsnm{Stuart},~\bfnm{A.~M.}\binits{A.~M.}}
(\byear{2011}).
\btitle{Hybrid {M}onte {C}arlo on {H}ilbert spaces}.
\bjournal{Stochastic Process. Appl.}
\bvolume{121}
\bpages{2201--2230}.
\end{barticle}
\endbibitem

\bibitem{BRS2009}
\begin{barticle}[author]
\bauthor{\bsnm{Beskos},~\bfnm{Alexandros}\binits{A.}},
  \bauthor{\bsnm{Roberts},~\bfnm{Gareth}\binits{G.}} \AND
  \bauthor{\bsnm{Stuart},~\bfnm{Andrew}\binits{A.}}
(\byear{2009}).
\btitle{Optimal scalings for local {M}etropolis-{H}astings chains on nonproduct
  targets in high dimensions}.
\bjournal{Ann. Appl. Probab.}
\bvolume{19}
\bpages{863--898}.
\end{barticle}
\endbibitem

\bibitem{BRSV2008}
\begin{barticle}[author]
\bauthor{\bsnm{Beskos},~\bfnm{A.}\binits{A.}},
  \bauthor{\bsnm{Roberts},~\bfnm{G.}\binits{G.}},
  \bauthor{\bsnm{Stuart},~\bfnm{A.}\binits{A.}} \AND
  \bauthor{\bsnm{Voss},~\bfnm{J.}\binits{J.}}
(\byear{2008}).
\btitle{M{CMC} methods for diffusion bridges}.
\bjournal{Stoch. Dyn.}
\bvolume{8}
\bpages{319--350}.
\end{barticle}
\endbibitem

\bibitem{BS2009b}
\begin{bincollection}[author]
\bauthor{\bsnm{Beskos},~\bfnm{Alexandros}\binits{A.}} \AND
  \bauthor{\bsnm{Stuart},~\bfnm{Andrew}\binits{A.}}
(\byear{2009}).
\btitle{M{CMC} methods for sampling function space}.
In \bbooktitle{I{CIAM} 07---6th {I}nternational {C}ongress on {I}ndustrial and
  {A}pplied {M}athematics}
\bpages{337--364}.
\bpublisher{Eur. Math. Soc., Z\"urich}.
\bdoi{10.4171/056-1/16}
\end{bincollection}
\endbibitem

\bibitem{BS2009a}
\begin{bincollection}[author]
\bauthor{\bsnm{Beskos},~\bfnm{Alexandros}\binits{A.}} \AND
  \bauthor{\bsnm{Stuart},~\bfnm{Andrew}\binits{A.}}
(\byear{2009}).
\btitle{Computational Complexity of Metropolis-Hastings Methods in High
  Dimensions}.
In \bbooktitle{Monte Carlo and Quasi-Monte Carlo Methods 2008}
(\beditor{\bfnm{Pierre}\binits{P.}~\bsnm{L'~Ecuyer}} \AND
  \beditor{\bfnm{Art~B.}\binits{A.~B.}~\bsnm{Owen}}, eds.)
\bpages{61-71}.
\bpublisher{Springer Berlin Heidelberg}.
\bdoi{10.1007/978-3-642-04107-5_4}
\end{bincollection}
\endbibitem

\bibitem{billingsley1995}
\begin{bbook}[author]
\bauthor{\bsnm{Billingsley},~\bfnm{Patrick}\binits{P.}}
(\byear{1995}).
\btitle{Probability and measure},
\bedition{third} ed.
\bseries{Wiley Series in Probability and Mathematical Statistics}.
\bpublisher{John Wiley \& Sons, Inc., New York}
\bnote{A Wiley-Interscience Publication}.
\end{bbook}
\endbibitem

\bibitem{BCSS2014}
\begin{barticle}[author]
\bauthor{\bsnm{Blanes},~\bfnm{Sergio}\binits{S.}},
  \bauthor{\bsnm{Casas},~\bfnm{Fernando}\binits{F.}} \AND
  \bauthor{\bsnm{Sanz-Serna},~\bfnm{J.~M.}\binits{J.~M.}}
(\byear{2014}).
\btitle{Numerical integrators for the hybrid {M}onte {C}arlo method}.
\bjournal{SIAM J. Sci. Comput.}
\bvolume{36}
\bpages{A1556--A1580}.
\bdoi{10.1137/130932740}
\end{barticle}
\endbibitem

\bibitem{BG2014}
\begin{bmisc}[author]
\bauthor{\bsnm{Bui-Thanh},~\bfnm{T.}\binits{T.}} \AND
  \bauthor{\bsnm{Ghattas},~\bfnm{O.}\binits{O.}}
(\byear{2012}).
\btitle{A scaled stochastic {N}ewton algorithm for {M}arkov chain {M}onte
  {C}arlo simulations}.
\bnote{http://users.ices.utexas.edu/~tanbui/PublishedPapers/SNanalysis.pdf}.
\end{bmisc}
\endbibitem

\bibitem{CRSW2013}
\begin{barticle}[author]
\bauthor{\bsnm{Cotter},~\bfnm{S.~L.}\binits{S.~L.}},
  \bauthor{\bsnm{Roberts},~\bfnm{G.~O.}\binits{G.~O.}},
  \bauthor{\bsnm{Stuart},~\bfnm{A.~M.}\binits{A.~M.}} \AND
  \bauthor{\bsnm{White},~\bfnm{D.}\binits{D.}}
(\byear{2013}).
\btitle{M{CMC} methods for functions: modifying old algorithms to make them
  faster}.
\bjournal{Statist. Sci.}
\bvolume{28}
\bpages{424--446}.
\end{barticle}
\endbibitem

\bibitem{DiaconisFreedman}
\begin{barticle}[author]
\bauthor{\bsnm{Diaconis},~\bfnm{P.}\binits{P.}} \AND
  \bauthor{\bsnm{Freedman},~\bfnm{D.}\binits{D.}}
(\byear{1999}).
\btitle{Iterated Random Functions}.
\bjournal{SIAM Review}
\bvolume{41}
\bpages{45-76}.
\end{barticle}
\endbibitem

\bibitem{DKPR1987}
\begin{barticle}[author]
\bauthor{\bsnm{Duane},~\bfnm{Simon}\binits{S.}},
  \bauthor{\bsnm{Kennedy},~\bfnm{Anthony~D}\binits{A.~D.}},
  \bauthor{\bsnm{Pendleton},~\bfnm{Brian~J}\binits{B.~J.}} \AND
  \bauthor{\bsnm{Roweth},~\bfnm{Duncan}\binits{D.}}
(\byear{1987}).
\btitle{Hybrid monte carlo}.
\bjournal{Physics letters B}
\bvolume{195}
\bpages{216--222}.
\end{barticle}
\endbibitem

\bibitem{Duflo}
\begin{bbook}[author]
\bauthor{\bsnm{Duflo},~\bfnm{Marie}\binits{M.}}
(\byear{1997}).
\btitle{Random Iterative Models}.
\bpublisher{Springer-Verlag}, \baddress{Berlin}.
\end{bbook}
\endbibitem

\bibitem{F2013}
\begin{bincollection}[author]
\bauthor{\bsnm{Fox},~\bfnm{Colin}\binits{C.}}
(\byear{2013}).
\btitle{Polynomial Accelerated MCMC and Other Sampling Algorithms Inspired by
  Computational Optimization}.
In \bbooktitle{Monte Carlo and Quasi-Monte Carlo Methods 2012},
(\beditor{\bfnm{Josef}\binits{J.}~\bsnm{Dick}},
  \beditor{\bfnm{Frances~Y.}\binits{F.~Y.}~\bsnm{Kuo}},
  \beditor{\bfnm{Gareth~W.}\binits{G.~W.}~\bsnm{Peters}} \AND
  \beditor{\bfnm{Ian~H.}\binits{I.~H.}~\bsnm{Sloan}}, eds.).
\bseries{Springer Proceedings in Mathematics \& Statistics}
\bvolume{65}
\bpages{349-366}.
\bpublisher{Springer Berlin Heidelberg}.
\bdoi{10.1007/978-3-642-41095-6_15}
\end{bincollection}
\endbibitem

\bibitem{FP2014}
\begin{barticle}[author]
\bauthor{\bsnm{Fox},~\bfnm{Colin}\binits{C.}} \AND
  \bauthor{\bsnm{Parker},~\bfnm{Albert}\binits{A.}}
(\byear{2014}).
\btitle{Convergence in variance of {C}hebyshev accelerated {G}ibbs samplers}.
\bjournal{SIAM J. Sci. Comput.}
\bvolume{36}
\bpages{A124--A147}.
\bdoi{10.1137/120900940}
\end{barticle}
\endbibitem

\bibitem{FP2016}
\begin{bunpublished}[author]
\bauthor{\bsnm{Fox},~\bfnm{Colin}\binits{C.}} \AND
  \bauthor{\bsnm{Parker},~\bfnm{Albert}\binits{A.}}
(\byear{2016}).
\btitle{Accelerated {G}ibbs sampling of normal distributions using matrix
  splittings and polynomials}.
\bnote{{\em Bernoulli}; in the press}.
\end{bunpublished}
\endbibitem

\bibitem{HSV2011}
\begin{bincollection}[author]
\bauthor{\bsnm{Hairer},~\bfnm{M.}\binits{M.}},
  \bauthor{\bsnm{Stuart},~\bfnm{A.}\binits{A.}} \AND
  \bauthor{\bsnm{Voss},~\bfnm{J.}\binits{J.}}
(\byear{2011}).
\btitle{Signal processing problems on function space: {B}ayesian formulation,
  stochastic {PDE}s and effective {MCMC} methods}.
In \bbooktitle{The {O}xford handbook of nonlinear filtering}
\bpages{833--873}.
\bpublisher{Oxford Univ. Press, Oxford}.
\end{bincollection}
\endbibitem

\bibitem{HallHeyde}
\begin{bbook}[author]
\bauthor{\bsnm{Hall},~\bfnm{P.}\binits{P.}} \AND
  \bauthor{\bsnm{Heyde},~\bfnm{C.~C.}\binits{C.~C.}}
(\byear{1980}).
\btitle{Martingale limit theory and its application}.
\bpublisher{Academic Press, Inc. [Harcourt Brace Jovanovich, Publishers], New
  York-London}
\bnote{Probability and Mathematical Statistics}.
\end{bbook}
\endbibitem

\bibitem{Higham}
\begin{bbook}[author]
\bauthor{\bsnm{Higham},~\bfnm{Nicholas~J.}\binits{N.~J.}}
(\byear{2008}).
\btitle{Functions of Matrices: {Theory} and Computation}.
\bpublisher{Society for Industrial and Applied Mathematics},
  \baddress{Philadelphia, PA, USA}.
\end{bbook}
\endbibitem

\bibitem{Liu2001book}
\begin{bbook}[author]
\bauthor{\bsnm{Liu},~\bfnm{Jun~S.}\binits{J.~S.}}
(\byear{2001}).
\btitle{Monte {C}arlo strategies in scientific computing}.
\bseries{Springer Series in Statistics}.
\bpublisher{Springer-Verlag, New York}.
\end{bbook}
\endbibitem

\bibitem{MWBG12}
\begin{barticle}[author]
\bauthor{\bsnm{Martin},~\bfnm{James}\binits{J.}},
  \bauthor{\bsnm{Wilcox},~\bfnm{Lucas~C.}\binits{L.~C.}},
  \bauthor{\bsnm{Burstedde},~\bfnm{Carsten}\binits{C.}} \AND
  \bauthor{\bsnm{Ghattas},~\bfnm{Omar}\binits{O.}}
(\byear{2012}).
\btitle{A Stochastic {N}ewton {MCMC} Method for Large-Scale Statistical Inverse
  Problems with Application to Seismic Inversion.}
\bjournal{SIAM J. Scientific Computing}
\bvolume{34}.
\end{barticle}
\endbibitem

\bibitem{N1993}
\begin{btechreport}[author]
\bauthor{\bsnm{Neal},~\bfnm{Radford~M}\binits{R.~M.}}
(\byear{1993}).
\btitle{Probabilistic inference using Markov chain Monte Carlo methods}
\btype{Technical Report} No. \bnumber{CRG-TR-93-1},
\bpublisher{Department of Computer Science, University of Toronto Toronto,
  Ontario, Canada}.
\end{btechreport}
\endbibitem

\bibitem{NFelecreport}
\begin{btechreport}[author]
\bauthor{\bsnm{Norton},~\bfnm{Richard~A.}\binits{R.~A.}} \AND
  \bauthor{\bsnm{Fox},~\bfnm{Colin}\binits{C.}}
(\byear{2016}).
\btitle{Metropolis-Hastings algorithms with autoregressive proposals, and a few
  examples}
\btype{Electronics Technical Reports} No. \bnumber{2016-1},
\bpublisher{University of Otago}.
\bnote{ISSN 1172-496X (Print) 1172-4234 (Online)
  http://www.physics.otago.ac.nz/reports/electronics/ETR2016-1.pdf}.
\end{btechreport}
\endbibitem

\bibitem{PST2012}
\begin{barticle}[author]
\bauthor{\bsnm{Pillai},~\bfnm{Natesh~S.}\binits{N.~S.}},
  \bauthor{\bsnm{Stuart},~\bfnm{Andrew~M.}\binits{A.~M.}} \AND
  \bauthor{\bsnm{Thi{\'e}ry},~\bfnm{Alexandre~H.}\binits{A.~H.}}
(\byear{2012}).
\btitle{Optimal scaling and diffusion limits for the {L}angevin algorithm in
  high dimensions}.
\bjournal{Ann. Appl. Probab.}
\bvolume{22}
\bpages{2320--2356}.
\end{barticle}
\endbibitem

\bibitem{RGG1997}
\begin{barticle}[author]
\bauthor{\bsnm{Roberts},~\bfnm{G.~O.}\binits{G.~O.}},
  \bauthor{\bsnm{Gelman},~\bfnm{A.}\binits{A.}} \AND
  \bauthor{\bsnm{Gilks},~\bfnm{W.~R.}\binits{W.~R.}}
(\byear{1997}).
\btitle{Weak convergence and optimal scaling of random walk {M}etropolis
  algorithms}.
\bjournal{Ann. Appl. Probab.}
\bvolume{7}
\bpages{110--120}.
\end{barticle}
\endbibitem

\bibitem{RR1998}
\begin{barticle}[author]
\bauthor{\bsnm{Roberts},~\bfnm{Gareth~O.}\binits{G.~O.}} \AND
  \bauthor{\bsnm{Rosenthal},~\bfnm{Jeffrey~S.}\binits{J.~S.}}
(\byear{1998}).
\btitle{Optimal scaling of discrete approximations to {L}angevin diffusions}.
\bjournal{J. R. Stat. Soc. Ser. B Stat. Methodol.}
\bvolume{60}
\bpages{255--268}.
\end{barticle}
\endbibitem

\bibitem{RR2001}
\begin{barticle}[author]
\bauthor{\bsnm{Roberts},~\bfnm{Gareth~O.}\binits{G.~O.}} \AND
  \bauthor{\bsnm{Rosenthal},~\bfnm{Jeffrey~S.}\binits{J.~S.}}
(\byear{2001}).
\btitle{Optimal scaling for various {M}etropolis-{H}astings algorithms}.
\bjournal{Statist. Sci.}
\bvolume{16}
\bpages{351--367}.
\end{barticle}
\endbibitem

\bibitem{RT1996}
\begin{barticle}[author]
\bauthor{\bsnm{Roberts},~\bfnm{Gareth~O.}\binits{G.~O.}} \AND
  \bauthor{\bsnm{Tweedie},~\bfnm{Richard~L.}\binits{R.~L.}}
(\byear{1996}).
\btitle{Exponential convergence of {L}angevin distributions and their discrete
  approximations}.
\bjournal{Bernoulli}
\bvolume{2}
\bpages{341--363}.
\end{barticle}
\endbibitem

\bibitem{sensinger1993}
\begin{bbook}[author]
\bauthor{\bsnm{Sen},~\bfnm{Pranab~K.}\binits{P.~K.}} \AND
  \bauthor{\bsnm{Singer},~\bfnm{Julio~M.}\binits{J.~M.}}
(\byear{1993}).
\btitle{Large sample methods in statistics: an introduction with applications}.
\bpublisher{Chapman \& Hall, New York}.
\end{bbook}
\endbibitem

\bibitem{S2010}
\begin{barticle}[author]
\bauthor{\bsnm{Stuart},~\bfnm{A.~M.}\binits{A.~M.}}
(\byear{2010}).
\btitle{Inverse problems: a {B}ayesian perspective}.
\bjournal{Acta Numer.}
\bvolume{19}
\bpages{451--559}.
\end{barticle}
\endbibitem

\end{thebibliography}

\end{document}